\documentclass[submission,copyright,creativecommons]{eptcs}

\usepackage[final]{microtype}
\usepackage{graphicx}

\usepackage{paralist}
\usepackage{tikz}
\usetikzlibrary{positioning,arrows}
\usepackage{amsmath,amsfonts,amssymb,amsthm}
\usepackage{hhline}
\usepackage{multirow}

\usepackage{multicol}
\usepackage{mathtools}

\usepackage{float}
\usepackage{ltlfonts} %
\usepackage{xspace}
\usepackage{latexsym}

\newcommand{\EXPTIME}{{\sc Exptime}\xspace}

\newcommand{\PSPACE}{{\sc Pspace}\xspace}
\newcommand{\EXPSPACE}{{\sc Expspace}\xspace}

\newcommand{\Nat}{{\mathbb{N}}}
\newcommand{\RealP}{{\mathbb{R}_+}}

\DeclareMathAlphabet{\mathpzc}{OT1}{pzc}{m}{it}

\newcommand{\Prop}{\mathcal{P}}

\newcommand{\tpl}[1]{(#1)}
\newcommand{\Lang}{{\mathcal{L}}}
\newcommand{\TLang}{{\mathcal{L}_T}}
\newcommand{\TLangInf}{{\mathcal{L}^{\omega}_T}}

\newtheorem{definition}{Definition}
\newtheorem{theorem}{Theorem}
\newtheorem{proposition}{Proposition}
\newtheorem{lemma}{Lemma}



\newcommand{\MSO}{\text{\sffamily MSO}}
\newcommand{\TA}{\text{\sffamily TA}}
\newcommand{\PDA}{\text{\sffamily PDA}}

\newcommand{\VPA}{\text{\sffamily VPA}}
\newcommand{\VPTA}{\text{\sffamily VPTA}}

\newcommand{\VPL}{\text{\sffamily VPL}}

\newcommand{\ECA}{\text{\sffamily ECA}}
\newcommand{\CARET}{\text{\sffamily CaRet}}
\newcommand{\ECNA}{\text{\sffamily ECNA}}

\newcommand{\ECVPA}{\text{\sffamily ECVPA}}
\newcommand{\LTL}{\text{\sffamily LTL}}
\newcommand{\MTL}{\text{\sffamily MTL}}
\newcommand{\NMTL}{\text{\sffamily NMTL}}
\newcommand{\MITL}{\text{\sffamily MITL}}
\newcommand{\MITLS}{\text{{\sffamily MITL}$_{(0,\infty)}$}}
\newcommand{\NMITLS}{\text{{\sffamily NMITL}$_{(0,\infty)}$}}
\newcommand{\ECTL}{\text{\sffamily EC\_TL}}
\newcommand{\ECNTL}{\text{\sffamily EC\_NTL}}

\newcommand{\Cl}{\textsf{Cl}}

\newcommand{\Next}{\LTLcircle}
\newcommand{\Prev}{\LTLcircleminus}
\newcommand{\Always}{\LTLsquare}
\newcommand{\Eventually}{\LTLdiamond}
\newcommand{\StrictAlways}{\LTLsquarehat}
\newcommand{\StrictEventually}{\LTLdiamondhat}
\newcommand{\StrictPastAlways}{\LTLsquareminushat}
\newcommand{\StrictPastEventually}{\LTLdiamondminushat}
\newcommand{\NextClock}{\rhd}
\newcommand{\PrevClock}{\lhd}
\newcommand{\Until}{\textsf{U}}
\newcommand{\Since}{\textsf{S}}
\newcommand{\StrictUntil}{\widehat{\textsf{U}}}
\newcommand{\StrictSince}{\widehat{\textsf{S}}}

\newcommand{\sval}{{\mathit{sval}}}
\newcommand{\caller}{\mathsf{c}}
\newcommand{\Global}{\mathsf{g}}
\newcommand{\Scall}{\Sigma_{\mathit{call}}}
\newcommand{\call}{{\mathit{call}}}
\newcommand{\ret}{{\mathit{ret}}}
\newcommand{\intA}{{\mathit{int}}}
\newcommand{\Caller}{{\mathit{Caller}}}
\newcommand{\Sret}{\Sigma_{\mathit{ret}}}
\newcommand{\Sint}{\Sigma_{\mathit{int}}}
\newcommand{\Au}{\ensuremath{\mathcal{A}}}

\newcommand{\abs}{\mathsf{a}}
\newcommand{\SUCC}{\mathsf{succ}}
\newcommand{\NULL}{\mathsf{\vdash}}
\newcommand{\Pos}{{\mathit{Pos}}}
\newcommand{\val}{{\mathit{val}}}
\newcommand{\Const}{{\mathit{Const}}}

\newcommand{\dir}{\textit{dir}}
\newcommand{\Proj}{\textit{Proj}}
\newcommand{\NextPrev}{\textit{Next}}
\newcommand{\AbsNextPrev}{\textit{AbsNext}}
\newcommand{\Res}{\textit{Res}}

\newcommand{\bad}{\textit{bad}}

\newcommand{\MAP}{\textit{MAP}}

\newcommand{\INTS}{\ensuremath{\mathcal{I}_{(0,\infty)}}}

\newcommand{\Lab}{{\textit{Lab}}}
\newcommand{\Succ}{{\textit{succ}}}

\newcommand{\halt}{{\textit{halt}}}
\newcommand{\init}{{\textit{init}}}

\newcommand{\Inst}{\mathsf{Inst}}

\newcommand{\dec}{{\textit{dec}}}
\newcommand{\zero}{{\textit{zero}}}
\newcommand{\good}{{\textit{good}}}
\newcommand{\Fragm}{\mathcal{F}}

\newcommand{\DefORmini}{\ensuremath{\;\big|\;}}
\newcommand{\true}{\ensuremath{\top}}

\newcommand{\details}[1]{{}}%

\title{Timed context-free temporal logics (extended version)\footnote{This work was partially supported
by someone.}}
\author{Laura Bozzelli \qquad Aniello Murano \qquad Adriano Peron
\institute{University of Napoli ``Federico II'', Napoli, Italy}
}


\begin{document}

%
\maketitle              

\begin{abstract}
The paper is focused on temporal logics for the description of the behaviour of
real-time pushdown reactive systems. The paper is motivated to bridge tractable logics specialized for expressing separately dense-time real-time properties and context-free properties
by ensuring decidability and tractability in the combined setting. 
To this end we introduce two real-time linear temporal logics for specifying quantitative timing context-free requirements in a pointwise semantics setting: \emph{Event-Clock Nested Temporal Logic} (\ECNTL) and \emph{Nested Metric Temporal Logic} (\NMTL). The logic \ECNTL\ is an extension of both the 
logic \CARET\ (a context-free
extension of standard \LTL) and \emph{Event-Clock Temporal Logic}
(a tractable real-time logical framework related to the class of Event-Clock automata).  We prove that 
satisfiability  of \ECNTL\ and visibly model-checking of Visibly Pushdown Timed Automata (\VPTA) against \ECNTL\ are decidable and \EXPTIME-complete. The other proposed logic \NMTL\ is a context-free extension of standard  Metric Temporal Logic (\MTL). It is well known that satisfiability of future
$\MTL$ is undecidable when interpreted over infinite timed words but decidable over finite timed words. On the other hand, 
we show that by augmenting future \MTL\ with future context-free temporal operators, the satisfiability problem turns out to be undecidable also for finite timed words. On the positive side, we devise a meaningful and decidable 
 fragment of the logic \NMTL\
which is expressively equivalent  to \ECNTL\ and for which satisfiability and visibly model-checking of \VPTA\ are  \EXPTIME-complete.
\end{abstract}

\section{Introduction}

\emph{Model checking} is a well-established formal-method technique to automatically check for global correctness of reactive systems~\cite{Baier}.
In this setting, temporal logics provide a fundamental framework for the description of the
dynamic behavior of reactive systems.

In the last two decades,
model checking of pushdown automata (\PDA) has received a lot of attention~\cite{Wal96,CMM+03,AlurMadhu04,BMP10}. 
\PDA\ represent an infinite-state formalism suitable to model the control flow of typical sequential programs with
nested and recursive procedure calls. Although  the general problem of checking context-free properties of \PDA\ is undecidable, 
algorithmic solutions have been proposed for interesting subclasses of context-free requirements~\cite{AlurEM04,AlurMadhu04,CMM+03}. 
A relevant example is that of the linear temporal logic \CARET\ \cite{AlurEM04}, a context-free
extension of standard \LTL.  \CARET\ formulas are interpreted on  words over a  \emph{pushdown alphabet}  which is partitioned into three
disjoint sets of calls, returns, and internal symbols. A call
 denotes invocation of a procedure (i.e.
 a push stack-operation)
and the \emph{matching} return (if any) along a given word denotes
the exit from this procedure (corresponding to a pop
stack-operation). \CARET\ allows to specify \LTL\ requirements over two kinds of \emph{non-regular}
patterns on input words: \emph{abstract paths} and \emph{caller paths}. An abstract path captures the local computation
within a procedure with the removal of subcomputations corresponding
 to nested procedure calls, while a caller path represents the call-stack content  at a given position of the input.
An automata theoretic generalization of \CARET\ is the class of (nondeterministic)
 \emph{Visibly Pushdown Automata} (\VPA)~\cite{AlurMadhu04}, a subclass 
of \PDA\ where the input symbols over a  pushdown alphabet  control the admissible operations
on the stack. \VPA\ push onto the stack only when a call is
 read, pops the stack only at returns, and do not use the stack on
 reading internal symbols.
 This restriction makes the class of
resulting languages  (\emph{visibly pushdown languages} or \VPL)
very similar in tractability and robustness to the less expressive class of regular languages~\cite{AlurMadhu04}. 
 In fact,
\VPL\ are closed under Boolean operations, and language inclusion, which is undecidable for context-free languages, is \EXPTIME-complete for \VPL.\vspace{0.1cm}

\noindent  \textbf{Real-time pushdown model-checking.} Recently, many works~\cite{AbdullaAS12,BenerecettiMP10,BenerecettiP16,BouajjaniER94,ClementeL15,EmmiM06,TrivediW10}  have investigated real-time extensions of \PDA\  by combining \PDA\ with
 \emph{Timed Automata} (\TA)~\cite{AlurD94}, a model widely used
to represent real-time systems. \TA\ are finite automata augmented with a finite set of real-valued clocks, which operate over words where each symbol is paired with a real-valued timestamp (\emph{timed words}).
All the clocks progress at the same speed and can
be reset by transitions (thus, each clock
keeps track of the elapsed time since the last reset). 
 The emptiness problem for \TA\ is decidable and \PSPACE-complete~\cite{AlurD94}.
However, since in \TA, clocks can be reset nondeterministically and independently of each other, the resulting class of timed languages is not closed under complement and, moreover,
language inclusion  is undecidable~\cite{AlurD94}. As a consequence, the
general verification problem (i.e., language inclusion) of formalisms combining unrestricted \TA\ with robust subclasses of \PDA\ such as \VPA , i.e.
\emph{Visibly Pushdown Timed Automata} (\VPTA), is undecidable as well.
In fact,  checking language inclusion for \VPTA\  is undecidable even in the restricted case of specifications using at most one clock~\cite{EmmiM06}.
More robust approaches \cite{TangO09,BhaveDKPT16,BMP18}, although less expressive, are based on formalisms combining \VPA\ and \emph{Event-clock automata} (\ECA)~\cite{AlurFH99} such as the recently introduced class
of \emph{Event-Clock Nested Automata} (\ECNA)~\cite{BMP18}.
\ECA\ \cite{AlurFH99} are a well-known determinizable subclass of \TA\ where the explicit reset of clocks is disallowed. In \ECA, clocks have a predefined association with the input alphabet symbols  and their values refer
to the time distances from previous and next occurrences of input symbols. \ECNA\ \cite{BMP18} combine \ECA\ and \VPA\ by providing an explicit mechanism to relate the use of a stack with that of event clocks.
In particular, \ECNA\ retain the 
closure and decidability properties of \ECA\ and \VPA\ being closed under Boolean operations and having a decidable (specifically, \EXPTIME-complete) language-inclusion problem, and are strictly more expressive
than other formalisms combining \ECA\ and \VPA\ \cite{TangO09,BhaveDKPT16} such as the class  of \emph{Event-Clock Visibly Pushdown Automata} (\ECVPA)~\cite{TangO09}. In~\cite{BhaveDKPT16} a logical characterization of the class
of \ECVPA\ is provided by means of a non-elementarily decidable extension of standard \MSO\ over words.\vspace{0.1cm}

\noindent  \textbf{Our contribution.} In this paper, we introduce two real-time linear temporal logics, called \emph{Event-Clock Nested Temporal Logic} (\ECNTL) and \emph{Nested Metric Temporal Logic} (\NMTL)
for specifying quantitative timing context-free requirements in a pointwise semantics setting (models of formulas are timed words).
The logic \ECNTL\ is an extension of \emph{Event-Clock Temporal Logic} (\ECTL)~\cite{RaskinS99}, the latter being a known decidable and tractable real-time logical framework related to the class of Event-clock automata.
\ECTL\ extends \LTL\ + past with timed temporal modalities which specify time constraints on the distances from the previous or next timestamp where a given subformula
holds. The novel logic \ECNTL\ is an extension of both \ECTL\ and \CARET\ by means of non-regular versions of the timed modalities of \ECTL\ which allow to refer to abstract and caller paths.
We address expressiveness and complexity issues for the logic \ECNTL. In particular, we establish that satisfiability of \ECNTL\ and visibly model-checking of \VPTA\ against \ECNTL\ are decidable and \EXPTIME-complete.
The key step in the proposed decision procedures is a translation
of \ECNTL\ into  \ECNA\ accepting suitable encodings of the models of the given formula.

The second logic we introduce, namely \NMTL, is a  context-free extension of standard  Metric Temporal Logic (\MTL). This extension is obtained by adding
to \MTL\  timed versions of the caller and abstract temporal modalities of \CARET. In the considered pointwise-semantics settings, it is well known that
satisfiability of future \MTL\ is undecidable when  interpreted over infinite timed words~\cite{OuaknineW06}, and decidable~\cite{OuaknineW07} over finite timed words. We show that over finite timed words, the adding of the future abstract timed modalities to future \MTL\   makes the satisfiability problem undecidable. On the other hand, we show that the fragment
\NMITLS\ of \NMTL\ (the \NMTL\ counterpart of the well-known tractable fragment \MITLS\ \cite{AlurFH96} of \MTL) has the same expressiveness
as the logic \ECNTL\ and the related satisfiability and visibly model-checking problems are \EXPTIME-complete. The oerall picture of decidabilty results is given in table~\ref{results}.


\begin{table}[tb]
	\centering
	\caption{Decidability results.}\label{results}
	\resizebox{\linewidth}{0.8\height}{
		\begin{tabular}{cclc}
			\hline
			\rule[-1ex]{0pt}{3.5ex} Logic & Satisfiability &  Visibly model checking\\
			\hline
			
			
		{\color{red}	\ECNTL\ } & 	{\color{red} \EXPTIME-complete} & 	{\color{red} \EXPTIME-complete} \\
			
			{\color{red}	\NMITLS}\ & 	{\color{red} \EXPTIME-complete} & 	{\color{red} \EXPTIME-complete} \\
			
			future \MTL\ fin. & Decidable & \\
			
			future \MTL\ infin. & Undecidable & \\
			
			{\color{red} future \NMTL\ fin.} & 	{\color{red} Undecidable} &\\
			
			\hline
	\end{tabular}}
\end{table}


Some proofs are omitted  in the sections and can be found in the Appendix.

\section{Preliminaries}\label{sec:backgr}

In the following, $\Nat$ denotes the set of natural numbers and $\RealP$ the set of non-negative real numbers.
Let $w$ be a finite or infinite word over some alphabet. By $|w|$ we denote the length of $w$ (we write $|w|=\infty$ if $w$ is infinite). For all  $i,j\in\Nat $, with $i\leq j <|w|$, $w_i$ is
$i$-th letter of $w$, while $w[i,j]$ is the finite subword
 $w_i\cdots w_j$.

A  \emph{timed word} $w$ over a finite alphabet $\Sigma$ is
a  word $w=(a_0,\tau_0) (a_1,\tau_1),\ldots$ over $\Sigma\times \RealP$ 
($\tau_i$ is the time at which $a_i$ occurs) such that the sequence $\tau= \tau_0,\tau_1,\ldots$ of timestamps  satisfies: (1) $\tau_{i-1}\leq \tau_{i}$ for all $0<i<|w|$ (monotonicity), and (2) if $w$ is infinite, then for all $t\in\RealP$, $\tau_i\geq t$ for some $i\geq 0$
(divergence). The timed word $w$ is also denoted by the pair $(\sigma,\tau)$, where $\sigma$ is the untimed word $a_0 a_1\ldots$.
A \emph{timed language} (resp., \emph{$\omega$-timed language}) over $\Sigma$ is a set of finite  (resp., infinite) timed words over $\Sigma$.\vspace{0.2cm}

\noindent \textbf{Pushdown alphabets, abstract paths, and caller paths.}
A \emph{pushdown alphabet} is
a finite alphabet $\Sigma=\Scall\cup\Sret\cup\Sint$ which is partitioned into a set $\Scall$ of \emph{calls}, a set
$\Sret$ of \emph{returns}, and a set $\Sint$ of \emph{internal
  actions}. The pushdown alphabet $\Sigma$  induces a nested hierarchical structure in a given word over $\Sigma$ obtained by associating
to each call the corresponding matching return (if any) in a well-nested manner. Formally, the set  of \emph{well-matched words} is the set of finite words $\sigma_w$ over
$\Sigma$ inductively defined as follows:
\[
\sigma_w:= \varepsilon \DefORmini
a\cdot \sigma_w \DefORmini
c\cdot \sigma_w \cdot r \cdot \sigma_w
\]
where $\varepsilon$ is the empty word, $a\in\Sint$, $c\in \Scall$, and $r\in\Sret$.

Fix a  word $\sigma$ over $\Sigma$. For a call position $i$ of $\sigma$,
if there is $j>i$ such that $j$ is a return position of $\sigma$ and
$\sigma[i+1,j-1]$ is a well-matched word (note that $j$ is
uniquely determined if it exists), we say that $j$ is the
\emph{matching return} of $i$ along $\sigma$.
For a position $i$ of $\sigma$,
the \emph{abstract successor of $i$ along $\sigma$}, denoted
  $\SUCC(\abs,\sigma,i)$, is defined as follows:
  \begin{compactitem}
  \item If $i$ is a call,  then
    $\SUCC(\abs,\sigma,i)$ is the matching return of $i$ if such a matching return exists; otherwise
     $\SUCC(\abs,\sigma,i)=\NULL$ ($\NULL$ denotes the
    \emph{undefined} value).
  \item If $i$ is not a call, then $\SUCC(\abs,\sigma,i)=i+1$ if $i+1<|\sigma|$ and $i+1$ is
    not a return position, and $\SUCC(\abs,\sigma,i)=\NULL$, otherwise.
  \end{compactitem}
The \emph{caller of $i$ along $\sigma$}, denoted   $\SUCC(\caller,\sigma,i)$, is instead defined as follows:
\begin{compactitem}
  \item if there exists the greatest call position $j_c<i$ such that either   $\SUCC(\abs,\sigma,j_c)=\NULL$ or
  $\SUCC(\abs,\sigma,j_c)>i$, then  $\SUCC(\caller,\sigma,i)=j_c$; otherwise, $\SUCC(\caller,\sigma,i)=\NULL$.
  \end{compactitem}
We also consider  the \emph{global successor} $\SUCC(\Global,\sigma,i)$ of $i$ along $\sigma$ given by   $i+1$ if $i+1<|\sigma|$,
and undefined otherwise.
A \emph{maximal abstract path} (\MAP) of $\sigma$ is a \emph{maximal} (finite or infinite) increasing sequence
of natural numbers $\nu= i_0<i_1<\ldots$ such that
$i_j=\SUCC(\abs,\sigma,i_{j-1})$ for all $1\leq j<|\nu|$. Note that for every position $i$ of $\sigma$, there is exactly one \MAP\ of $\sigma$ visiting
position $i$.  For each $i\geq 0$, the \emph{caller path of $\sigma$ from position $i$} is the maximal
(finite) decreasing sequence of natural numbers $j_0>j_1\ldots >j_n$ such that $j_0= i$ and $j_{h+1}=\SUCC(\caller,\sigma,j_{h})$ for all $0\leq h <n$.
Note that all the positions of a \MAP\ have the same caller (if any). Intuitively, in the analysis of recursive programs, a maximal  abstract path
captures the local computation within a
procedure removing computation fragments corresponding to nested
calls, while the caller path represents the call-stack
content at a given position of the input.

For instance, consider the finite untimed word $\sigma$ of length $10$ depicted in Figure~\ref{fig-word}
where $\Scall =\{c\}$, $\Sret =\{r\}$, and $\Sint =\{\imath\}$. Note that $0$ is the unique unmatched call position of $\sigma$: hence, the \MAP\ visiting $0$ consists of just position $0$ and has no caller. The \MAP\ visiting position $1$ is the   sequence $1,6,7,9,10$ and the associated caller is position $0$.
The \MAP\ visiting position $2$ is the sequence $2,3,5$ and the associated caller is position $1$, and the \MAP\ visiting position $4$ consists of just position $4$ whose caller path is $4,3,1,0$.

\begin{figure}[H]
\centering
\vspace{-0.2cm}
\begin{tikzpicture}[scale=1]

\node (Word) at (-0.2,0.2) {};
\coordinate [label=left:{\footnotesize  $\sigma$\,\,$=$}] (Word) at (0.0,0.17);
\path[thin,black] (-0.1,0.2) edge   (10.1,0.2);

\node (NodeZero) at (0.0,0.0) {};
\coordinate [label=center:{\footnotesize  $0$}] (NodeZero) at (0.0,0.0);
\coordinate [label=below:{\footnotesize   \textbf{c}}] (NodeZero) at (0.0,-0.15);

\node (NodeOne) at (1.0,0.0) {};
\coordinate [label=center:{\footnotesize  $1$}] (NodeOne) at (1.0,0.0);
\coordinate [label=below:{\footnotesize   \textbf{c}}] (NodeOne) at (1.0,-0.15);

\node (NodeTwo) at (2.0,0.0) {};
\coordinate [label=center:{\footnotesize  $2$}] (NodeTwo) at (2.0,0.0);
\coordinate [label=below:{\footnotesize   \textbf{\i}}] (NodeTwo) at (2.0,-0.15);

\node (SourceOne) at (0.95,0.11) {};
\node (SourceSix) at (6.05,0.12) {};
\node (SourceSeven) at (7.0,0.10) {};
\node (SourceNine) at (9.0,0.10) {};
\node (SourceThree) at (3.0,0.10) {};
\node (SourceFive) at (5.0,0.10) {};


\draw[->,thick,black] (SourceOne) .. controls (2.0,1.0) and  (5.0,1.0)  ..   (SourceSix);
\draw[->,thick,black] (SourceSeven) .. controls (7.5,1.0) and  (8.5,1.0)  ..   (SourceNine);

\node (NodeThree) at (3.0,0.0) {};
\coordinate [label=center:{\footnotesize  $3$}] (NodeThree) at (3.0,0.0);
\coordinate [label=below:{\footnotesize   \textbf{c}}] (NodeThree) at (3.0,-0.15);

\node (NodeFour) at (4.0,0.0) {};
\coordinate [label=center:{\footnotesize  $4$}] (NodeFour) at (4.0,0.0);
\coordinate [label=below:{\footnotesize   \textbf{\i}}] (NodeFour) at (4.0,-0.15);

\node (NodeFive) at (5.0,0.0) {};
\coordinate [label=center:{\footnotesize  $5$}] (NodeFive) at (5.0,0.0);
\coordinate [label=below:{\footnotesize   \textbf{r}}] (NodeFive) at (5.0,-0.15);

\draw[->,thick,black] (SourceThree) .. controls (3.5,0.6) and  (4.5,0.6)  .. node[above] {}  (SourceFive);

\node (NodeSix) at (6.0,0.0) {};
\coordinate [label=center:{\footnotesize  $6$}] (NodeSix) at (6.0,0.0);
\coordinate [label=below:{\footnotesize   \textbf{r}}] (NodeSix) at (6.0,-0.15);

\node (NodeSeven) at (7.0,0.0) {};
\coordinate [label=center:{\footnotesize  $7$}] (NodeSeven) at (7.0,0.0);
\coordinate [label=below:{\footnotesize   \textbf{c}}] (NodeSeven) at (7.0,-0.15);

\node (NodeEight) at (8.0,0.0) {};
\coordinate [label=center:{\footnotesize  $8$}] (NodeEight) at (8.0,0.0);
\coordinate [label=below:{\footnotesize   \textbf{\i}}] (NodeEight) at (8.0,-0.15);

\node (NodeNine) at (9.0,0.0) {};
\coordinate [label=center:{\footnotesize  $9$}] (NodeNine) at (9.0,0.0);
\coordinate [label=below:{\footnotesize   \textbf{r}}] (NodeNine) at (9.0,-0.15);

\node (NodeTen) at (7.0,0.0) {};
\coordinate [label=center:{\footnotesize  $10$}] (NodeTen) at (10.0,0.0);
\coordinate [label=below:{\footnotesize   \textbf{\i}}] (NodeTen) at (10.0,-0.15);

\end{tikzpicture}
\caption{\label{fig-word} An untimed word over a pushdown alphabet}
\vspace{-0.2cm}
\end{figure}
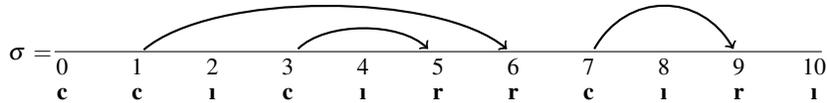

\section{Event-clock nested automata}

In this section, we recall the class of \emph{Event-Clock Nested Automata} (\ECNA)~\cite{BMP18}, a formalism that combines Event Clock Automata (\ECA)~\cite{AlurFH99} and Visibly Pushdown Automata (\VPA)~\cite{AlurMadhu04}
 by allowing a combined used of event clocks and visible operations on the stack. 

 Here, we adopt a propositional-based approach, where the pushdown alphabet is implicitly given. This is because in formal verification,
 one usually considers a finite set
  of atomic propositions which represent predicates over the states of the given system. Moreover, for verifying recursive programs,
one fixes three additional propositions, here denoted by $\call$, $\ret$, and $\intA$: $\call$ denotes the invocation of a procedure, $\ret$ denotes the return from a procedure,
and $\intA$ denotes internal actions of the current procedure. Thus, we fix a finite set $\Prop$ of atomic propositions containing the special propositions
$\call$, $\ret$, and $\intA$. The set $\Prop$ induces a pushdown alphabet $\Sigma_\Prop=\Scall\cup\Sret\cup\Sint$, where $\Scall =\{P\subseteq \Prop\mid P\cap \{\call,\ret,\intA\} = \{\call\} \}$, $\Sret =\{P\subseteq \Prop\mid P\cap \{\call,\ret,\intA\} = \{\ret\} \}$, and
$\Sint =\{P\subseteq \Prop\mid P\cap \{\call,\ret,\intA\} = \{\intA\} \}$.

 The set $C_{\Prop}$ of event clocks associated
with $\Prop$ is given by
$C_{\Prop}:= \bigcup_{p\in\Prop} \{x^{\Global}_p,y_p^{\Global},x_p^{\abs},y_p^{\abs},x_p^{\caller}\}$. Thus,  we associate with each
proposition $p\in \Prop$,  five event clocks:
the \emph{global recorder clock $x^{\Global}_p$} (resp., the \emph{global predictor clock $y^{\Global}_p$}) recording the time elapsed since the last occurrence of $p$ if any
(resp., the time required to the next occurrence of $p$ if any);  the \emph{abstract recorder clock $x_p^{\abs}$} (resp., the \emph{abstract predictor clock $y_p^{\abs}$}) recording the time elapsed since the last occurrence of $p$ if any (resp., the time required to the next occurrence of $p$) along the 
\MAP\ visiting the current position; and the \emph{caller (recorder) clock} $x_p^{\caller}$ recording the time elapsed since the last occurrence of $p$ if any along the caller path from the current position.
Let $w=(\sigma,\tau)$ be a timed word over $\Sigma_\Prop$ and $0\leq i< |w|$. We denote by  $\Pos(\abs,\sigma,i)$ the set of positions visited by the \MAP\ of $\sigma$ associated with position $i$, and
by  $\Pos(\caller,\sigma,i)$ the set of positions visited by the caller path of $\sigma$ from position $i$. For having 
a uniform notation, 
let $\Pos(\Global,\sigma,i)$ 
be the full set  of $w$-positions.
The values of the clocks at a 
position $i$ of the word $w$ can be deterministically determined as follows.

\begin{definition}[Determinisitic clock valuations] A \emph{clock valuation} over $C_{\Prop}$ is a mapping $\val: C_{\Prop} \mapsto \RealP\cup \{\NULL\}$, assigning to each event clock a value in
$\RealP\cup \{\NULL\}$ ($\NULL$ 
is the \emph{undefined} value).
For a  timed word $w=(\sigma,\tau)$ over
 $\Sigma$ and 
 $0\leq i <|w|$, the \emph{clock valuation $\val^{w}_i$ over $C_{\Prop}$}, specifying the values of the event clocks at position $i$ along $w$, is defined as follows for each $p\in\Prop$, where  $\dir\in \{\Global,\abs\}$ and $\dir' \in \{\Global,\abs, \caller\}$:
\[
\begin{array}{l}
 \val^{w}_i(x_p^{\dir'})  =  \left\{
\begin{array}{ll}
\tau_i - \tau_j
&    \text{ if there exists the unique } j<i:\, p\in \sigma_j,\, j\in\Pos(\dir',\sigma,i),  \text{ and}\,
\\
& \,\,\,\,\,\,\forall k: (j < k < i   \text{ and }\, k\in\Pos(\dir',\sigma,i))  \Rightarrow p\notin \sigma_k\\
\NULL
&    \text{ otherwise }
\end{array}
\right.
\vspace{0.2cm}\\
\val^{w}_i(y_p^{\dir})  =  \left\{
\begin{array}{ll}
\tau_j - \tau_i
&    \text{ if there exists the unique }   j>i:\, p\in \sigma_j,\, j\in\Pos(\dir,\sigma,i),  \text{ and}\,
\\
& \,\,\,\,\,\,\forall k: (i < k < j   \text{ and }\, k\in\Pos(\dir,\sigma,i))  \Rightarrow p\notin \sigma_k\\
\NULL
&    \text{ otherwise }
\end{array}
\right.
\end{array}
\]
\end{definition}

 It is worth noting that while the values of the global clocks are obtained by considering the full set of positions in $w$, the values of the abstract clocks (resp., caller clocks) are defined with respect to the \MAP\ visiting the current position (resp., with respect to the caller path from the current position).

 A \emph{clock constraint} over $C_{\Prop}$ is a conjunction of atomic formulas of the form
$z \in I$, where $z\in C_{\Prop}$, and
$I$ is either an interval in $\RealP$ with bounds in $\Nat\cup\{\infty\}$, or the singleton $\{\NULL\}$.
For a clock valuation $\val$ and a clock constraint $\theta$, $\val$ satisfies $\theta$, written
$\val\models \theta$, if for each conjunct $z\in I$ of $\theta$, $\val(z)\in I$. We denote by $\Phi(C_{\Prop})$ the set of clock constraints over $C_{\Prop}$.

\begin{definition} An   \ECNA\   over  $\Sigma_\Prop= \Scall\cup \Sint \cup \Sret$ is a tuple
$\Au=\tpl{\Sigma_\Prop, Q,Q_{0},  C_{\Prop},\Gamma\cup\{\bot\},\Delta,F}$, where $Q$ is a finite
set of (control) states, $Q_{0}\subseteq Q$ is a set of initial
states,  $\Gamma\cup\{\bot\}$ is a finite stack alphabet,  $\bot\notin\Gamma$ is the special \emph{stack bottom
  symbol}, $F\subseteq Q$ is a set of accepting states, and $\Delta=\Delta_c\cup \Delta_r\cup \Delta_i$ is a transition relation, where:
  \begin{compactitem}
    \item $\Delta_c\subseteq Q\times \Scall \times \Phi(C_{\Prop})  \times Q \times \Gamma$ is the set of \emph{push transitions},
    \item $\Delta_r\subseteq Q\times \Sret  \times \Phi(C_{\Prop})  \times (\Gamma\cup \{\bot\}) \times Q $ is the set of \emph{pop transitions},
       \item $\Delta_i\subseteq Q\times \Sint\times \Phi(C_{\Prop})  \times Q $ is the set of \emph{internal transitions}.
  \end{compactitem}
\end{definition}

 We  now describe how an \ECNA\ $\Au$ behaves over a  timed word $w$. Assume that on reading the $i$-th position of $w$, the current state of $\Au$ is $q$, and  $\val^{w}_i$ is the event-clock valuation associated with  $w$ and position
 $i$. If $\Au$ reads a call $c\in \Scall$,  it chooses a push transition of the
form $(q,c,\theta, q',\gamma)\in\Delta_c$ and pushes the symbol $\gamma\neq \bot$ onto the
stack. If $\Au$ reads a return $r\in\Sret$,  it chooses a pop transition of the
form $(q,r,\theta, \gamma,q')\in\Delta_r$ such that $\gamma$ is the symbol on the top of the stack, and
pops $\gamma$ from the stack (if $\gamma=\bot$, then $\gamma$ is read but not removed). Finally, on reading an internal action $a\in\Sint$, $\Au$ chooses an internal transition of the
form $(q,a,\theta, q')\in\Delta_i$, and, in this case, there is no operation on the stack. Moreover, in all the cases, the constraint $\theta$
of the chosen transition must be fulfilled  by the  valuation $ \val^{w}_i$ and the control changes from $q$ to $q'$.

Formally, a configuration of $\Au$ is a pair $(q,\beta)$, where $q\in Q$ and
$\beta\in\Gamma^*\cdot\{\bot\}$ is a stack content.
A run $\pi$ of $\Au$ over a timed word $w=(\sigma,\tau)$
is a sequence  of configurations
 $\pi=(q_0,\beta_0),(q_1,\beta_1),\ldots
$ of length $|w|+1$ ($\infty+1$ stands for $\infty$) such that $q_0\in Q_{0}$,
$\beta_0=\bot$  (initialization), 
and the following holds
for all $0\leq i< |w|$:

\begin{description}
\item [Push] If $\sigma_i \in \Scall$, then for some $(q_i,\sigma_i,\theta,q_{i+1},\gamma)\in\Delta_c$,
  $\beta_{i+1}=\gamma\cdot \beta_i$ and $ \val^{w}_i \models \theta$.
\item [Pop] If $\sigma_i \in \Sret$, then for some
  $(q_i,\sigma_i,\theta, \gamma,q_{i+1})\in\Delta_r$,   $ \val^{w}_i \models \theta$,  and \emph{either}
  $\gamma\neq\bot$ and $\beta_{i}=\gamma\cdot \beta_{i+1}$, \emph{or}
  $\gamma=\beta_{i}= \beta_{i+1}=\bot$.
\item [Internal] If $\sigma_i \in \Sint$, then for some
  $(q_i,\sigma_i,\theta,q_{i+1})\in\Delta_i$, $\beta_{i+1}=\beta_i$ and $ \val^{w}_i\models \theta$.
\end{description}

The run $\pi$ is \emph{accepting} if \emph{either} $\pi$ is finite and $q_{|w|}\in F$, \emph{or} $\pi$ is infinite and there are infinitely many positions $i\geq 0$ such that $q_i\in F$.
The \emph{timed language} $\TLang(\Au)$ (resp., \emph{$\omega$-timed language} $\TLangInf(\Au)$) of $\Au$ is the set of finite (resp., infinite) timed words $w$ over $\Sigma_\Prop$
 such that there is an accepting run of $\Au$ on $w$. When considered as an acceptor of infinite timed words, an \ECNA\ is  called B\"{u}chi \ECNA.
 In this case, for technical convenience, we also consider  \ECNA\ equipped with a \emph{generalized B\"{u}chi acceptance condition} $\mathcal{F}$ consisting of a family of sets of accepting states. In such a setting, an infinite run $\pi$ is accepting if for each B\"{u}chi component $F\in \mathcal{F}$, the run $\pi$ visits infinitely often states in $F$.

In the following, we also consider the class of \emph{Visibly Pushdown Timed Automata} (\VPTA)~\cite{BouajjaniER94,EmmiM06},  a combination of \VPA\ and standard Timed Automata~\cite{AlurD94}.
The clocks in a \VPTA\  can be reset when a transition is taken; hence, their values 
at a position of an input word depend  in general on the behaviour of the automaton and not only, as for event clocks, on the word. The syntax and semantics of \VPTA\ is shortly recalled in Appendix~\ref{APP:DefinitionVPTA}.

\section{The Event-Clock Nested Temporal Logic}

A known decidable timed temporal logical framework related to the class of Event-Clock automata (\ECA) is the so called \emph{Event-Clock Temporal Logic} (\ECTL)~\cite{RaskinS99}, an extension
of standard \LTL\ with past obtained by means of two indexed modal operators $\PrevClock$  and $\NextClock$  which express real-time constraints.
On the other hand, for the class of \VPA, a related logical framework is the temporal logic 
\CARET\ \cite{AlurEM04}, a well-known context-free extension
of \LTL\ with past by means of non-regular versions of the \LTL\ temporal operators.
In this section, we introduce  an extension of
both \ECTL\ and \CARET, called \emph{Event-Clock Nested Temporal Logic} (\ECNTL) which allows to specify non-regular context-free real-time properties.

For the given set $\Prop$ of atomic propositions containing the special propositions $\call$, $\ret$, and $\intA$, the syntax of
\ECNTL\ formulas $\varphi$ is as follows:
\[
\varphi:= \true \DefORmini
p \DefORmini
\varphi \vee \varphi \DefORmini
\neg\,\varphi      \DefORmini
\Next^{\dir} \varphi     \DefORmini
\Prev^{\dir'} \varphi       \DefORmini
\varphi\,\Until^{\dir}\varphi  \DefORmini
\varphi\,\Since^{\dir'}\varphi \DefORmini
\NextClock^{\dir}_I \varphi \DefORmini
\PrevClock^{\dir'}_I \varphi
\]
where $p\in \Prop$, $I$ is an interval in $\RealP$ with bounds in $\Nat\cup\{\infty\}$, $\dir\in\{\Global,\abs\}$, and $\dir'\in \{\Global,\abs,\caller\}$. The operators
$\Next^{\Global}$,
$\Prev^{\Global}$, $\Until^{\Global}$, and $\Since^{\Global}$ are the standard `next',
`previous', `until', and `since' \LTL\ modalities, respectively,
$\Next^{\abs}$, $\Prev^{\abs}$, $\Until^{\abs}$, and
$\Since^{\abs}$  are their
non-regular abstract versions, and $\Prev^{\caller}$ and $\Since^{\caller}$  are the non-regular caller versions of the  `previous' and `since' \LTL\ modalities.
Intuitively, the abstract and caller
 modalities allow  to specify \LTL\ requirements on the abstract and caller paths of the given timed word over $\Sigma_\Prop$.
Real-time constraints are specified by the indexed operators $\NextClock_I^{\Global} $, $\PrevClock_I^{\Global} $, $\NextClock^{\abs}_I $,  $\PrevClock^{\abs}_I $, and $\PrevClock^{\caller}_I $.
The formula $\NextClock_I^{\Global} \varphi$  requires that the delay $t$ before the next position where $\varphi$ holds satisfies $t\in I$; symmetrically,
$\PrevClock_I^{\Global} \varphi$  constraints the previous position where $\varphi$ holds.  The abstract versions
  $\NextClock^{\abs}_I \varphi$ and $\PrevClock^{\abs}_I \varphi$ are similar, but the notions of
  next and previous position where $\varphi$ holds refer to the \MAP\ visiting the current position. Analogously,
  for the caller version $\PrevClock^{\caller}_I \varphi$ of $\PrevClock_I^{\Global} \varphi$, the notion  of
    previous position where $\varphi$ holds refers to the caller path visiting the current position.

 Full \CARET\ \cite{AlurEM04} corresponds to the fragment of \ECNTL\ obtained by disallowing the real-time operators, while the logic  \ECTL\  \cite{RaskinS99} is obtained
 from \ECNTL\   by disallowing the abstract and caller modalities.
   As pointed out in~\cite{RaskinS99}, the real-time operators $\PrevClock$ and $\NextClock$ generalize the semantics of event clock variables since they allows recursion, i.e., they can constraint
   arbitrary formulas and not only atomic propositions. Accordingly, the \emph{non-recursive fragment} of \ECNTL\ is obtained by replacing the clauses $\NextClock^{\dir}_I \varphi$ and
$\PrevClock^{\dir'}_I \varphi$ in the syntax with the clauses $\NextClock^{\dir}_I p$ and
$\PrevClock^{\dir'}_I p$, where $p\in \Prop$.
We use standard shortcuts in \ECNTL: the formula $\Eventually^{\Global} \psi$ stands for $\true\,\Until^{\Global}\, \psi$
(the  \LTL\ eventually operator), and $\Always^{\Global} \psi$ stands for $\neg \Eventually^{\Global} \neg\psi$ (the  \LTL\ always
operator). For an \ECNTL\ formula $\varphi$, $|\varphi|$ denotes the number of distinct subformulas of $\varphi$ and $\Const_\varphi$ the set of constants used as finite endpoints
in the intervals associates with the real-time modalities. The size of $\varphi$ is $|\varphi| + k$, where $k$ is the size of the binary encoding of the largest constant in
$\Const_\varphi$.

Given an \ECNTL\ formula $\varphi$, a timed word $w=(\sigma,\tau)$ over $\Sigma_\Prop$  and a position $0\leq i< |w|$, the satisfaction relation
$(w,i)\models\varphi$ is inductively defined as follows (we omit the clauses for the atomic propositions and Boolean connectives which are standard):
\[ \begin{array}{ll}
  (w,i)\models \Next^{\dir}\varphi &
              \Leftrightarrow\,  \text{ there is } j>i \text{ such that } j= \SUCC(\dir,\sigma,i) \text{ and } (w,j)\models \varphi
    \\
 (w,i)\models \Prev^{\dir'}\varphi &
              \Leftrightarrow\,  \text{ there is } j<i \text{ such that } (w,j)\models \varphi \text{ and \emph{either} } (dir'\neq \caller  \text{ and }
  \\ & \phantom{\Leftrightarrow}\,\,  i= \SUCC(\dir',\sigma,j)), \text{ or } (dir'= \caller  \text{ and } j= \SUCC(\caller,\sigma,i))
\\
  (w,i)\models \varphi_1  \Until^{\dir}\varphi_2 &
              \Leftrightarrow\,  \textrm{there is   }  j\geq  i \text{ such that }j\in \Pos(\dir,\sigma,i),\, (w,j)\models \varphi_2 \text{ and }
       \\ & \phantom{\Leftrightarrow}\,\,     (w,k)\models \varphi_1 \text{ for all } k\in [i,j-1]\cap \Pos(\dir,\sigma,i)  \\
    (w,i)\models \varphi_1  \Since^{\dir'}\varphi_2 &
                      \Leftrightarrow\,  \textrm{there is   }  j\leq  i \text{ such that }j\in \Pos(\dir',\sigma,i),\, (w,j)\models \varphi_2 \text{ and }
       \\ & \phantom{\Leftrightarrow}\,\,     (w,k)\models \varphi_1 \text{ for all } k\in [j+1,i]\cap \Pos(\dir',\sigma,i)  \\
\end{array}
\]
\[ \begin{array}{ll}
   (w,i)\models \NextClock_I^{\dir}\varphi &
              \Leftrightarrow\,  \textrm{there is   }  j>  i \text{ s.t. }j\in \Pos(\dir,\sigma,i),\, (w,j)\models \varphi,\,\tau_j-\tau_i\in I,
       \\ & \phantom{\Leftrightarrow}\,\,  \text{ and }     (w,k)\not\models \varphi  \text{ for all } k\in [i+1,j-1]\cap \Pos(\dir,\sigma,i)  \\
   (w,i)\models \PrevClock_I^{\dir'}\varphi &
                 \Leftrightarrow\,  \textrm{there is   }  j<  i \text{ s.t. }j\in \Pos(\dir',\sigma,i),\, (w,j)\models \varphi,\,\tau_i-\tau_j\in I,
       \\ & \phantom{\Leftrightarrow}\,\,  \text{ and }     (w,k)\not\models \varphi  \text{ for all } k\in [j+1,i-1]\cap \Pos(\dir',\sigma,i)
\end{array}
\]
A timed word $w$ satisfies a formula $\varphi$ (we also say that $w$ is a model of $\varphi$) if $(w,0)\models \varphi$. The timed language $\TLang(\varphi)$ (resp. $\omega$-timed language $\TLangInf(\varphi)$) of $\varphi$ is the set of finite (resp., infinite) timed words over $\Sigma_\Prop$ satisfying $\varphi$. We consider the following decision problems:
\begin{compactitem}
  \item \emph{Satisfiability:} has a given \ECNTL\ formula a finite (resp., infinite) model?
  \item \emph{Visibly model-checking:} given a \VPTA\ $\Au$ over $\Sigma_\Prop$ 
  and an \ECNTL\ formula $\varphi$ over $\Prop$, does $\TLang(\Au)\subseteq \TLang(\varphi)$ (resp., $\TLangInf(\Au)\subseteq \TLangInf(\varphi)$) hold?
\end{compactitem}\vspace{0.1cm}


The logic \ECNTL\  allows to express in a natural way real-time \LTL-like properties over the non-regular patterns capturing the local computations of procedures or
the stack contents at given positions. Here, we consider three relevant examples.
\begin{compactitem}
  \item \emph{Real-time total correctness:} a bounded-time total correctness requirement for a procedure $A$ specifies that if the pre-condition $p$ holds when the procedure $A$
  is invoked, then the procedure must return within $k$ time units and $q$ must hold upon return. Such a requirement can be expressed by the following non-recursive formula, where proposition $p_A$
  characterizes calls to procedure $A$: $ \Always^{\Global}\bigl((\call\wedge p\wedge p_A) \rightarrow (\Next^{\abs}q \wedge \NextClock_{[0,k]}^{\abs} \ret)\bigr)$

  \item \emph{Local bounded-time response properties:} the requirement that in the local computation (abstract path) of a procedure $A$, every request $p$ is followed
  by a response $q$ within $k$ time units can be expressed by the following non-recursive formula, where $c_A$ denotes that the control is inside procedure $A$:
    $\Always^{\Global}\bigl((  p\wedge c_A) \rightarrow   \NextClock_{[0,k]}^{\abs} q \bigr)$
  \item \emph{Real-time properties  over the stack content:}
the real-time security requirement
that a procedure $A$ is invoked only if procedure $B$ belongs to the call
stack and within $k$ time units since the activation of $B$ can be expressed as follows (the calls to procedure $A$ and $B$ are marked by proposition $p_A$ and $p_B$, respectively):
%
$  \Always^{\Global}\bigl(( \call  \wedge p_A) \rightarrow   \PrevClock_{[0,k]}^{\caller} \,p_B \bigr)$
   %
\end{compactitem}\vspace{0.2cm}

\noindent{\textbf{Expressiveness results.}} We now compare the expressive power of the formalisms \ECNTL, \ECNA, and \VPTA\ with  respect to the associated classes of ($\omega$-)timed languages.
It is known that \ECA\ and the logic  \ECTL\  are expressively incomparable~\cite{RaskinS99}. This result  trivially generalizes to  \ECNA\ and \ECNTL\
(note that over timed words consisting only of internal actions, \ECNA\ correspond to \ECA, and the logic \ECNTL\ corresponds to \ECTL). In~\cite{BMP18}, it is shown that
\ECNA\ are strictly less expressive than \VPTA. In Section~\ref{sec:DecisionProceduresECNTL}, we show that \ECNTL\ is subsumed by  \VPTA\ (in particular, every \ECNTL\ formula can be translated into an equivalent
\VPTA). The inclusion is strict since the logic \ECNTL\ is closed under complementation,  while   \VPTA\ are not~\cite{EmmiM06}. Hence, we obtain the following result.

 \begin{theorem} Over finite (resp., infinite) timed words, \ECNTL\ and  \ECNA\ are expressively incomparable, and   \ECNTL\ is strictly less expressive than \VPTA.
 \end{theorem}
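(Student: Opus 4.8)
The plan is to assemble the statement from three already-available ingredients: the forward translation of \ECNTL\ into \VPTA\ announced for Section~\ref{sec:DecisionProceduresECNTL}, the known incomparability of \ECA\ and \ECTL~\cite{RaskinS99}, and the facts that \ECNA\ are strictly subsumed by \VPTA~\cite{BMP18} and that \VPTA\ are not closed under complement~\cite{EmmiM06}. I would split the claim into the incomparability part (\ECNTL\ versus \ECNA) and the strict-inclusion part (\ECNTL\ versus \VPTA), handling each for both the finite and the infinite semantics, since the arguments are identical.

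For the strict inclusion, I would first invoke the translation of Section~\ref{sec:DecisionProceduresECNTL}, which produces for every \ECNTL\ formula $\varphi$ an equivalent \VPTA; thus \ECNTL\ is subsumed by \VPTA. Properness I would obtain by a closure argument: \ECNTL\ is trivially closed under complementation relative to the universe of timed words over $\Sigma_\Prop$, since the models of $\neg\varphi$ form exactly the complement of the models of $\varphi$. Were \VPTA\ and \ECNTL\ to define the same class of languages, \VPTA\ would inherit closure under complement; this contradicts~\cite{EmmiM06}, which exhibits a \VPTA-recognizable language whose complement is not \VPTA-recognizable. Hence the inclusion must be proper.

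For the incomparability of \ECNTL\ and \ECNA, the key observation is that the separation already shows up on the degenerate pushdown alphabet with $\Scall=\Sret=\emptyset$ and $\Sigma_\Prop=\Sint$. On internal-only timed words the stack is never touched, so every run of an \ECNA\ keeps the stack equal to $\bot$ and uses only internal transitions; moreover each position's abstract successor is its global successor and no position has a caller, so for each proposition $p$ the abstract clocks coincide with the global clocks in the valuation $\val^{w}_i$ and the caller clocks are always $\NULL$. Consequently, restricted to this alphabet, \ECNA\ recognize exactly the \ECA-recognizable languages while \ECNTL\ defines exactly the \ECTL-definable languages (the abstract and caller modalities collapsing to their global counterparts). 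Reading the two separating languages of~\cite{RaskinS99} --- one \ECTL-definable but not \ECA-recognizable, the other \ECA-recognizable but not \ECTL-definable --- as languages over the internal-only alphabet then yields a language in $\ECNTL\setminus\ECNA$ and one in $\ECNA\setminus\ECNTL$, which is precisely incomparability.

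The step demanding the most care is this collapse argument, which must be checked to be faithful in both directions: an arbitrary \ECNA\ may carry push/pop transitions that are merely never enabled on internal input, so I must verify that among internal-only words it recognizes precisely an \ECA-language, and dually that no genuinely context-free or caller-referring feature of \ECNTL\ can separate such inputs. Concretely this reduces to confirming that $\val^{w}_i$ degenerates as claimed and that the semantics of $\Next^{\abs}$, $\Until^{\abs}$, $\Prev^{\caller}$ and the timed operators $\NextClock^{\abs}_I$, $\PrevClock^{\caller}_I$ collapse onto their global versions when every \MAP\ is the whole position sequence. Once faithfulness is settled, the remainder is a direct appeal to~\cite{RaskinS99,BMP18,EmmiM06}; the genuinely hard content, namely the \ECNTL-to-\VPTA\ translation, is localized in Section~\ref{sec:DecisionProceduresECNTL} and is assumed here.
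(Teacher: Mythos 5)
Your proposal matches the paper's own argument: the paper likewise obtains incomparability by noting that over timed words consisting only of internal actions \ECNA\ collapse to \ECA\ and \ECNTL\ to \ECTL\ (so the separating languages of~\cite{RaskinS99} lift directly), and obtains strictness of the inclusion in \VPTA\ from the translation of Section~\ref{sec:DecisionProceduresECNTL} together with closure of \ECNTL\ under complementation versus non-closure of \VPTA~\cite{EmmiM06}. The extra care you devote to verifying that the clock valuations and the abstract/caller modalities genuinely degenerate on internal-only words is exactly the step the paper labels ``trivial,'' so your write-up is a correct, slightly more detailed rendering of the same proof.
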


We additionally investigate the expressiveness of the novel timed temporal modalities $\PrevClock^{\abs}_I$, $\NextClock^{\abs}_I$, and $\PrevClock^{\caller}_I$. It turns out  that these
modalities add expressive power. 

 \begin{theorem}\label{theo:expressofNovelModalities} Let $\Fragm$ be the fragment of \ECNTL\ obtained by disallowing the
   modalities $\PrevClock^{\abs}_I$, $\PrevClock^{\caller}_I$, and $\NextClock^{\abs}_I$. Then, $\Fragm$ is strictly less expressive than \ECNTL.
 \end{theorem}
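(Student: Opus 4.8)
The plan is to show strict inclusion of $\Fragm$ in \ECNTL\ by exhibiting a single \ECNTL\ formula---using one of the forbidden modalities---whose language cannot be captured by any $\Fragm$ formula. The natural candidate is a simple abstract real-time property such as $\varphi := \NextClock^{\abs}_{I}\,\intA$ for a suitable bounded interval $I$ (e.g. $I=[0,1]$), interpreted at the initial position. The forbidden modalities $\PrevClock^{\abs}_I$, $\PrevClock^{\caller}_I$, $\NextClock^{\abs}_I$ constrain timing \emph{along abstract or caller paths}, i.e.\ along the non-regular structure induced by the matching of calls and returns; the surviving operators in $\Fragm$ can only impose quantitative timing constraints \emph{globally} (via $\NextClock^{\Global}_I$ and $\PrevClock^{\Global}_I$), even though they retain the untimed abstract/caller navigation ($\Next^{\abs}$, $\Until^{\abs}$, $\Since^{\caller}$, etc.). The intuition is that $\Fragm$ cannot force a metric constraint between the current position and its abstract successor, because the abstract successor of a call is its (arbitrarily distant) matching return, and the intervening well-matched block can be stretched in time arbitrarily.

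First I would fix the candidate language $L = \TLang(\varphi)$ and describe it concretely: the family of timed words on which, reading from position $0$, the abstract successor carries an internal action whose timestamp lies within $I$ of $\tau_0$. Then I would design two families of timed words, $\{w_n\}$ inside $L$ and $\{w'_n\}$ outside $L$ (or vice versa), that agree on all untimed abstract/caller structure and differ only in the metric distance along an abstract path, while keeping \emph{all global timing distances} between corresponding positions identical across the two families. The key is that the abstract successor of a call position is its matching return, which may be separated from the call by a long well-matched infix; by inserting timed internal letters into that infix I can alter the abstract-path delay $\tau_{\SUCC(\abs,\sigma,0)}-\tau_0$ without changing any global delay that a $\Fragm$ formula could observe at the relevant positions.

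The core technical step is an indistinguishability (Ehrenfeucht--Fra\"iss\'e-style or direct inductive) argument: for every $\Fragm$ formula $\psi$ up to some bound on nesting depth and on the constants in $\Const_\psi$, and for corresponding positions $i$ in $w_n$ and $w'_n$, one has $(w_n,i)\models\psi \Leftrightarrow (w'_n,i)\models\psi$ once $n$ is large enough relative to the largest constant appearing in $\psi$. The proof proceeds by structural induction on $\psi$: the Boolean and untimed temporal cases are immediate because the two words share the same untimed word and the same call/return matching (hence the same abstract and caller paths); the only delicate cases are the global real-time modalities $\NextClock^{\Global}_I$ and $\PrevClock^{\Global}_I$, where I must verify that the global nearest-witness position and its global delay are preserved across the two words. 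This reduces to arranging the timestamp perturbation so that it is confined to a region whose global distances to the evaluation points either exceed every constant in $\Const_\psi$ or are kept exactly equal.

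I expect the main obstacle to be the bookkeeping in that last verification: I must simultaneously preserve \emph{global} nearest-occurrence delays (for the surviving $\NextClock^{\Global}_I,\PrevClock^{\Global}_I$ operators) while genuinely changing the \emph{abstract} delay that $\varphi$ detects, and do so uniformly in the parameter $n$ so that the perturbation outgrows any fixed constant a $\Fragm$ formula of bounded size can name. The cleanest route is probably to place the timing difference on a single internal letter sitting strictly between a call and its matching return, far in global time from position $0$ and from any other reference point, so that no global predictor/recorder clock evaluated at a position witnessing $\psi$ ever lands on the perturbed letter as its nearest relevant occurrence; the abstract predictor clock at position $0$, by contrast, skips the entire nested block and reads precisely the perturbed return timestamp. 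Once this geometric separation is secured, the induction closes and, choosing $n$ large, no $\Fragm$ formula separates $w_n$ from $w'_n$ whereas $\varphi$ does, establishing strictness.
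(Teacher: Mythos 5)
Your overall strategy coincides with the paper's: pick a language definable only via an abstract real-time modality, build two parameterized families of timed words, and close an induction on formula structure bounded by nesting depth and by the constants named in the formula. The gap is in the construction itself, and it is not a bookkeeping issue but the central difficulty. First, a local confusion: perturbing ``a single internal letter sitting strictly between a call and its matching return'' changes \emph{no} abstract delay measured from the call, precisely because the abstract successor of the call skips the entire nested block; to move the quantity that $\NextClock^{\abs}_I$ reads at the call you must move the timestamp of the matching return itself (or of a position on the \MAP\ after it), and that timestamp participates in global delays to its immediate neighbours, so it cannot be ``geometrically separated'' from every global recorder/predictor clock.

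Second, and more damaging: the abstract delay from a call $i_c$ to its matching return $i_r$ is itself the global delay between two concrete positions, and it becomes observable in $\Fragm$ via $\NextClock^{\Global}_I\psi$ the moment some $\Fragm$-definable $\psi$ holds at $i_r$ and at no position strictly between $i_c$ and $i_r$. Your cleanest scenario --- one top-level call at position $0$ enclosing a well-matched block --- is exactly the case where this happens: every position strictly inside the block has position $0$ as its caller, while $i_r$ has no caller, so $\psi:=\neg\Prev^{\caller}\top$ isolates $i_r$ and $\NextClock^{\Global}_I(\neg\Prev^{\caller}\top)$ evaluated at $0$ measures precisely $\tau_{i_r}-\tau_0$. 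Hence the language you propose is $\Fragm$-definable and cannot witness the separation. Defeating this attack is what forces the paper's specific choice: words of the form $\{\call\}^{n}\cdot\{\ret\}^{n}$, where every return except the last one has a caller and the distinguished pair is the \emph{middle} call and the \emph{middle} return, so that isolating the witness position requires counting up to $n$; the two words then differ by \emph{deleting} the middle call--return pair (not by a timestamp perturbation), and a window-indistinguishability lemma shows that formulas of nesting depth below $n$ cannot tell the middle positions from their neighbours. Your sketch offers no substitute for this counting argument, and without it the induction does not close in the $\NextClock^{\Global}_I$ case.
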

 \begin{proof} We focus on the case of finite timed words (the case of infinite timed words is similar). Let $\Prop =\{\call,\ret\}$ and $\TLang$ be the timed language
 consisting of the finite  timed words  of the form $(\sigma,\tau)$ such that $\sigma$ is a well-matched word of the form $ \{\call\}^{n} \cdot   \{\ret\}^{n}$ for some $n>0$, and there is a call
  position $i_c$ of $\sigma$ such that $\tau_{i_r}-\tau_{i_c} = 1$, where $i_r$ is the matching-return of $i_c$ in $\sigma$.
  $\TLang$ can be easily expressed in \ECNTL. On the other hand, one can show that $\TLang$ is not definable in $\Fragm$ (a proof is in Appendix~\ref{APP:expressofNovelModalities}) .
 \end{proof}

\subsection{Decision procedures for the logic \ECNTL}\label{sec:DecisionProceduresECNTL}

In this section, we provide an automata-theoretic approach for solving satisfiability and  visibly model-checking   for the logic \ECNTL\ which generalizes both the
automatic-theoretic approach of   \CARET\ \cite{AlurEM04} and the one for   \ECTL\ \cite{RaskinS99}. We focus on infinite timed words (the approach for finite timed words is similar). Given an \ECNTL\ formula $\varphi$ over $\Prop$, we construct in 
exponential time a generalized B\"{u}chi
$\ECNA$ $\Au_\varphi$ over an extension of the pushdown alphabet $\Sigma_\Prop$ accepting suitable encodings of the infinite models of $\varphi$.

Fix an \ECNTL\ formula $\varphi$ over $\Prop$. 
For each infinite timed word $w=(\sigma,\tau)$ over $\Sigma_\Prop$ we associate to $w$ an infinite timed word $\pi= (\sigma_e,\tau)$ over an extension of $\Sigma_\Prop$,
called \emph{fair Hintikka sequence}, where $\sigma_e = A_0 A_1\ldots$, and for all $i\geq 0$, $A_i$ is an \emph{atom} which, intuitively, describes a maximal set of subformulas of $\varphi$ which hold at position  $i$
along $w$. The notion of \emph{atom} syntactically captures the semantics of the Boolean connectives and
the local fixpoint characterization of the  variants of until (resp.,  since)
modalities 
 in terms of the corresponding variants of the next (resp., previous)  modalities.
Additional requirements on the timed word $\pi$, which can be easily checked by the transition function of an $\ECNA$, capture the semantics of the various next
and  previous modalities, and the semantics of the real-time operators. Finally, the global \emph{fairness} requirement, which can be easily checked  by a standard generalized
B\"{u}chi acceptance condition, captures the liveness requirements $\psi_2$ in until subformulas of the form $\psi_1\Until^{\Global} \psi_2$  (resp., $\psi_1\Until^{\abs} \psi_2$) of $\varphi$.
In particular, when an abstract until formula $\psi_1\Until^{\abs} \psi_2$ is asserted at   a position $i$ along an infinite timed word $w$ over $\Sigma_\Prop$ and the $\MAP$ $\nu$ visiting
position $i$ is infinite,
we have to ensure that the liveness requirement $\psi_2$ holds at some position $j\geq i$ of the $\MAP$ $\nu$. To this end,  we use a special proposition
$p_{\infty}$ which \emph{does not} hold  at a position $i$ of $w$ iff   position $i$ has a caller whose matching return is defined.  
We now proceed with the technical details.
The closure $\Cl(\varphi)$ of $\varphi$ is the smallest set containing:
\begin{compactitem}
  \item $\true\in \Cl(\varphi)$, each proposition $p\in \Prop\cup \{p_{\infty}\}$, and  formulas $\Next^{\abs}\true$ and $\Prev^{\abs}\true$;
  \item all the subformulas of $\varphi$;
  \item the formulas $\Next^{\dir}(\psi_1\Until^{\dir}\psi_2)$ (resp., $\Prev^{\dir}(\psi_1\Since^{\dir}\psi_2)$) for all the subformulas
  $\psi_1\Until^{\dir}\psi_2$ (resp., $\psi_1\Since^{\dir}\psi_2$) of $\varphi$, where $\dir\in \{\Global,\abs\}$ (resp., $\dir\in \{\Global,\abs,\caller\}$).
  \item all the negations of the above formulas (we identify $\neg\neg\psi$ with $\psi$).
\end{compactitem}\vspace{0.1cm}

Note that $\varphi\in\Cl(\varphi)$ and $|\Cl(\varphi)|=O(|\varphi|)$. In the following, elements of $\Cl(\varphi)$ are seen as atomic propositions, and we consider the pushdown alphabet $\Sigma_{\Cl(\varphi)}$ induced by
$\Cl(\varphi)$. In particular, for a timed word $\pi$ over $\Sigma_{\Cl(\varphi)}$, we consider the clock valuation $\val^{\pi}_i$  specifying the values of the event clocks $x_\psi$, $y_\psi$,
 $x^{\abs}_\psi$, $y^{\abs}_\psi$, and $x^{\caller}_\psi$  at
position $i$ along $\pi$, where $\psi\in \Cl(\varphi)$.

\noindent An \emph{atom $A$} of $\varphi$ is a subset of $\Cl(\varphi)$ satisfying the following:
\begin{compactitem}
  \item $A$ is a maximal subset of $\Cl(\varphi)$ which is propositionally consistent, i.e.:
  \begin{compactitem}
    \item $\true\in A$ and for each $\psi\in \Cl(\varphi)$, $\psi\in A$ iff $\neg\psi\notin A$;
    \item for each $\psi_1\vee\psi_2\in \Cl(\varphi)$, $\psi_1\vee\psi_2\in A$ iff $\{\psi_1,\psi_2\}\cap A \neq \emptyset$;
    \item $A$ contains exactly one atomic proposition in $\{\call,\ret,\intA\}$.
  \end{compactitem}
    \item for all $\dir\in \{\Global,\abs\}$ and $\psi_1\Until^{\dir}\psi_2\in \Cl(\varphi)$, either $\psi_2\in A$ or $\{\psi_1,\Next^{\dir}(\psi_1\Until^{\dir}\psi_2)\}\subseteq A$.
        \item for all $\dir\in \{\Global,\abs,\caller\}$ and $\psi_1\Since^{\dir}\psi_2\in \Cl(\varphi)$, either $\psi_2\in A$ or $\{\psi_1,\Prev^{\dir}(\psi_1\Since^{\dir}\psi_2)\}\subseteq A$.
  \item if $\Next^{\abs}\true\notin A$, then for all $\Next^{\abs}\psi\in \Cl(\varphi)$, $\Next^{\abs}\psi\notin A$.
   \item if $\Prev^{\abs}\true\notin A$, then for all $\Prev^{\abs}\psi\in \Cl(\varphi)$, $\Prev^{\abs}\psi\notin A$.
\end{compactitem}\vspace{0.1cm}

We now introduce the notion of Hintikka sequence $\pi$ which corresponds to an infinite timed word over $\Sigma_{\Cl(\varphi)}$ satisfying additional constraints. These constraints  capture the semantics of the variants of next,   previous, and real-time  modalities, and (partially) the intended meaning of proposition $p_{\infty}$  along the associated timed word over $\Sigma_\Prop$ (the projection of $\pi$ over $\Sigma_\Prop\times \RealP$). For an atom $A$, let $\Caller(A)$ be the set of caller formulas $\Prev^{\caller}\psi$ in $A$.
For atoms $A$ and $A'$, we define a predicate $\NextPrev(A,A')$ which holds if the global next (resp., global previous) requirements in $A$ (resp., $A'$) are the ones that hold in $A'$ (resp., $A$), i.e.:
(i) for all $\Next^{\Global}\psi\in \Cl(\varphi)$, $\Next^{\Global}\psi\in A$ iff  $\psi\in A'$, and (ii)
 for all $\Prev^{\Global}\psi\in \Cl(\varphi)$, $\Prev^{\Global}\psi\in A'$ iff  $\psi\in A$. Similarly, the predicate $\AbsNextPrev(A,A')$ holds if:
 (i) for all $\Next^{\abs}\psi\in \Cl(\varphi)$, $\Next^{\abs}\psi\in A$ iff  $\psi\in A'$, and (ii)
 for all $\Prev^{\abs}\psi\in \Cl(\varphi)$, $\Prev^{\abs}\psi\in A'$ iff  $\psi\in A$, and additionally (iii) $\Caller(A)=\Caller(A')$.
 Note that for   $\AbsNextPrev(A,A')$ to hold we also require that the caller requirements in $A$ and $A'$ coincide
 consistently with the fact that the positions of  a \MAP\ have the same caller (if any).

\begin{definition} \label{Def:HintikkaSequence}  An infinite timed word $\pi= (\sigma,\tau)$ over $\Sigma_{\Cl(\varphi)}$, where $\sigma =A_0 A_1\ldots$, is an \emph{Hintikka sequence of $\varphi$},  if for all $i\geq 0$, $A_i$ is a $\varphi$-atom and the following holds:
 \begin{compactenum}
    \item  \emph{Initial consistency:} for all $\dir\in\{\Global,\abs,\caller\}$ and $\Prev^{\dir}\psi\in\Cl(\varphi)$, $\neg \Prev^{\dir}\psi\in A_0$.
  \item  \emph{Global next and previous requirements:} $\NextPrev(A_i,A_{i+1})$.
  \item  \emph{Abstract and caller requirements:} we distinguish three cases.
  \begin{compactitem}
    \item $\call\notin A_i$ and $\ret\notin A_{i+1}$: $\AbsNextPrev(A_i,A_{i+1})$, $(p_{\infty}\in A_i$ iff $p_{\infty}\in A_{i+1})$;
    \item $\call\notin A_i$ and $\ret\in A_{i+1}$: $\Next^{\abs}\true\notin A_i$, and ($\Prev^{\abs}\true\in A_{i+1}$  iff the matching call of the return position $i+1$ is defined).
Moreover, if  $\Prev^{\abs}\true\notin A_{i+1}$, then $p_{\infty}\in A_i\cap A_{i+1}$ and $\Caller(A_{i+1})=\emptyset$.
    \item  $\call\in A_i$: if  $\SUCC(\abs,\sigma,i)=\NULL$ then $\Next^{\abs}\true\notin A_i$ and  $p_{\infty}\in A_i$; otherwise
    $\AbsNextPrev(A_i,A_{j})$ and $(p_{\infty}\in A_i$ iff $p_{\infty}\in A_{j})$, where $j=\SUCC(\abs,\sigma,i)$. Moreover, if
   $\ret\notin A_{i+1}$, then  $\Caller(A_{i+1})=\{\Prev^{\caller}\psi\in\Cl(\varphi)\mid\psi\in A_i\}$ and
   ($\Next^{\abs}\true\in A_i$ iff $p_{\infty}\notin A_{i+1}$).
  \end{compactitem}
  \item  \emph{Real-time requirements:}
  \begin{compactitem}
    \item for all $\dir\in\{\Global,\abs,\caller\}$ and $\PrevClock^{\dir}_I \psi \in\Cl(\varphi)$, $\PrevClock^{\dir}_I \psi \in A_i$ iff $\val_i^{\pi}(x^{\dir}_{\psi})\in I$;
        \item for all $\dir\in\{\Global,\abs\}$ and $\NextClock^{\dir}_I \psi \in\Cl(\varphi)$, $\NextClock^{\dir}_I \psi \in A_i$ iff $\val_i^{\pi}(y^{\dir}_{\psi})\in I$.
  \end{compactitem}
\end{compactenum}
\end{definition}

\noindent In order to capture the liveness requirements of the global and abstract  until subformulas of $\varphi$, and fully capture the intended meaning of proposition $p_{\infty}$, we consider the following additional global
fairness constraint. An Hintikka sequence $\pi=(A_0,t_0)(A_1,t_1)$ of $\varphi$ is \emph{fair} if
\begin{inparaenum}[(i)]
    \item for infinitely many $i\geq 0$, $p_{\infty}\in A_i$;
        \item for all $\psi_1\Until^{\Global} \psi_2\in\Cl(\varphi)$, there are infinitely many $i\geq 0$  s.t.  $\{\psi_2,\neg(\psi_1\Until^{\Global} \psi_2)\}\cap A_i\neq \emptyset$; and
  \item  for all $\psi_1\Until^{\abs} \psi_2\in\Cl(\varphi)$, there are infinitely many $i\geq 0$ such that $p_{\infty}\in A_i$ and  $\{\psi_2,\neg(\psi_1\Until^{\abs} \psi_2)\}\cap A_i\neq \emptyset$.
\end{inparaenum}

The Hintikka sequence $\pi$ is \emph{initialized} if $\varphi\in A_0$. Note that according to the intended meaning of proposition $p_{\infty}$, for each infinite timed word $w=(\sigma,\tau)$ over $\Sigma_\Prop$, $p_{\infty}$ holds at infinitely many positions. Moreover,
there is at a most one \emph{infinite} \MAP\ $\nu$ of $\sigma$, and for such a \MAP\ $\nu$ and each position $i$ greater than the starting position of $\nu$, either $i$ belongs to $\nu$ and $p_{\infty}$ holds, or $p_{\infty}$ does not hold. Hence, the fairness requirement for an abstract until subformula $\psi_1\Until^{\abs} \psi_2$ of $\varphi$ ensures that whenever $\psi_1\Until^{\abs} \psi_2$ is asserted at some position $i$ of $\nu$, then $\psi_2$ eventually holds at some position $j\geq i$ along $\nu$. Thus, we obtain the following characterization of the infinite models of $\varphi$, where
$\Proj_\varphi$ is the mapping associating to each  \emph{fair} Hintikka sequence $\pi=(A_0,t_0)(A_1,t_1)\ldots$ of $\varphi$, the infinite timed word over $\Sigma_\Prop$ given by
$\Proj(\pi)=(A_0\cap\Prop,t_0)(A_1\cap\Prop,t_1)\ldots$.

\begin{proposition}\label{prop:TimedHintikkaSequence}
\emph{Let $\pi=(A_0,t_0)(A_1,t_1)\ldots$ be a fair Hintikka sequence of $\varphi$. Then,  for all $i\geq 0$ and $\psi\in \Cl(\varphi)\setminus \{p_{\infty},\neg p_{\infty}\}$,
$\psi\in A_i$ \emph{iff} $(\Proj_\varphi(\pi),i)\models \psi$. Moreover, the mapping $\Proj_\varphi$ is a bijection between the set of fair  Hintikka sequences of $\varphi$ and the set of infinite timed words over $\Sigma_\Prop$. In particular, an infinite timed word over $\Sigma_\Prop$ is a model of $\varphi$ iff the associated fair Hintikka sequence is initialized.}
 \end{proposition}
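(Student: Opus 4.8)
The plan is to prove the three claims in sequence, since the second and third follow from the first. For the main characterization, namely that $\psi \in A_i$ iff $(\Proj_\varphi(\pi),i) \models \psi$ for all $\psi \in \Cl(\varphi) \setminus \{p_\infty, \neg p_\infty\}$, I would proceed by structural induction on $\psi$, but with a crucial twist: the until and since modalities require an additional argument using the fairness condition, so the induction on formula structure must be interleaved with reasoning over positions. Let $w = \Proj_\varphi(\pi)$ throughout, so that $w_i \cap \Prop = A_i \cap \Prop$ by definition of $\Proj_\varphi$.

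\textbf{Base cases and Boolean connectives.} For $\true$ and atomic propositions $p \in \Prop$, the equivalence is immediate from the definition of $\Proj_\varphi$ together with propositional consistency of atoms (the clause $\true \in A$, and $p \in A_i$ iff $p \in w_i$). The Boolean cases $\neg \psi$ and $\psi_1 \vee \psi_2$ follow directly from the maximality and consistency conditions in the definition of an atom ($\psi \in A$ iff $\neg\psi \notin A$; and $\psi_1 \vee \psi_2 \in A$ iff $\{\psi_1,\psi_2\}\cap A \neq \emptyset$), combined with the induction hypothesis.

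\textbf{Next/previous and real-time modalities.} For the global next $\Next^{\Global}\psi$, I would use the $\NextPrev(A_i,A_{i+1})$ requirement of Definition~\ref{Def:HintikkaSequence}, which says $\Next^{\Global}\psi \in A_i$ iff $\psi \in A_{i+1}$, and then invoke the induction hypothesis at position $i+1$. The abstract and caller next/previous cases are analogous but trickier: I must verify that the abstract/caller requirements of Definition~\ref{Def:HintikkaSequence} correctly encode $\SUCC(\abs,\sigma,i)$ and the caller relation. The three-case analysis in the definition (whether $\call \notin A_i$ and $\ret \notin A_{i+1}$, etc.) is precisely engineered so that $\AbsNextPrev(A_i,A_j)$ holds with $j = \SUCC(\abs,\sigma,i)$, and the $\Caller$ bookkeeping ensures caller formulas propagate correctly along a \MAP; here I would match each case against the definition of $\SUCC(\abs,\sigma,i)$ given in Section~\ref{sec:backgr}. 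The guards $\Next^{\abs}\true \in A_i$ (resp.\ $\Prev^{\abs}\true \in A_{i+1}$) correctly signal existence of an abstract successor (resp.\ predecessor), which handles the ``otherwise'' branches where the modality is false. For the real-time operators $\NextClock^{\dir}_I\psi$ and $\PrevClock^{\dir'}_I\psi$, I would combine the induction hypothesis (to know at which positions $\psi$ holds) with the Real-time requirements clause, which ties membership in $A_i$ to $\val^{\pi}_i(y^{\dir}_\psi) \in I$ or $\val^{\pi}_i(x^{\dir'}_\psi) \in I$; the definition of the deterministic clock valuation then matches exactly the semantic ``nearest position where $\psi$ holds along direction $\dir$'' condition.

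\textbf{Until/since and the role of fairness.} The main obstacle is the until and since modalities, where the atom condition only provides the \emph{local} fixpoint unfolding (e.g., $\psi_1\Until^{\dir}\psi_2 \in A$ iff $\psi_2 \in A$ or $\{\psi_1, \Next^{\dir}(\psi_1\Until^{\dir}\psi_2)\} \subseteq A$). The direction ``$\psi_1\Until^{\dir}\psi_2 \in A_i \Rightarrow (w,i)\models\psi_1\Until^{\dir}\psi_2$'' is the delicate one: unfolding alone only guarantees an infinite chain along which $\psi_1$ holds and the until-formula persists, but does not by itself force $\psi_2$ to eventually hold, so I must rule out the spurious infinite postponement using fairness. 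For $\Until^{\Global}$, fairness condition (ii) forces infinitely many positions where $\psi_2$ or $\neg(\psi_1\Until^{\Global}\psi_2)$ holds, which combined with the unfolding pins down a finite witness. For $\Until^{\abs}$, this is harder because the fixpoint runs along a \MAP\ which may be infinite; here I would use the intended meaning of $p_\infty$ (argued in the text: $p_\infty$ marks exactly the positions on the unique infinite \MAP\ beyond its start) together with fairness condition (iii) to conclude $\psi_2$ holds at some later \MAP-position. The converse directions and the since cases are symmetric but do not need fairness, since ``since'' looks into the bounded past and is grounded by Initial consistency at position $0$. Finally, the bijection claim follows by showing $\Proj_\varphi$ is injective (each fair Hintikka sequence is determined by its underlying timed word over $\Sigma_\Prop$, since all subformula-membership is forced by the equivalence just proved, and the $p_\infty$ values are forced by the word structure) and surjective (given any infinite timed word over $\Sigma_\Prop$, define $A_i$ to contain exactly the subformulas true at $i$ plus the correct $p_\infty$ value, and verify all Hintikka and fairness conditions hold); the last sentence about initialization then follows immediately, since $\varphi \in A_0$ iff $(w,0)\models\varphi$ iff $w$ is a model of $\varphi$.
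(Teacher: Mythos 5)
Your overall strategy matches the paper's: the paper proves the statement via two lemmas in Appendix~\ref{APP:TimedHintikkaSequence}, a truth lemma established by structural induction (with the fairness clauses discharging the liveness obligations of the global and abstract until modalities) and a bijection lemma whose surjectivity direction constructs the atoms from the truth values at each position exactly as you describe. Your handling of the base cases, Boolean connectives, next/previous and real-time modalities, and the use of fairness clause~(ii) for $\Until^{\Global}$ all line up with the paper's argument.

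The one genuine gap is in your treatment of $p_{\infty}$. For the soundness direction of $\Until^{\abs}$ you invoke ``the intended meaning of $p_{\infty}$ (argued in the text)'', i.e., that along a fair Hintikka sequence $p_{\infty}\notin A_i$ exactly when $i$ has a caller whose matching return is defined, so that beyond its starting point the unique infinite \MAP\ coincides with the set of $p_{\infty}$-positions. But this is not part of the definition of a fair Hintikka sequence, and the discussion in the main text only concerns the \emph{intended} labelling of a timed word, not an arbitrary sequence satisfying Definition~\ref{Def:HintikkaSequence}. One must prove that the purely local propagation constraints on $p_{\infty}$ in Property~3 of Definition~\ref{Def:HintikkaSequence}, combined with fairness clause~(i) (infinitely many $p_{\infty}$-positions), force this global characterization; the paper does this as Property~1 of Lemma~\ref{lemma:FirstHintikkaSequence}, by a case analysis on the \MAP\ through position $i$ (finite ending in an unmatched call, finite ending just before a return, or infinite --- the last case being exactly where fairness clause~(i) is essential, since the local rules alone would also admit the labelling in which $p_{\infty}$ fails everywhere on the infinite \MAP). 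This claim has to be established \emph{before} your structural induction, since the $\Until^{\abs}$ case depends on it. A symmetric (easier) verification is also hidden in your surjectivity step: checking that the constructed sequence satisfies fairness clauses~(i) and~(iii) requires splitting on whether $\sigma$ has an infinite \MAP\ or instead infinitely many unmatched calls or returns, which is the only non-routine part of ``verify all Hintikka and fairness conditions hold.''
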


A proof of Proposition~\ref{prop:TimedHintikkaSequence} is given in Appendix~\ref{APP:TimedHintikkaSequence}. The notion of initialized fair Hintikka sequence can be easily captured by a generalized B\"{u}chi \ECNA.

\begin{theorem}\label{theo:FromECNTLtoECNA} Given an \ECNTL\ formula $\varphi$, one can construct in singly exponential time a generalized B\"{u}chi \ECNA\ $\Au_\varphi$
 having $  2^{O(|\varphi|)}$ states, $2^{O(|\varphi|)}$ stack symbols, a set of constants $\Const_\varphi$, and $O(|\varphi|)$ clocks. If $\varphi$ is non-recursive, then
 $\Au_\varphi$   accepts the infinite models of $\varphi$; otherwise, $\Au_\varphi$  accepts the set of initialized fair Hintikka sequences of $\varphi$.
 \end{theorem}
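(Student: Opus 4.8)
The plan is to construct the generalized B\"uchi \ECNA\ $\Au_\varphi$ whose control states are exactly the $\varphi$-atoms (possibly enriched with a small amount of bookkeeping), and whose runs are forced to coincide with the initialized fair Hintikka sequences of $\varphi$. The guiding principle is that Definition~\ref{Def:HintikkaSequence} and the fairness constraints were deliberately phrased as \emph{local} conditions relating consecutive positions, matching-call/matching-return pairs, and clock valuations at a single position — precisely the three mechanisms (internal/push/pop transitions, stack matching, and clock constraints $\theta$) available to an \ECNA. So the construction is essentially a dictionary translation, and the bulk of the work is checking that each clause can indeed be realized by the transition relation.

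First I would fix the state space $Q$ to be the set of $\varphi$-atoms, with $Q_0$ the atoms $A$ satisfying the initial-consistency clause (no $\Prev^{\dir}\psi$ present) and, for the initialized version, containing $\varphi$. The real-time requirements (clause~4 of Definition~\ref{Def:HintikkaSequence}) are handled directly by the clock constraints: on a transition read at an atom $A_i$, the guard $\theta$ asserts, for every $\PrevClock^{\dir}_I\psi\in\Cl(\varphi)$, that $x^{\dir}_\psi\in I$ iff $\PrevClock^{\dir}_I\psi\in A_i$ (and symmetrically for the predictor clocks $y^{\dir}_\psi$ and the $\NextClock$ modalities), where the clocks are those associated with the subformulas $\psi\in\Cl(\varphi)$ over $\Sigma_{\Cl(\varphi)}$. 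Since the deterministic clock valuation $\val^{\pi}_i$ over $C_{\Cl(\varphi)}$ is completely determined by $\pi$, imposing $\val^{\pi}_i\models\theta$ is exactly the constraint that the membership of the real-time formulas in $A_i$ agree with the true clock values, which is what clause~4 demands. The global next/previous predicate $\NextPrev(A_i,A_{i+1})$ of clause~2 is a condition on consecutive atoms and is enforced uniformly on every transition (call, return, or internal) connecting $A_i$ to $A_{i+1}$.

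The delicate part is clause~3, the abstract and caller requirements, because these relate an atom to its abstract successor, which for a call is the (possibly far away) matching return rather than the next position. Here I would exploit the visibly pushdown discipline: on a push transition reading a call at $A_i$, the automaton pushes onto the stack a symbol encoding the abstract obligations of $A_i$ — namely enough of $A_i$ to later check $\AbsNextPrev(A_i,A_j)$ and the $p_\infty$-agreement at the matching return $j$, together with the caller set $\{\Prev^{\caller}\psi\mid\psi\in A_i\}$ that must propagate to the interior positions. On the corresponding pop transition reading the matching return at $A_j$, the popped symbol lets the automaton verify $\AbsNextPrev(A_i,A_j)$ and the $p_\infty$ condition. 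The two remaining subcases of clause~3 (a call with no matching return, and an internal/return position that is not inside a call, i.e.\ $\call\notin A_i$) are local and go into the internal and pop transition guards, including the special handling of $\Next^{\abs}\true$, $\Prev^{\abs}\true$ and the $\NULL$ value via the clock constraint $x^{\abs}_\psi\in\{\NULL\}$ or $y^{\abs}_\psi\in\{\NULL\}$. Finally, the generalized B\"uchi family $\mathcal F$ is built with one component per condition in the fairness definition: a component forcing infinitely many $p_\infty$-atoms, one component per global until $\psi_1\Until^{\Global}\psi_2$ collecting atoms with $\{\psi_2,\neg(\psi_1\Until^{\Global}\psi_2)\}\cap A\neq\emptyset$, and one per abstract until additionally intersecting the $p_\infty$-atoms.

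I expect the main obstacle to be the stack bookkeeping for the abstract modalities in the case $\call\in A_i$, where three distinct pieces of information must be threaded correctly through the matching call/return pair: the abstract-next/previous agreement $\AbsNextPrev(A_i,A_j)$, the propagation of the caller set $\Caller(A_{i+1})$ to the body of the procedure, and the coupling between $\Next^{\abs}\true\in A_i$, $p_\infty\notin A_{i+1}$, and whether the \MAP\ through $i$ is finite or infinite. The correctness argument then amounts to showing, via Proposition~\ref{prop:TimedHintikkaSequence}, that a timed word $\pi$ over $\Sigma_{\Cl(\varphi)}$ is accepted by $\Au_\varphi$ iff it is an initialized fair Hintikka sequence of $\varphi$; in the non-recursive case one further observes that the clock constraints only ever reference clocks of propositions $p\in\Prop$, so the projection $\Proj_\varphi$ can be absorbed into the automaton and $\Au_\varphi$ directly accepts the models of $\varphi$. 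The size bounds are then immediate: atoms are subsets of $\Cl(\varphi)$ with $|\Cl(\varphi)|=O(|\varphi|)$, giving $2^{O(|\varphi|)}$ states and stack symbols; the clocks are the five per formula in $\Cl(\varphi)$, i.e.\ $O(|\varphi|)$; and the constants are exactly $\Const_\varphi$, with the whole construction carried out in singly exponential time.
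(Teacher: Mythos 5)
Your construction is essentially the paper's: atoms as states, the input symbol forced to coincide with the source state, the call atom pushed onto the stack so that $\AbsNextPrev$ and the $p_{\infty}$-agreement can be verified at the matching return, clock constraints over $C_{\Cl(\varphi)}$ mirroring the real-time clause, and one generalized B\"{u}chi component per fairness condition, with the non-recursive case handled by projecting onto $\Prop$. The one step you flag as an ``expected obstacle'' but do not discharge is precisely the single point the paper argues explicitly: the clause of Property~3 requiring that, at a call position $i$, $\Next^{\abs}\true\in A_i$ iff the matching return of $i$ exists. For a matched call this is caught by the pop transition (since $\true$ belongs to every atom, $\AbsNextPrev(A_i,A_j)$ forces $\Next^{\abs}\true\in A_i$); but for an \emph{unmatched} call no pop transition ever fires, so a spurious $\Next^{\abs}\true\in A_i$ cannot be detected locally. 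Soundness there follows because the transition rules then propagate $\neg p_{\infty}$ to every position after $i$, contradicting the $F_{\infty}$ component of the acceptance condition. Since the machinery you already set up ($F_{\infty}$ together with the $p_{\infty}$-propagation constraints on transitions) is exactly what this argument uses, this is a completion of your plan rather than a flaw in it.
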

\begin{proof}
We first build a  generalized B\"{u}chi  \ECNA\ $\Au_\varphi$ over $\Sigma_{\Cl(\varphi)}$ accepting the set of initialized fair Hintikka sequences of $\varphi$.
 The set of $\Au_\varphi$ states is the set of atoms of $\varphi$, and a state $A_0$ is initial if $\varphi\in A_0$ and $A_0$ satisfies Property~1 (initial consistency) in Definition~\ref{Def:HintikkaSequence}.
In the transition function, we require that the input symbol coincides with the source state in such a way that in a run, the sequence of control states corresponds to the untimed part of the input.
By the transition function, the automaton checks that the input word is an Hintikka sequence. In particular, for the abstract next and abstract previous requirements (Property~3 in Definition~\ref{Def:HintikkaSequence}),
whenever the input symbol $A$ is a call, the automaton pushes on the stack the atom $A$. In such a way, on reading the matching return $A_r$ (if any) of the call $A$,  the automaton pops $A$ from the stack and can locally check that $\AbsNextPrev(A,A_r)$ holds. In order to ensure the real-time requirements (Property~4 in Definition~\ref{Def:HintikkaSequence}), $\Au_\varphi$ simply uses the  recorder clocks and   predictor clocks: a transition having as source state an atom  $A$ has a clock constraint whose set of atomic constraints has the form \vspace{-0.1cm}
\[
\displaystyle{\bigcup_{\PrevClock_I^{\dir}\psi\in A}\{x_{\psi}^{\dir}\in I\}\cup\bigcup_{\neg\PrevClock_I^{\dir}\psi\in A}\{x_{\psi}^{\dir}\in  \widehat{I}\}\cup
\bigcup_{\NextClock_I^{\dir}\psi\in A}\{y_{\psi}^{\dir}\in I\}\cup\bigcup_{\neg\NextClock_I^{\dir}\psi\in A}\{y_{\psi}^{\dir}\in \widehat{I}\}}\vspace{-0.2cm}
\]
where $\widehat{I}$ is either $\{\NULL\}$ or a \emph{maximal} interval over $\RealP$ disjunct from $I$.
Finally, the generalized B\"{u}chi acceptance condition is exploited for checking that the input initialized Hintikka sequence is fair. Details of the construction of
$\Au_\varphi$ can be found in Appendix~\ref{app:FromECNTLtoECNA}. Note that $\Au_\varphi$ has $2^{O(|\varphi|)}$ states and stack symbols, a set of constants $\Const_\varphi$, and $O(|\varphi|)$ event clocks.
If $\varphi$ is non-recursive, then the effective clocks are only associated with propositions in $\Prop$. Thus, by projecting the input symbols of the transition function of $\Au_\varphi$ over $\Prop$, by Proposition~\ref{prop:TimedHintikkaSequence},
we obtain a generalized B\"{u}chi \ECNA\ accepting the infinite models of $\varphi$.
\end{proof}

We now deduce the main result of this section.

\begin{theorem}\label{theorem:complexityResultsECNTL} Given an \ECNTL\ formula $\varphi$  over $\Sigma_\Prop$, one can construct in singly exponential time a \VPTA,
with $2^{O(|\varphi|^{3})}$ states and stack symbols, $O(|\varphi|)$ clocks, and a set of constants $\Const_\varphi$, which accepts $\TLang(\varphi)$ (resp., $\TLangInf(\varphi)$). Moreover,
   satisfiability and visibly  model-checking for
 \ECNTL\ over finite (resp., infinite) timed words are \EXPTIME-complete.
 \end{theorem}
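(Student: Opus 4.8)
The plan is to obtain the \VPTA\ by composing Theorem~\ref{theo:FromECNTLtoECNA} with the known translation of \ECNA\ into \VPTA\ from~\cite{BMP18}, and then to reduce \emph{both} decision problems to \emph{emptiness} of \VPTA\ (never to language inclusion, which is undecidable). By Theorem~\ref{theo:FromECNTLtoECNA} I have, in singly exponential time, a generalized B\"{u}chi \ECNA\ $\Au_\varphi$ over $\Sigma_{\Cl(\varphi)}$ with $2^{O(|\varphi|)}$ states and stack symbols, $O(|\varphi|)$ clocks and constants $\Const_\varphi$, accepting the initialized fair Hintikka sequences (the models directly, in the non-recursive case). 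Since acceptance happens over the extended alphabet and the event clocks $x_\psi,y_\psi,x^{\abs}_\psi,\dots$ reference the guessed atom content, I cannot simply project to $\Prop$ and stay within \ECNA; instead I apply the generic \ECNA-to-\VPTA\ translation, which replaces each event clock by ordinary resettable clocks plus guessing machinery (recorder clocks reset at the witnessing positions, predictor clocks guessed and verified at the next witnessing position, abstract/caller obligations carried on the stack). Because the atom content is now maintained internally in the finite control and stack, I may finally project the visible alphabet onto $\Prop$; the bijection of Proposition~\ref{prop:TimedHintikkaSequence} guarantees the projected \VPTA\ accepts exactly $\TLangInf(\varphi)$ (resp.\ $\TLang(\varphi)$). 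A standard counter gadget converts the generalized B\"{u}chi condition into a single B\"{u}chi condition with polynomial overhead.

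The cubic size bound is where the most delicate (but routine) accounting lies: starting from $2^{O(|\varphi|)}$ control states and $O(|\varphi|)$ event clocks, the \ECNA-to-\VPTA\ construction must remember, in the control and on the stack, the pending predictor obligations and their association with the $O(|\varphi|)$ clocks across the $2^{O(|\varphi|)}$ stack symbols. This multiplies the state and stack spaces and yields $2^{O(|\varphi|^{3})}$ states and stack symbols while keeping the clock count at $O(|\varphi|)$ and the constant set equal to $\Const_\varphi$.

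For the complexity bounds I would argue as follows. \textbf{Upper bounds.} Satisfiability reduces to nonemptiness of $\Au_\varphi$: $\varphi$ has a model iff there is an initialized fair Hintikka sequence iff $\Au_\varphi$ is nonempty. Nonemptiness of a (generalized B\"{u}chi) \ECNA\ is decidable via the region abstraction of the $O(|\varphi|)$ clocks (region count exponential in the number of clocks and in the encoding of $\Const_\varphi$) followed by B\"{u}chi pushdown reachability; all factors are singly exponential in the size of $\varphi$, so satisfiability is in \EXPTIME\ (the finite-word case is analogous, replacing B\"{u}chi acceptance by reachability of a final state). For visibly model-checking of a \VPTA\ $\mathcal{B}$ against $\varphi$, I crucially exploit that \ECNTL\ is closed under negation: $\TLangInf(\mathcal{B})\subseteq\TLangInf(\varphi)$ iff $\TLangInf(\mathcal{B})\cap\TLangInf(\neg\varphi)=\emptyset$. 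I build the \ECNA\ $\Au_{\neg\varphi}$, form the synchronized product with $\mathcal{B}$ reading the same timed word (the product is a \VPTA, its clock set the disjoint union of the resettable clocks of $\mathcal{B}$ and the event clocks of $\Au_{\neg\varphi}$), and test this product \VPTA\ for emptiness via the region construction; the cost is singly exponential in $|\mathcal{B}|+|\varphi|$, hence \EXPTIME.

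\textbf{Lower bounds.} Both problems are \EXPTIME-hard already for the untimed context-free fragment: \CARET\ is the syntactic fragment of \ECNTL\ obtained by dropping the real-time operators, and satisfiability of \CARET\ and model-checking of \VPA\ (a special case of \VPTA) against \CARET\ are \EXPTIME-hard~\cite{AlurEM04}, so the bounds transfer verbatim and give \EXPTIME-completeness. The main obstacle, besides the cube-size bookkeeping, is keeping the model-checking reduction on the decidable side: one must complement inside the logic and decide emptiness of a \emph{product} rather than inclusion, verifying that the product of a resettable-clock \VPTA\ with an event-clock \ECNA\ is a genuine \VPTA\ amenable to the region construction.
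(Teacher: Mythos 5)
Your proposal follows essentially the same route as the paper: Theorem~\ref{theo:FromECNTLtoECNA} followed by the \ECNA-to-\VPTA\ translation of~\cite{BMP18} (whose $2^{O(|\varphi|^{2}\cdot k)}$ blow-up with $k=O(|\varphi|)$ atomic constraints accounts for the cubic exponent) and projection onto $\Prop$ gives the required \VPTA; satisfiability reduces to \VPTA\ emptiness, visibly model checking is handled by complementing the formula and testing emptiness of an intersection, and the lower bounds come from \CARET. The only cosmetic difference is that you speak of applying the region abstraction to the event clocks of $\Au_{\neg\varphi}$ and of taking the product of a \VPTA\ with an \ECNA\ directly; in both places one should first pass to the \VPTA\ $\Au'_{\neg\varphi}$ with ordinary resettable clocks (as you already do for the first part of the statement, and as is needed anyway since predictor clocks are not amenable to a direct region construction) and then use polynomial-time closure of B\"{u}chi \VPTA\ under intersection, exactly as in the paper.
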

 \begin{proof} We focus on the case of infinite timed words. Fix an \ECNTL\ formula $\varphi$  over $\Sigma_\Prop$. By Theorem~\ref{theo:FromECNTLtoECNA},
 one can construct a generalized B\"{u}chi \ECNA\ $\Au_\varphi$ over $\Sigma_{\Cl(\varphi)}$
 having $  2^{O(|\varphi|)}$ states and stack symbols, a set of constants $\Const_\varphi$, and  accepting the set of initialized fair Hintikka sequences of $\varphi$.
 By~\cite{BMP18}, one can construct a generalized  B\"{u}chi \VPTA\ $\Au'_\varphi$ over $\Sigma_{\Cl(\varphi)}$ accepting $\TLangInf(\Au_\varphi)$, having
 $2^{O(|\varphi|^{2}\cdot k)}$ states and stack symbols, $O(k)$ clocks, and a set of constants $\Const_\varphi$,  where $k$ is the number of atomic constraints used by $\Au_\varphi$.
 Note that $k=O(|\varphi|)$. Thus, by projecting the input symbols of the transition function of $\Au'_\varphi$ over $\Prop$, we obtain a (generalized B\"{u}chi) \VPTA\ satisfying the first part of
 Theorem~\ref{theorem:complexityResultsECNTL}.

 For the upper bounds of the second part of  Theorem~\ref{theorem:complexityResultsECNTL},  observe that by~\cite{BouajjaniER94,AbdullaAS12} emptiness of generalized B\"{u}chi \VPTA\  is  solvable in time $O(n^{4} \cdot 2^{O(m\cdot \log K m)})$, where
$n$ is the number of states, $m$ is the number of clocks, and $K$ is the largest constant used in the clock constraints of the automaton (hence, the time complexity is polynomial in the number of states).
Now, given a B\"{u}chi \VPTA\ $\Au$ over $\Sigma_\Prop$ 
and an \ECNTL\ formula $\varphi$ over $\Sigma_\Prop$,   model-checking
$\Au$ against $\varphi$ reduces to check emptiness of 
$\TLangInf(\Au)\cap  \TLangInf(\Au'_{\neg\varphi})$, where $\Au'_{\neg\varphi}$ is the generalized B\"{u}chi \VPTA\
associated with $\neg\varphi$. Thus, since B\"{u}chi \VPTA\ are polynomial-time closed under intersection, membership in \EXPTIME\ for   satisfiability and visibly  model-checking of
 \ECNTL\ follow. The matching lower bounds 
 follow  from \EXPTIME-completeness of satisfiability and visibly model-checking for the logic \CARET\ \cite{AlurEM04} which is subsumed by \ECNTL. 
 \end{proof} 
\section{Nested Metric Temporal Logic (\NMTL)}

Metric temporal logic (\MTL) \cite{Koymans90} is a well-known  timed linear-time temporal logic which extends \LTL\ with time
constraints on until modalities. In this section, we introduce an extension of \MTL\ with past, we call \emph{nested} \MTL\ (\NMTL, for short), by means of timed versions of the
\CARET\ modalities.

For the given set $\Prop$ of atomic propositions containing the special propositions $\call$, $\ret$, and $\intA$, the syntax of
nested \NMTL\ formulas $\varphi$ is as follows:\vspace{-0.1cm}
\[
\varphi:= \top \DefORmini
p \DefORmini
\varphi \vee \varphi \DefORmini
\neg\,\varphi      \DefORmini
\varphi\,\StrictUntil^{\dir}_I\varphi  \DefORmini
\varphi\,\StrictSince^{\dir'}_I\varphi \vspace{-0.05cm}
\]
where $p\in \Prop$, $I$ is an interval in $\RealP$ with endpoints in $\Nat\cup\{\infty\}$, $\dir\in \{\Global,\abs\}$ and $\dir'\in\{\Global,\abs,\caller\}$. The operators
$\StrictUntil^{\Global}_I$  and $\StrictSince^{\Global}_I$ are the standard \emph{timed  until and timed since} \MTL\ modalities, respectively,
 $\StrictUntil^{\abs}_I$  and
$\StrictSince^{\abs}_I$  are their
non-regular abstract versions, and  $\StrictSince_I^{\caller}$  is the non-regular caller version  of $\StrictSince^{\Global}_I$.
\MTL\ with past  
is the fragment of  \NMTL\ obtained by disallowing the timed  abstract and caller modalities, while
standard \MTL\ or future \MTL\ is the fragment of \MTL\ with past where the global  timed since   modalities are disallowed.
For an \NMTL\ formula $\varphi$, a  timed word $w=(\sigma,\tau)$ over $\Sigma_\Prop$  and 
$0\leq  i< |w|$, the satisfaction relation
$(w,i)\models\varphi$ is  defined as follows
(we omit the clauses for  propositions and Boolean connectives):
\vspace{-0.1cm}
\[ \begin{array}{ll}
   (w,i)\models \varphi_1 \StrictUntil_I^{\dir}\varphi_2 &
              \Leftrightarrow\,  \textrm{there is   }  j>  i \text{ s.t. }j\in \Pos(\dir,\sigma,i),\, (w,j)\models \varphi_2,\,\tau_j-\tau_i\in I,
       \\ & \phantom{\Leftrightarrow}\,\,  \text{ and }     (w,k)\models \varphi_1  \text{ for all } k\in [i+1,j-1]\cap \Pos(\dir,\sigma,i)  \\
   (w,i)\models \varphi_1 \StrictSince_I^{\dir'}\varphi_2 &
                 \Leftrightarrow\,  \textrm{there is   }  j<  i \text{ s.t. }j\in \Pos(\dir',\sigma,i),\, (w,j)\models \varphi_2,\,\tau_i-\tau_j\in I,
       \\ & \phantom{\Leftrightarrow}\,\,  \text{ and }     (w,k)\models \varphi_1  \text{ for all } k\in [j+1,i-1]\cap \Pos(\dir',\sigma,i)
\end{array}\vspace{-0.1cm}
\]
\noindent In the following, we use some derived operators in \NMTL:
\begin{compactitem}
  \item For $\dir\in \{\Global,\abs\}$,  $\StrictEventually^{\dir}_I \varphi := \top \, \StrictUntil^{\dir}_I\varphi $ 
  and   $\StrictAlways^{\dir}_I \varphi:= \neg\StrictEventually^{\dir}_I\neg \varphi $
  \item for $\dir\in\{\Global,\abs,\caller\}$,  $\StrictPastEventually^{\dir}_I \varphi:= \top\, \StrictSince^{\dir}_I\varphi $ 
  and $\StrictPastAlways^{\dir}_I \varphi:= \neg\StrictPastEventually^{\dir}_I\neg \varphi $. 
\end{compactitem}\vspace{0.1cm}

Let  $\INTS$ be the set of \emph{nonsingular} intervals $J$ in $\RealP$ with endpoints in $\Nat\cup\{\infty\}$ such that either $J$ is unbounded, or $J$
  is left-closed with left endpoint $0$. Such intervals $J$ can be replaced by expressions of the form $\sim c$ for some $c\in\Nat$
  and $\sim\in\{<,\leq,>,\geq\}$.  We focus on the following two fragments of  \NMTL:
\begin{compactitem}
  \item \NMITLS: obtained by allowing only intervals in $\INTS$.
  \item \emph{Future} \NMTL: obtained by disallowing the variants of timed since modalities.
\end{compactitem}\vspace{0.1cm}

It is known that for the considered pointwise semantics, \MITLS\ \cite{AlurFH96} (the fragment of \MTL\ allowing only intervals in $\INTS$) and \ECTL\ are equally expressive~\cite{RaskinS99}. Here, we easily
generalize such a result to the nested extensions
of \MITLS\ and \ECTL.

\newcounter{lemma-globalEquivNMTL-ECNTL}
\setcounter{lemma-globalEquivNMTL-ECNTL}{\value{lemma}}

\begin{lemma}\label{lemma:globalEquivNMTL-ECNTL} There exist effective linear-time translations   from \ECNTL\ into \NMITLS, and vice versa.
\end{lemma}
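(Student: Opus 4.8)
The plan is to exhibit two syntax-directed translations that preserve the satisfaction relation pointwise, and then argue that each is linear in the size of the input formula. The crucial observation is that both logics are interpreted over the same pointwise structure: a position $i$ of a timed word $w=(\sigma,\tau)$, with the same notion of directional successor sets $\Pos(\dir,\sigma,i)$ for $\dir\in\{\Global,\abs,\caller\}$. Since the two logics differ only in \emph{how} they bundle the temporal step with the metric constraint, the translations should be completely local and compositional.

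First I would treat the direction from \NMITLS\ into \ECNTL. The only nontrivial constructs are the metric modalities $\StrictUntil^{\dir}_I$ and $\StrictSince^{\dir'}_I$, and here I would reuse the classical pointwise equivalence between \MITLS\ and \ECTL\ from~\cite{RaskinS99} but applied along each direction $\dir$. The idea is that a nonsingular interval constraint of the form $\sim c$ can be encoded by combining the qualitative \CARET-style operators $\Until^{\dir}$, $\Next^{\dir}$ with a single application of an event-clock modality $\NextClock^{\dir}_I$ or $\PrevClock^{\dir'}_I$ guarding the appropriate subformula. Concretely, for a lower-bound constraint one asserts that $\varphi_2$ holds at the first future $\dir$-position satisfying $\varphi_2$ whose clock value lies in $I$, while $\varphi_1$ holds up to then; for an upper bound one uses the dual. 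Because the intervals are restricted to $\INTS$, the boundary cases that normally force singular-interval blowups do not arise, so each \NMITLS\ operator is replaced by a constant-size \ECNTL\ formula over the same direction, giving a linear bound. The symmetric since case reuses $\PrevClock^{\dir'}_I$ and $\Since^{\dir'}$.

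For the converse direction, from \ECNTL\ into \NMITLS, I would translate each \ECNTL\ construct directly. The \CARET-style qualitative operators $\Next^{\dir}$, $\Prev^{\dir'}$, $\Until^{\dir}$, $\Since^{\dir'}$ are recovered as the instances of the metric operators with the trivial interval $[0,\infty)$ (which lies in $\INTS$), since e.g.\ $\Next^{\dir}\varphi$ is expressible using $\StrictUntil^{\dir}_{[0,\infty)}$ together with a "first step" idiom, and $\Until^{\dir}$ maps to $\StrictUntil^{\dir}_{[0,\infty)}$ verbatim. The event-clock modalities $\NextClock^{\dir}_I$ and $\PrevClock^{\dir'}_I$ are the genuinely metric ones: here I would express $\NextClock^{\dir}_I\varphi$, which speaks of the \emph{first} future $\dir$-position where $\varphi$ holds and requires its delay to lie in $I$, by the \NMITLS\ formula asserting $(\neg\varphi)\,\StrictUntil^{\dir}_I\varphi$ suitably adjusted so that the until witnesses exactly the first occurrence of $\varphi$. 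Again each replacement is constant size, yielding a linear translation.

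The main obstacle I expect is the faithful handling of the "first occurrence" semantics that distinguishes the event-clock operators from the metric until. The operator $\NextClock^{\dir}_I\varphi$ constrains the delay to the \emph{nearest} future $\dir$-position satisfying $\varphi$, whereas $\StrictUntil^{\dir}_I$ permits any witness. Bridging this gap requires the $(\neg\varphi)\,\StrictUntil$ idiom together with care at the interval endpoints, and it is precisely the restriction to nonsingular intervals in $\INTS$ that makes this encoding sound in the pointwise setting—singular intervals would break the equivalence, as is well known already for the flat \MITLS/\ECTL\ case. Once the per-operator equivalences are verified by a routine structural induction on the formula (using that both semantics refer identically to $\Pos(\dir,\sigma,i)$ and to the timestamps $\tau_j-\tau_i$), the linear size bounds follow immediately because each operator is replaced by a fixed-size pattern over the same direction, and the proof is complete.
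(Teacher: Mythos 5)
Your overall architecture (two compositional, per-operator translations verified by structural induction over the same pointwise semantics) matches the paper's, and some clauses are right: $\Next^{\dir}$ and $\Until^{\dir}$ do become $\StrictUntil^{\dir}_{\geq 0}$ instances, and the first-occurrence semantics of $\NextClock^{\dir}_I\varphi$ is indeed bridged by the $(\neg\varphi)\,\StrictUntil^{\dir}_I\varphi$ idiom. But there are two genuine gaps. The first is in the direction from \NMITLS\ into \ECNTL: your treatment of lower-bound constraints does not work as described. You propose to assert that ``$\varphi_2$ holds at the first future $\dir$-position satisfying $\varphi_2$ whose clock value lies in $I$,'' but no \ECNTL\ operator expresses ``the first position \emph{within $I$} satisfying $\varphi_2$'': the modality $\NextClock^{\dir}_{I}\varphi_2$ constrains the delay of the globally first future $\varphi_2$-position, and the natural candidate $\Next^{\dir}(\varphi_1\Until^{\dir}\varphi_2)\wedge\NextClock^{\dir}_{>c}\varphi_2$ is unsound. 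Concretely, with $\varphi_1=\top$ and $\varphi_2=p$, if $p$ holds at delays $0.5$ and $1.5$ from the current position, then $\top\,\StrictUntil^{\Global}_{>1}\,p$ holds (witnessed by the second occurrence) while $\NextClock^{\Global}_{>1}p$ fails. The paper closes this with a formula of a different shape, $\StrictAlways^{\dir}_{\leq c}\bigl(\varphi_1\wedge\Next^{\dir}(\varphi_1\Until^{\dir}\varphi_2)\bigr)\wedge\Next^{\dir}(\varphi_1\Until^{\dir}\varphi_2)$, which keeps the untimed until pending throughout the prefix of duration $\leq c$ and thereby pushes some witness beyond $c$; this extra idea is absent from your sketch (only the upper-bound case $\prec c$ reduces to $\Next^{\dir}(\varphi_1\Until^{\dir}\varphi_2)\wedge\NextClock^{\dir}_{\prec c}\varphi_2$ as you suggest).

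The second gap is in the converse direction. The interval $I$ in $\NextClock^{\dir}_I\varphi$ is an \emph{arbitrary} interval with endpoints in $\Nat\cup\{\infty\}$ --- possibly singular or bounded away from $0$ --- so $(\neg\varphi)\,\StrictUntil^{\dir}_I\varphi$ is in general not an \NMITLS\ formula. You need the further observation that, since the constrained position is uniquely determined (it is the first future $\dir$-occurrence of $\varphi$), one may split $I$ into a conjunction $\NextClock^{\dir}_{L(I)}\varphi\wedge\NextClock^{\dir}_{R(I)}\varphi$ with $L(I),R(I)\in\INTS$, after which each conjunct has the form $\sim c$ and the until idiom applies. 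Your closing remark that singular intervals ``would break the equivalence'' gets this backwards: singularity is fatal for $\StrictUntil_I$ (which is exactly why \NMITLS\ excludes it, and why the paper's Theorem~\ref{theorem:undecidability} holds), but it is harmless for $\NextClock_I$ precisely because of the uniqueness of the witness. As written, your translation neither lands in \NMITLS\ for such $I$ nor explains why it could be made to.
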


A proof of Lemma~\ref{lemma:globalEquivNMTL-ECNTL} is in Appendix~\ref{APP:globalEquivNMTL-ECNTL}.  By
Lemma~\ref{lemma:globalEquivNMTL-ECNTL}
and Theorem~\ref{theorem:complexityResultsECNTL}, we obtain the following result.

 \begin{theorem} \ECNTL\ and \NMITLS\ are expressively equivalent. Moreover, satisfiability and visibly  model-checking for
 \NMITLS\ over finite (resp., infinite) timed words are \EXPTIME-complete.
 \end{theorem}



In the considered pointwise semantics setting, it is well-known that satisfiability of  \MTL\ with past is undecidable~\cite{AlurH93,OuaknineW06}. Undecidability already holds for future \MTL\ interpreted
over infinite timed words~\cite{OuaknineW06}. However, over finite timed words, satisfiability of future \MTL\ is instead decidable~\cite{OuaknineW07}. Here, we show that over finite timed words, the addition of the future abstract timed modalities to future
\MTL\   makes the satisfiability problem undecidable.

\begin{theorem}\label{theorem:undecidability} Satisfiability of future \NMTL\ interpreted over finite timed words is undecidable.
\end{theorem}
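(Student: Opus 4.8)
The plan is to prove undecidability by a reduction from the halting problem for two-counter (Minsky) machines, following the standard template for \MTL\ undecidability results but exploiting the new future abstract modalities $\StrictUntil^{\abs}_I$ to encode counter values via the nested structure of the pushdown alphabet. First I would fix a deterministic two-counter machine $M$ with instructions labelled over a finite set, and design an encoding of a halting computation of $M$ as a finite timed word over a suitable $\Sigma_\Prop$. The untimed skeleton will be well-matched: each configuration of $M$ is represented by a block, and the two counter values $c_1,c_2$ are encoded by the \emph{number of matched call/return pairs} (or by a collection of unit-spaced events) occurring inside a configuration block, so that the abstract successor relation $\SUCC(\abs,\sigma,\cdot)$ skips over the nested substructure and links a configuration to its sequel. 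The timing discipline is the crucial ingredient: I would impose that consecutive ``cell'' events carrying a counter unit are separated by exactly one time unit, so that a counter unit in configuration $k$ at time $t$ must be matched by a corresponding unit in configuration $k+1$ at time $t+1$ (or $t+1$ shifted appropriately for increment/decrement). This ``copy the value to the next configuration with a fixed time offset'' trick is exactly what forces the encoding to be faithful and is the heart of all pointwise \MTL\ undecidability proofs.

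The key steps, in order, are: (1) specify the alphabet and the shape of a valid encoding word, introducing propositions that mark the start of a configuration, the instruction being executed, and the individual counter units for $c_1$ and $c_2$; (2) write a future \NMTL\ formula $\varphi_{\mathrm{struct}}$ asserting that the input word has the correct well-matched block structure and that the timestamps obey the required unit-distance spacing within and across blocks; (3) write $\varphi_{\mathrm{copy}}$ using $\StrictEventually^{\abs}_{=1}$-style constraints (realized via the nonstrict interval $[1,1]$, i.e.\ the singleton interval $I=\{1\}$, together with $\StrictUntil^{\abs}_I$) to enforce, for each instruction type, the correct relation between the multiset of units of configuration $k$ and configuration $k+1$ --- equality for zero-test branches and for the untouched counter, and $\pm 1$ for increment/decrement; and (4) write $\varphi_{\mathrm{init}}$ and $\varphi_{\mathrm{halt}}$ fixing the initial configuration $(q_0,0,0)$ and requiring that the halting instruction is eventually reached. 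The reduction then sets $\varphi_M := \varphi_{\mathrm{struct}}\wedge\varphi_{\mathrm{copy}}\wedge\varphi_{\mathrm{init}}\wedge\varphi_{\mathrm{halt}}$, and I would prove that $\varphi_M$ has a finite timed model if and only if $M$ halts, which yields undecidability since the halting problem is undecidable.

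The main obstacle, and the step where the abstract modalities are essential, is step (3): enforcing the exact copying of counter values between successive configurations. Over finite timed words future \MTL\ alone is decidable, so the power must come from $\StrictUntil^{\abs}_I$. The subtlety is that a single future time-constrained eventually cannot by itself count; the argument instead relies on the interplay between the fixed unit time spacing and the \emph{pigeonhole-style} density of events. Concretely, I expect to argue that if configuration $k$ contains a unit at time $t$ then $\varphi_{\mathrm{copy}}$ forces a matching event at exactly $t+1$ (or the shifted offset) in configuration $k+1$, and conversely the spacing constraints prevent spurious extra units from being inserted without violating some unit-distance requirement; the abstract operators are what let these time constraints ``jump'' over the intervening nested blocks of the other counter and of the instruction markers, so that only the relevant counter's cells are compared. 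Making this matching argument airtight --- showing both that a genuine halting computation yields a model and, more delicately, that \emph{every} model forces a faithful computation (injectivity of the copy, no loss or duplication of units) --- is where the care is required, and it is precisely here that the nonregular abstract timing modalities do work that regular future \MTL\ provably cannot.
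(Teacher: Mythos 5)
Your overall strategy---a reduction from the halting problem for Minsky two-counter machines in which counter values are encoded in unary and propagated from one configuration to the next by punctual (distance-exactly-one) future time constraints---is the same as the paper's. The genuine gap is in your step (3), and it is exactly the step you yourself flag as delicate: ruling out \emph{insertion errors}. A future punctual constraint asserted at each counter unit of configuration $k$ forces a matching unit of configuration $k+1$ at time $t+1$, which (together with strict time monotonicity and the fixed unit time distance between configuration markers) yields an \emph{injection} from the units of configuration $k$ into those of configuration $k+1$, hence only the inequality $n'_h\geq n_h$ (resp.\ $n_h+1$). Nothing in your formula forbids configuration $k+1$ from containing extra units at times not of the form $t_i+1$: a future formula evaluated at a position of configuration $k+1$ cannot look backwards to certify that the unit there has a preimage. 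This is not a technicality---it is precisely the reason future \MTL\ over finite timed words is decidable, and a reduction that enforces only the lower bound encodes a ``gainy'' counter machine, for which halting is not undecidable in the required sense; the reduction would therefore fail. Your appeal to ``pigeonhole-style density of events'' and to the abstract modalities letting constraints ``jump over'' nested blocks does not supply the missing converse inequality, and as described your encoding (units nested \emph{inside} a configuration block, abstract successor used only to skip them) gives no mechanism for it.

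The paper closes this gap with a device your proposal does not contain: the computation is encoded as a palindromic well-matched word $(\call,\sigma)\cdot(\ret,\sigma^{R})$, where the return half is the \emph{reverse} of the call half. In the return half the reversed encoding of configuration $k+1$ \emph{precedes} that of configuration $k$, so running the very same forward punctual matching argument on the return half yields the reverse injection and hence $n'_h\leq n_h$ (resp.\ $n_h+1$). Notably, the only non-regular modality the paper needs is the \emph{untimed} $\Next^{\abs}$, used once in $\varphi_{\textit{WM}}$ to force the shape $\{\call\}^{n}\cdot\{\ret\}^{n}$ (the matching return of the first call must be the last position of the word); all the quantitative constraints are plain future \MTL. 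To repair your argument you must either adopt this reversal trick or exhibit some other concrete mechanism for the upper bound on the counters, and in either case prove, rather than assert, that spurious units are excluded.
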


We prove Theorem~\ref{theorem:undecidability} by a reduction from the halting problem
for Minsky $2$-counter machines~\cite{Minsky67}. Fix such a machine $M$
which is a tuple $M = \tpl{\Lab,\Inst,\ell_\init,\ell_\halt}$, where $\Lab$ is a finite set of labels (or program counters), $\ell_\init,\ell_\halt\in \Lab$, and $\Inst$ is a mapping assigning to
 each label $\ell\in\Lab\setminus\{\ell_\halt\} $  an instruction for either
\begin{inparaenum}[(i)]
  \item  \emph{increment}: $c_h:= c_h+1$; \texttt{goto} $\ell_r$, or
  \item  \emph{decrement}: \texttt{if} $c_h>0$ \texttt{then} $c_h:= c_h-1$; \texttt{goto} $\ell_s$ \texttt{else goto} $\ell_t$, 
\end{inparaenum}
where $h \in \{1, 2\}$,  $\ell_s\neq \ell_t$, and $\ell_r,\ell_s,\ell_t\in\Lab$.

The machine $M$ induces a transition relation $\longrightarrow$ over configurations of the form
$(\ell, n_1, n_2)$, where $\ell$ is a label of an instruction to be executed and $n_1,n_2\in\Nat$ represent
current values of counters $c_1$ and $c_2$, respectively. A  computation of $M$ is a finite sequence
$C_1\ldots C_k$ of configurations such that $C_i \longrightarrow C_{i+1}$ for all $i\in [1,k-1]$.
The machine $M$ \emph{halts} if there is a computation starting at $(\ell_\init, 0, 0)$ and leading to  configuration
$(\ell_{\halt}, n_1, n_2)$ for some $n_1,n_2\in\Nat$. The halting problem is
to decide whether a given machine $M$ halts. The problem is undecidable~\cite{Minsky67}.
  We adopt the following notation, where $\ell\in\Lab\setminus\{\ell_\halt\} $:
\begin{compactitem}
  \item (i) if $\Inst(\ell)$ is an increment instruction of the form  $c_h:= c_h+1$; \texttt{goto} $\ell_r$, define $c(\ell):= c_h$ and $\Succ(\ell):= \ell_r$;
  (ii) if $\Inst(\ell)$ is a decrement instruction  of the form \texttt{if} $c_h>0$ \texttt{then} $c_h:= c_h-1$; \texttt{goto}  $\ell_r$ \texttt{else goto} $\ell_s$,
  define $c(\ell):= c_h$, $\dec(\ell):= \ell_r$, and $\zero(\ell):= \ell_s$.
\end{compactitem}
\vspace{0.2cm}

We encode the computations of $M$ by using finite words over the pushdown alphabet $\Sigma_\Prop$,
where $\Prop = \Lab\cup \{c_1,c_2\}\cup\{\call,\ret,\intA\}$. For a finite word $\sigma=a_1 \ldots a_n$ over $\Lab\cup\{c_1,c_2\}$, we denote by $\sigma^{R}$ the reverse of $\sigma$, and by $(\call,\sigma)$
(resp., $(\ret,\sigma)$)  the finite word  over $\Sigma_\Prop$ given by $\{a_1,\call\}\ldots \{a_n,\call\}$ (resp., $\{a_1,\ret\}\ldots \{a_n,\ret\}$).
We associate to each $M$-configuration $(\ell,n_1,n_2)$ two distinct encodings: the \emph{call-code} which is the finite word over $\Sigma_\Prop$
given by $(\call,\ell c_1^{n_1} c_2^{n_2})$, and the \emph{ret-code} which is given by  $(\ret,(\ell c_1^{n_1} c_2^{n_2})^{R})$ intuitively corresponding to
the matched-return version of the call-code. A computation $\pi$ of $M$ is then represented by the well-matched word
$(\call,\sigma_\pi)\cdot (\ret,(\sigma_\pi)^{R})$, where  $\sigma_\pi$ is obtained by concatenating the call-codes of the individual configurations along $\pi$.

Formally, let $\Lang_\halt$ be the set of finite words over $\Sigma_\Prop$ of the form
$(\call,\sigma)\cdot (\ret, \sigma^{R})$ (\emph{well-matching requirement}) such that the call part $(\call,\sigma)$ satisfies:
\begin{compactitem}
  \item \emph{Consecution:} $(\call,\sigma)$ is a sequence of call-codes, and for each pair $C\cdot C'$ of adjacent call-codes,
  the associated $M$-configurations, say   $(\ell,n_1,n_2)$ and $(\ell',n'_1,n'_2)$,  satisfy: $\ell\neq \ell_\halt$ and
\begin{inparaenum}[(i)]
  \item  if $\Inst(\ell)$ is an increment instruction and $c(\ell)= c_h$, then $\ell'=\Succ(\ell)$ and $n'_h>0$;
  \item      if   $\Inst(\ell)$ is a decrement instruction and $c(\ell)= c_h$,  then
   \emph{either} $\ell'=\zero(\ell)$ and $n_h=n'_h=0$,
     \emph{or}  $\ell'=\dec(\ell)$ and $n_h>0$.
\end{inparaenum}
  \item \emph{Initialization:}  $\sigma$ has a prefix of the form $\ell_\init \cdot \ell$ for some $\ell\in \Lab$.
    \item \emph{Halting:} $\ell_\halt$ occurs along $\sigma$.
  \item For each pair $C\cdot C'$ of adjacent call-codes in $(\call,\sigma)$ 
  with $C'$ non-halting,
  the relative $M$-configura\-tions  $(\ell,n_1,n_2)$ and $(\ell',n'_1,n'_2)$  satisfy:\footnote{For technical convenience, we do not require that the counters in a configuration having as successor an halting configuration are correctly updated.}
\begin{inparaenum}[(i)]
    \item  \emph{Increment requirement:} if $\Inst(\ell)$ is an increment instruction and $c(\ell)= c_h$, then $n'_h=n_h+1$ and $n'_{3-h}=n_{3-h}$;
        \item \emph{Decrement requirement:} if   $\Inst(\ell)$ is a decrement instruction and $c(\ell)= c_h$,  then $n'_{3-h}=n_{3-h}$, and, if
    $\ell'=\dec(\ell)$, then $n'_h=n_h-1$.
\end{inparaenum}
\end{compactitem}\vspace{0.1cm}

Evidently, $M$ halts iff $\Lang_\halt\neq \emptyset$. We construct in polynomial time a future  \NMTL\ formula $\varphi_M$ over $\Prop$   such that
the set of untimed components $\sigma$ in the finite timed words $(\sigma,\tau)$ satisfying $\varphi_M$ is exactly $\Lang_\halt$.
Hence, Theorem~\ref{theorem:undecidability} directly follows. In the construction of $\varphi_M$, we exploit the future \LTL\ modalities and the abstract next modality $\Next^{\abs}$ which can be expressed in future
\NMTL.

Formally, formula $\varphi_M$ is given by
$
\varphi_M:= \varphi_{\textit{WM}} \vee \varphi_\LTL \vee \varphi_{\textit{Time}}
$
where 
$\varphi_{\textit{WM}}$ is a future \CARET\ formula ensuring the well-matching requirement;
$
\varphi_{\textit{WM}}:= \call \wedge  \Next^{\abs}(\neg\Next^{\Global}\true) \wedge \Always^{\Global}\neg\intA \wedge \neg \Eventually^{\Global}(\ret \wedge \Eventually^{\Global} \call).
$
 The conjunct $\varphi_\LTL$ is a standard future \LTL\ formula ensuring the consecution, initialization, and halting requirements. The definition of $\varphi_\LTL$ is straightforward and we omit the details of the construction.
Finally, we illustrate the construction of the conjunct $\varphi_{\textit{Time}}$ which is a future \MTL\ formula enforcing the increment and decrement requirements by means of time constraints.
Let $w$ be a finite timed word over $\Sigma_\Prop$. By the formulas $\varphi_{\textit{WM}}$ and $\varphi_\LTL$, we can assume that the untime part of $w$ is of the
form $(\call,\sigma)\cdot (\ret, \sigma^{R})$ such that the call part $(\call,\sigma)$ satisfies the consecution, initialization, and halting requirements. Then, formula
$\varphi_{\textit{Time}}$ ensures the following additional requirements:
\begin{compactitem}
  \item \emph{Strict time monotonicity: } the time distance between distinct positions is always greater than zero. This can be expressed by the formula
  $\Always^{\Global}(\neg \StrictEventually^{\Global}_{[0,0]}\top)$.
  \item \emph{1-Time distance between adjacent labels: } the time distance between the $\Lab$-positions of two adjacent $\call$-codes
  (resp., $\ret$-codes) is $1$. This can be expressed as follows:\vspace{0.1cm}

  \hspace{2.0cm}$
  \displaystyle{\bigwedge_{t\in\{\call,\ret\}} \Always^{\Global}\Bigl([t\wedge \bigvee_{\ell\in\Lab}\ell \wedge \StrictEventually^{\Global}(t\wedge \bigvee_{\ell\in\Lab}\ell)] \rightarrow \StrictEventually^{\Global}_{[1,1]}(t\wedge \bigvee_{\ell\in\Lab}\ell) \Bigr)}
   $\vspace{0.1cm}
 \item \emph{Increment and decrement requirements:} fix a $\call$-code $C$ along the call part immediately followed by some non-halting call-code $C'$. Let
 $(\ell,n_1,n_2)$ (resp., $(\ell',n'_1,n'_2)$) be the configuration encoded by $C$ (resp., $C'$), and $c(\ell)=c_h$ (for some $h=1,2$). Note that
 $\ell\neq \ell_\halt$. First, assume that  $\Inst(\ell)$ is an increment instruction.   We need to enforce that
 $n'_h=n_h+1$ and $n'_{3-h}=n_{3-h}$. For this, we first require that:
 (*) for every $\call$-code $C$ with label $\ell$, every  $c_{3-h}$-position has a future  call $c_{3-h}$-position  at (time) distance $1$,
   and every  $c_{h}$-position  has a  future call $c_{h}$-position $j$ at  distance $1$ such that
   $j+1$ is still a call $c_{h}$-position.

 By the strict time monotonicity and the 1-Time distance between adjacent labels, the above requirement~(*) ensures that
 $n'_h\geq n_h+1$ and $n'_{3-h}\geq n_{3-h}$. In order to enforce that $n'_h\leq n_h+1$ and $n'_{3-h}\leq n_{3-h}$, we crucially exploit the return part  $(\ret, \sigma^{R})$
corresponding to the reverse of the call part $(\call,\sigma)$. In particular, along the return part, the reverse of $C'$ is immediately followed by the reverse of $C$. Thus, we additionally require that:
  (**)  for every \emph{non-first} $\ret$-code $R$ which is immediately followed by a $\ret$-code with label $\ell$, each  $c_{3-h}$-position has a future $c_{3-h}$-position  at  distance $1$,
   and each non-first $c_{h}$-position of $R$  has a future  $c_{h}$-position at  distance $1$.

 Requirements~(*) and~(**) can be expressed by the following two formulas.\vspace{0.1cm}

\hspace{1.5cm}  $
  \Always^{\Global}\Bigl((\call\wedge \ell) \rightarrow \StrictAlways^{\Global}_{[0,1]}[(c_{3-h} \rightarrow \StrictEventually^{\Global}_{[1,1]}c_{3-h})\wedge (c_{h} \rightarrow \StrictEventually^{\Global}_{[1,1]}(c_{h}\wedge \Next^{\Global} c_h))]\Bigr)
  $\vspace{0.1cm}

\hspace{0.4cm}$
 \displaystyle{\bigwedge_{\ell'\in \Lab}} 
  \Always^{\Global}\Bigl((\ret\wedge \ell'\wedge \StrictEventually^{\Global}_{[2,2]}\ell) \longrightarrow
 \StrictAlways^{\Global}_{[0,1]}\bigl( [ c_{3-h} \rightarrow \StrictEventually^{\Global}_{[1,1]}c_{3-h}]\wedge
   [(c_{h}\wedge \Next^{\Global} c_h)  \rightarrow \Next^{\Global}\StrictEventually^{\Global}_{[1,1]}c_{h}]\bigr)\Bigr)
$\vspace{0.1cm}

Now, assume that $\Inst(\ell)$ is a decrement instruction.  We need to enforce that
 $n'_{3-h}=n_{3-h}$, and whenever $\ell'=\dec(\ell)$, then $n'_h=n_h-1$. This can be ensured by requirements similar to Requirements~(*) and~(**), and we omit the details.
\end{compactitem}\vspace{0.1cm}

\noindent Note that the unique abstract modality used in the reduction is $\Next^{\abs}$. This concludes the proof of Theorem~\ref{theorem:undecidability}.

\section{Conclusions}
We have introduced two timed linear-time temporal logics for specifying real-time context-free requirements in a pointwise semantics setting: Event-Clock Nested Temporal Logic  (\ECNTL) and Nested Metric Temporal Logic (\NMTL).
 We have shown that while \ECNTL\ is decidable and tractable, \NMTL\ is undecidable even for its future fragment interpreted over finite timed words.
 Moreover, we have established that the \MITLS-like fragment \NMITLS\ of \NMTL\ is decidable and tractable.
 As future research, we aim to investigate the decidability status for the more general fragment of \NMTL\ obtained by disallowing singular intervals.
 Such a fragment represents the \NMTL\ counterpart of Metric Interval Temporal Logic (\MITL), a well-known decidable (and \EXPSPACE-complete) fragment of \MTL\ \cite{AlurFH96}
 which is strictly more expressive than \MITLS\ in the pointwise semantics setting~\cite{RaskinS99}.


\bibliographystyle{eptcs}
\bibliography{bib2}

\newpage

\appendix

\newcounter{aux}

\begin{center}
\begin{LARGE}
  \noindent\textbf{Appendix}
\end{LARGE}
\end{center}

\section{Syntax and semantics of \VPTA}\label{APP:DefinitionVPTA}
 A (standard) clock valuation over a finite set $C_{st}$ of (standard) clocks  is a mapping $\sval: C_{st} \mapsto \RealP$. For $t\in\RealP$ and a reset set $\Res\subseteq C_{st}$, $\sval+ t$ and $\sval[\Res]$ denote  the  valuations over $C_{st}$ defined as follows for all $z\in C_{st}$: $(\sval +t)(z) = \sval(z)+t$, and $\sval[\Res](z)=0$ if $z\in \Res$ and $\sval[\Res](z)=\sval(z)$ otherwise.

\begin{definition}[\VPTA]\emph{A  \VPTA\   over a pushdown alphabet   $\Sigma= \Scall\cup \Sint \cup \Sret$ is a tuple
$\Au=\tpl{\Sigma, Q,Q_{0},C_{st},\Gamma\cup\{\bot\},\Delta,F}$, where $Q$, $Q_0$, $\Gamma$, and $F$ are defined as for \ECNA\  and $\Delta=\Delta_c\cup \Delta_r\cup \Delta_i$ is a transition relation, where:
  \begin{compactitem}
    \item $\Delta_c\subseteq Q\times \Scall \times \Phi(C_{st}) \times 2^{C_{st}} \times Q \times \Gamma$ is the set of \emph{push transitions},
    \item $\Delta_r\subseteq Q\times \Sret  \times \Phi(C_{st}) \times 2^{C_{st}} \times (\Gamma\cup \{\bot\}) \times Q $ is the set of \emph{pop transitions},
       \item $\Delta_i\subseteq Q\times \Sint\times \Phi(C_{st}) \times 2^{C_{st}} \times Q $ is the set of \emph{internal transitions}.
  \end{compactitem}}
\end{definition}

A configuration of $\Au$ is a triple $(q,\beta,\sval)$, where $q\in Q$,
$\beta\in\Gamma^*\cdot\{\bot\}$ is a stack content, and $\sval$ is a valuation over $C_{st}$.
A run $\pi$ of $\Au$ over a  timed word $w=(\sigma,\tau)$
is a  sequence  of configurations
 $\pi=(q_0,\beta_0,\sval_0),(q_1,\beta_1,\sval_1),\ldots
$ of length $|w|+1$ such that $q_0\in Q_{0}$,
$\beta_0=\bot$, $\sval_0(z)=0$ for all $z\in C_{st}$ (initialization requirement), and the following holds
for all $0\leq i< |w|$, where $t_i=\tau_i-\tau_{i-1}$ ($\tau_{-1}=0$):
\begin{itemize}
\item \textbf{Push:} If $\sigma_i \in \Scall$, then for some $(q_i,\sigma_i,\theta,\Res,q_{i+1},\gamma)\in\Delta_c$,
  $\beta_{i+1}=\gamma\cdot \beta_i$, $\sval_{i+1}= (\sval_i +t_i)[\Res]$,  and $(\sval_i +t_i)\models \theta$.
\item \textbf{Pop:} If $\sigma_i \in \Sret$, then for some
  $(q_i,\sigma_i,\theta,\Res,\gamma,q_{i+1})\in\Delta_r$, $\sval_{i+1}= (\sval_i +t_i)[\Res]$, $(\sval_i +t_i)\models \theta$,  and \emph{either}
  $\gamma\neq\bot$ and $\beta_{i}=\gamma\cdot \beta_{i+1}$, \emph{or}
  $\gamma=\beta_{i}= \beta_{i+1}=\bot$.
\item \textbf{Internal:} If $\sigma_i \in \Sint$, then for some
  $(q_i,\sigma_i,\theta,\Res,q_{i+1})\in\Delta_i$, $\beta_{i+1}=\beta_i$, $\sval_{i+1}= (\sval_i +t_i)[\Res]$, and $(\sval_i +t_i)\models \theta$.
\end{itemize}

The notion of acceptance is defined as for \ECNA.

\section{Proof of Theorem~\ref{theo:expressofNovelModalities}}\label{APP:expressofNovelModalities}

 We focus on the case of finite timed words (the case of infinite timed words is similar). Let $\Fragm$ be the fragment of \ECNTL\ obtained by disallowing the
 timed non-regular modalities $\PrevClock^{\abs}_I$, $\PrevClock^{\caller}_I$, and $\NextClock^{\abs}_I$.  Theorem~\ref{theo:expressofNovelModalities} (for finite timed words)
 directly follows from the following result.

\begin{proposition}\label{prop:expressivOfNovelMod} Let $\Prop =\{\call,\ret\}$ and $\TLang$ be the timed language
 consisting of the finite  timed words  of the form $(\sigma,\tau)$ such that $\sigma$ is a well-matched word of the form $ \{\call\}^{n} \cdot   \{\ret\}^{n}$ for some $n>0$, and there is a call
  position $i_c$ of $\sigma$ such that $\tau_{i_r}-\tau_{i_c} = 1$, where $i_r$ is the matching-return of $i_c$ in $\sigma$. Then,
  $\TLang$ can be expressed in \ECNTL\ but not in $\Fragm$.
\end{proposition}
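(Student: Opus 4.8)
\emph{Expressibility in \ECNTL.} Since $\Prop=\{\call,\ret\}$, the alphabet $\Sigma_\Prop$ contains no internal symbols, so a well-matched word in which every call precedes every return is necessarily of the form $\{\call\}^{n}\{\ret\}^{n}$. Hence I claim $\TLang$ is defined by
\[
\call \wedge \Next^{\abs}(\neg\Next^{\Global}\true)\wedge \neg\Eventually^{\Global}\bigl(\ret\wedge\Eventually^{\Global}\call\bigr)\wedge \Eventually^{\Global}\bigl(\call\wedge \NextClock^{\abs}_{[1,1]}\true\bigr).
\]
The first conjunct forces a leading call, the second forces its matching return (its abstract successor) to exist and to be the last position, and the third forbids a call after a return; together they pin down the untimed shape $\{\call\}^{n}\{\ret\}^{n}$ with $n>0$. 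For a call position $i$ the abstract successor is exactly its matching return $i_r$ and the \MAP\ through $i$ is $\{i,i_r\}$, so $\NextClock^{\abs}_{[1,1]}\true$ holds at $i$ iff $\tau_{i_r}-\tau_i=1$; the last conjunct thus asserts the existence of a matching pair at time distance $1$. Note that this formula uses the \emph{abstract} timed modality $\NextClock^{\abs}$, which is precisely one of the operators forbidden in $\Fragm$.

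\emph{Inexpressibility in $\Fragm$: the plan.} Recall that $\Fragm$ retains the full \emph{untimed} navigation (global, abstract and caller next/until/previous/since) but only the \emph{global} timed modalities $\NextClock^{\Global}_I,\PrevClock^{\Global}_I$. The guiding intuition is that $\Fragm$ can freely walk along the matching relation but can measure elapsed time only along the linear (global) order; since the matching return of a call is never the nearest return in that order, no \emph{fixed} formula can both single out a matching pair and time it. I would make this precise by an Ehrenfeucht--Fra\"iss\'e/bisimulation argument in the spirit of the \ECTL\ vs.\ \ECA\ separation of~\cite{RaskinS99}. Assume for contradiction that some $\varphi\in\Fragm$ defines $\TLang$, and let $d$ bound the nesting depth of its modalities and $K$ its largest constant.

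\emph{The two models.} I would fix $n$ large with respect to $d$ and an index $i^{*}$ lying \emph{deep in the middle}, i.e.\ with $i^{*}>d$ and $n-1-i^{*}>d$, and write $r^{*}=\SUCC(\abs,\sigma,i^{*})$ for its matching return, which is likewise more than $d$ positions from both ends of the return block. I would then choose the timestamps of $w\in\TLang$ on $\{\call\}^{n}\{\ret\}^{n}$ \emph{generically}, so that the matching pair $(i^{*},r^{*})$ has time distance exactly $1$ while every \emph{other} time distance between two positions is non-integral (hence avoids all the integer endpoints $\le K$ of the intervals of $\varphi$). The second model $w'$ is obtained by shifting the whole block of positions $[\,i^{*}+1,\,r^{*}\,]$ forward by a tiny generic $\epsilon>0$: this leaves the untimed word and the entire \MAP/caller structure untouched, changes the witness distance to $1+\epsilon$, keeps every other matching distance away from $1$, and alters only the time distances that cross the block boundary, each by $\pm\epsilon$. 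Thus $w\in\TLang$ while $w'\notin\TLang$.

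\emph{The crux and the main obstacle.} It remains to prove $w\models\varphi\iff w'\models\varphi$, which is where the real work lies. Since $w$ and $w'$ share untimed word and nesting structure, all untimed subformulas agree and all abstract/caller navigation coincides; only the global timed modalities can sense time. For such a modality at a position $p$, its value is the distance from $p$ to the \emph{nearest} (linearly preceding/succeeding) position satisfying some subformula $\chi$, compared against an interval with integer endpoints. Distances not crossing the block boundary are identical in $w$ and $w'$; a crossing distance changes by $\pm\epsilon$, but by the generic choice every \emph{observable} crossing distance is non-integral in $w$, so for small enough $\epsilon$ it stays on the same side of every endpoint $\le K$ and interval membership is preserved. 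The one crossing distance equal to an integer is the witness distance $\tau_{r^{*}}-\tau_{i^{*}}=1$; the key lemma, and the main obstacle, is to show it is \emph{never} the value examined by $\varphi$: no depth-$d$ subformula $\chi$ can make $r^{*}$ the nearest $\chi$-position reached from $i^{*}$ (nor $i^{*}$ the nearest from $r^{*}$), since this would require $\chi$ to distinguish $r^{*}$ from the propositionally identical returns strictly between them (resp.\ $i^{*}$ from the intervening calls), and singling out such a middle position is impossible within depth $d$ -- it cannot be reached by counting $>d$ linear steps from either end, nor by a caller chain of length $>d$ from the outermost call (detected by $\neg\Prev^{\caller}\true$) or the innermost call (detected by $\call\wedge\Next^{\Global}\ret$), and $\Fragm$ lacks exactly the operators $\PrevClock^{\abs}_I,\NextClock^{\abs}_I,\PrevClock^{\caller}_I$ that would otherwise time the matching relation directly. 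Formalising this ``middle positions are indistinguishable up to depth $d$'' claim as a winning strategy in the corresponding EF-game is the technical heart of the proof; granting it, $\varphi$ cannot separate $w$ from $w'$, contradicting that it defines $\TLang$.
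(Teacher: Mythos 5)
Your \ECNTL\ formula for the first half is essentially the paper's (the paper keeps a redundant $\Always^{\Global}\neg\intA$ conjunct), and your justification of it is sound. For the inexpressibility half, however, your construction genuinely diverges from the paper's, and it is there that the proof breaks down: the paper compares two words with \emph{identical timestamps on corresponding positions} but different untimed structure (it deletes the middle call together with its matching return from a word whose calls sit at $\frac{1}{2H+1},\dots,\frac{2H+1}{2H+1}$ and whose returns sit exactly one time unit later), whereas you keep the untimed word fixed and shift the timestamps of the block $[i^*+1,r^*]$ by $\epsilon$.

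The gap is your key lemma, which you explicitly leave unproven, and whose supporting claim --- that a depth-$d$ formula of $\Fragm$ cannot single out a position deep in the middle --- is false for merely ``generic'' timestamps. The global timed modalities can anchor to the endpoints of the word and read absolute time against integer thresholds: the formula $\alpha:=\PrevClock^{\Global}_{[0,1)}(\neg\Prev^{\Global}\true)$ holds at a position $j\geq 1$ iff $\tau_j-\tau_0<1$, so $\neg\alpha\wedge\Prev^{\Global}\alpha$ is true at exactly one position $j_a$ (the first position at absolute time $\geq 1$ from $\tau_0$), no matter how large $n$ is; and since $\tau_{r^*}-\tau_{i^*}=1$ forces $\tau_{r^*}-\tau_0\geq 1$, this anchor satisfies $j_a\leq r^*$. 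Your enumeration of the ways a middle position could be reached (linear steps from the ends, caller chains from the outermost or innermost call) omits exactly this mechanism. Worse, nothing in ``all non-witness distances are non-integral'' prevents the generic choice from landing $j_a=r^*$ (i.e.\ $\tau_{r^*-1}-\tau_0<1\leq\tau_{r^*}-\tau_0$); in that case $\chi:=\neg\alpha$ holds at $r^*$ and at no position of $[i^*+1,r^*-1]$, so $\NextClock^{\Global}_{[1,1]}\chi$ evaluated at $i^*$ measures precisely the witness distance and separates $w$ from $w'$ --- the very situation your lemma must exclude. Ruling out all such anchor-based subformulas (anchored at position $0$, at the last position, at the call/return boundary, or at previously definable anchors) requires a highly structured, non-generic timestamp assignment; this is exactly why the paper packs all calls into $[0,1]$ and all returns into $[1,2]$ in arithmetic progression, so that \emph{every} call has some return at global distance exactly $1$ and the only remaining signal is the correlation between distance $1$ and the matching relation, which only the abstract clock modalities can read. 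As written, the indistinguishability of $w$ and $w'$ is not established, so the second half of the proposition is not proved.
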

\begin{proof}

The language $\TLang$ is definable by the following $\ECNTL $ formula
\[
\call \wedge  \Next^{\abs}(\neg \Next^{\Global}\top)  \wedge  \Always^{\Global}\neg\intA \wedge \neg \Eventually^{\Global}(\ret \wedge \Eventually^{\Global}\call)  \wedge  \Eventually^{\Global} (\call \wedge \NextClock^{\abs}_{[1,1]}\, \top)
\]
Next, we show that no formula  in   $\Fragm$ can capture the language
 $\TLang$. For a formula  $\varphi$ of $\Fragm$, let $d(\varphi)$ be the nesting depth of the \emph{unary} temporal modalities in $\varphi$.
  For all $H\geq 1$, let $w^{H}_\good$ and $w^{H}_\bad$ be the well-matched timed words over $\Sigma_\Prop$ of length $4H+2$ and $4H$, respectively,  defined as follows.
\begin{itemize}
\item $w^{H}_\good = (\{\call\},\frac{1}{2H+1}) \ldots  (\{\call\},\frac{2H+1}{2H+1})\cdot (\{\ret\},1+\frac{1}{2H+1}) \ldots  (\{\ret\},1+\frac{2H+1}{2H+1})$
  \item $w^{H}_\bad$ is obtained from $w^{H}_\good$ by removing the call-position  $H$ and its matching-return position $3H+1$.
\end{itemize}

By construction, position $H$ of $w^{H}_\good$ is the unique call-position $i_c$ of $w^{H}_\good$ such that the time distance between the matching-return of $i_c$ and $i_c$ is exactly $1$.
Hence,
for all $H\geq 1$, $w^{H}_\good\in \TLang$ and $w^{H}_\bad\notin \TLang$.  We prove that for all $H\geq 1$ and formula $\varphi$ in $\Fragm$ such that
$d(\varphi)<H$, $w^{H}_\good$ is a model of $\varphi$ iff $w^{H}_\bad$ is a model of $\varphi$. Hence,
$\TLang$ is not expressible in $\Fragm$ and the result follows. For this, we first prove the following claim.\vspace{0.2cm}

 \noindent \textbf{Claim 1:} Let $H\geq 1$, $0\leq k\leq H$, and    $\varphi\in \Fragm$ with $d(\varphi)\leq H-k$. Then, the following holds:
 \begin{compactenum}
   \item for all $i,j\in [H-k,H+k]$, $(w_\good^{H},i)\models\varphi$ iff $(w_\good^{H},j)\models\varphi$;
      \item for all $i,j\in [3H+1-k,3H+1+k]$, $(w_\good^{H},i)\models\varphi$ iff $(w_\good^{H},j)\models\varphi$.
 \end{compactenum}\vspace{0.2cm}

\noindent \textbf{Proof of Claim 1:} Let $H\geq 1$, $0\leq k\leq H$, and    $\varphi\in \Fragm$ with $d(\varphi)\leq H-k$. We prove the implication
  $(w_\good^{H},i)\models\varphi$ $\rightarrow$ $(w_\good^{H},j)\models\varphi$ in Properties~1 and~2 (the converse implication being similar).
  The proof is   by induction on the structure of the formula and the nesting depth $d(\varphi)$.
  By construction, for all $\ell\in [H-k,H+k]$ (resp., $\ell\in [3H+1-k,3H+1+k]$),  $\ell$ is a call position (resp., return position) of $w_\good^{H}$.
  Hence, the base case holds, while the cases where the root modality of $\varphi$ is a Boolean connective
directly follow from the induction hypothesis. For the other cases,  we focus on  Property~1 (Property~2 being similar).
Thus, let $i,j\in [H-k,H+k]$. For a call-position $\ell \in [0,2H]$ in $w_\good^{H}$, let $\ret(\ell)$ be the matching-return position.
Note that $\ret(\ell)= 4H+1-\ell$.
 Since $\varphi\in\Fragm$, we have to consider the following cases:
\begin{itemize}
  \item $\varphi = \varphi_1 \Until^{\Global} \,\varphi_2$. Assume that $(w_\good^{H},i)\models\varphi$.
  Hence, there is   $\ell\in [i,4H+1]$ such that $(w_\good^{H},\ell)\models\varphi_2$ and
  $(w_\good^{H},\ell')\models\varphi_1$ for all $\ell'\in [i,\ell-1]$. We distinguish two cases:
  \begin{compactitem}
    \item $\ell>j$. By the induction hypothesis, either $\ell= i$ and $(w_\good^{H},j)\models\varphi_2$, or $\ell>i$ and for all positions $p$ between
    $i$ and $j$, $(w_\good^{H},p)\models\varphi_1$. It follows that  $(w_\good^{H},j)\models\varphi$.
    \item $\ell\leq j$. Hence, $\ell\in [i,j]$. By the induction hypothesis, $(w_\good^{H},j)\models\varphi_2$, and the result follows.
  \end{compactitem}
    \item  $\varphi = \varphi_1 \Since^{\Global}\, \varphi_2$: this case is similar tho the previous one.
  \item $\varphi = \varphi_1 \Until^{\abs}\, \varphi_2$. Assume that $(w_\good^{H},i)\models\varphi$.
  Since position $i$ is a call, by construction, either $(w_\good^{H},i)\models\varphi_2$, or $(w_\good^{H},i)\models\varphi_1$ and
  $(w_\good^{H},\ret(i))\models\varphi_2$. Since $\ret(i),\ret(j)\in [3H+1-k,3H+1+k]$, by the induction hypothesis on Properties~1 and~2,
  either $(w_\good^{H},j)\models\varphi_2$, or $(w_\good^{H},j)\models\varphi_1$ and
  $(w_\good^{H},\ret(j))\models\varphi_2$. Hence, $(w_\good^{H},j)\models\varphi$.
  \item  $\varphi = \varphi_1 \Since^{\abs}\, \varphi_2$: this case is similar tho the previous one.
     \item  $\varphi = \varphi_1 \Since^{\caller} \,\varphi_2$: since $i\in [H-k,H+k]$,  by construction,
     $(w_\good^{H},i)\models\varphi_1 \Since^{\caller} \,\varphi_2$ iff $(w_\good^{H},i)\models\varphi_1 \Since^{\Global}\, \varphi_2$, and the result follows from
     the case for modality $\Since^{\Global}$.
  \item $\varphi =  \Next^{\Global} \,\varphi_1$. Let $(w_\good^{H},i)\models\varphi$. Hence, $(w_\good^{H},i+1)\models\varphi_1$.   Since $d(\varphi)\geq 1$ and $d(\varphi)\leq H-k$, we have that $k+1 \leq H$, $  d(\varphi_1)\leq H-(k+1)$, and
  $i+1,j+1\in [H-(k+1),H+(k+1)]$. Thus, by the induction hypothesis on $d(\varphi_1)$, we obtain that $(w_\good^{H},j)\models\varphi$.
   \item $\varphi =  \Prev^{\Global} \,\varphi_1$: this case is similar to the previous one.
   \item $\varphi =  \Next^{\abs} \,\varphi_1$: let $(w_\good^{H},i)\models\varphi$.
  Since position $i$ is a call, by construction,
  $(w_\good^{H},\ret(i))\models\varphi_1$. Since $\ret(i),\ret(j)\in [3H+1-k,3H+1+k]$, by the induction hypothesis on Property~2, it follows that
  $(w_\good^{H},\ret(j))\models\varphi_1$. Hence, $(w_\good^{H},j)\models\varphi$.
     \item $\varphi =  \Prev^{\abs} \,\varphi_1$: this case is similar to the previous one.
     \item $\varphi =  \Prev^{\caller} \,\varphi_1$: since $i\in [H-k,H+k]$,  by construction,
     $(w_\good^{H},i)\models\Prev^{\caller} \,\varphi_1$ iff $(w_\good^{H},i)\models\Prev^{\Global} \,\varphi_1$, and the result follows from
     the case for modality $\Prev^{\Global}$.

   \item  $\varphi =  \NextClock^{\Global}_I \,\varphi_1$: for all positions $\ell\in [0,4H+1]$, let $\tau_\ell$ be the timestamp of $w_\good^{H}$ at position
   $\ell$. Moreover, if $\ell\in [0,2H]$, let $m(\ell):= 2H+1+\ell$. By construction, $\tau_{m(\ell)}-\tau_\ell=1$.
     Assume that $(w_\good^{H},i)\models\varphi$.
  Hence, there is   $\ell\in [i+1,4H+1]$ such that $(w_\good^{H},\ell)\models\varphi_1$, $\tau_\ell-\tau_i\in I$ and
  $(w_\good^{H},\ell')\not\models\varphi_1$ for all $\ell'\in [i+1,\ell-1]$. By construction, one of the following cases occurs:
  \begin{compactitem}
    \item $\tau_\ell-\tau_i=1$: by construction, $\ell = m(i)$. Hence, $\ell\in [3H+1-k,3H+1+k]$.  We show that this case cannot occur.
    Since $d(\varphi)\geq 1$ and $d(\varphi)\leq H-k$, we have that $k+1 \leq H$, $ d(\varphi_1)\leq H-(k+1)$, and
  $\ell,\ell-1\in [3H+1-(k+1),3H+1+(k+1)]$. Thus, by the induction hypothesis on $d(\varphi_1)$, $(w_\good^{H},\ell)\models\varphi_1$ iff
    $(w_\good^{H},\ell-1)\models\varphi_1$. On the other hand, by hypothesis, $(w_\good^{H},\ell)\models\varphi_1$ and
    $(w_\good^{H},\ell-1)\not\models\varphi_1$, a contradiction.
      \item $1<\tau_\ell-\tau_i<2$: hence, $\ell>m(i)>i$ and $(w_\good^{H},m(i))\not\models\varphi_1$.  Since, $m(i) \in [3H+1-k,3H+1+k]$, by the induction hypothesis, it follows that
     $\ell>3H+1+k\geq m(j)$ which entails that $1<\tau_\ell-\tau_j<2$. It follows that $\tau_\ell-\tau_j\in I$, and by the induction hypothesis on $d(\varphi_1)$, we easily obtain that for all the positions $p$
     between $i$ and $j$, $(w_\good^{H},p)\not\models \varphi_1$. It follows that
       $(w_\good^{H},j)\models\NextClock^{\Global}_I \,\varphi_1$.
     \item  $0<\tau_\ell-\tau_i<1$ and $\ell$ is a return-position: hence,  $i<\ell<m(i)$. By the induction hypothesis on $d(\varphi_1)$, we deduce that   $\ell \notin [3H+1-k,3H+1+k]$ (otherwise,
     $(w_\good^{H},\ell-1)\models \varphi_1$). It follows that $j<\ell<m(j)$ which
   entails that $0<\tau_\ell-\tau_j<1$. Hence, $\tau_\ell-\tau_j\in I$, and by the induction hypothesis on $d(\varphi_1)$, we easily obtain that
       $(w_\good^{H},j)\models\NextClock^{\Global}_I \,\varphi_1$.
  \item  $0<\tau_\ell-\tau_i<1$ and $\ell$ is a call-position: if $\ell\in [H-(k+1),H+(k+1)]$, then by the induction hypothesis
  on $d(\varphi_1)$, we have that $(w_\good^{H},j+1)\models \varphi_1$, and since $0<\tau_{j+1}-\tau_j<1$, we obtain that
  $(w_\good^{H},j)\models\NextClock^{\Global}_I \,\varphi_1$. On the other hand, if $\ell>H+(k+1)$, by the induction hypothesis, we deduce that
  for all positions $p$ between $i$ and $j$, $(w_\good^{H},p)\not\models \varphi_1$. Thus, since by construction $0<\tau_\ell -\tau_j<1$,
  we conclude that $(w_\good^{H},j)\models\NextClock^{\Global}_I \,\varphi_1$.
  \end{compactitem}
     \item  $\varphi =  \PrevClock^{\Global}_I \,\varphi_1$: this case is similar to the previous one.
\end{itemize}
   This concludes the proof of Claim 1.\qed \vspace{0.2cm}

Let $H\geq 1$. For each position $i$ of $w_\bad^{H}$ (note that $i\in [0,4H-1]$), we denote by $H(i)$ the associated position
in $w_\good^{H}$, i.e. the unique position $j$ of $w_\good^{H}$ such that $w_\bad^{H}(i)=w_\good^{H}(j)$. By exploiting Claim~1,
we deduce the following Claim~2. Since $H(0)=0$, Claim~2 entails the desired result, i.e. for all $H\geq 1$ and formulas $\varphi$ in $\Fragm$ such that
$d(\varphi)<H$, $(w^{H}_\good,0)\models \varphi$ iff $(w^{H}_\bad,0)\models \varphi$.\vspace{0.2cm}

 \noindent \textbf{Claim 2:} Let $H\geq 1$ and    $\varphi\in \Fragm$ with $d(\varphi)< H$. Then,
 for all $i\in [0,4H-1]$,
 \[
 (w_\bad^{H},i)\models\varphi \text{ iff } (w_\good^{H},H(i))\models\varphi
 \]

\noindent \textbf{Proof of Claim 2:} Let $H\geq 1$ and    $\varphi\in \Fragm$ with $d(\varphi)<H$.
  We prove by structural induction on $\varphi$ that  for all $i\in [0,4H-1]$, $(w_\bad^{H},i)\models\varphi$ iff $(w_\good^{H},H(i))\models\varphi$.
By construction,  for all $i\in [0,4H-1]$,   $w_\bad^{H}(i)=w_\good^{H}(H(i))$.
  Hence, the base case holds, while the cases where the root modality of $\varphi$ is a Boolean connective
directly follow from the induction hypothesis.
 Since $\varphi\in\Fragm$, it remains to consider  the following cases:
\begin{itemize}
  \item $\varphi = \varphi_1 \Until^{\Global} \,\varphi_2$. Assume that $(w_\good^{H},H(i))\models\varphi$.
  Hence, there is   $\ell\in [H(i),4H+1]$ such that $(w_\good^{H},\ell)\models\varphi_2$ and
  $(w_\good^{H},\ell')\models\varphi_1$ for all $\ell'\in [H(i),\ell-1]$. Assume that $\ell \neq H(p)$ for all positions $p$ of
  $w_\bad^{H}$ (the other case being simpler). Hence, $\ell\in \{H,3H+1\}$. Let $\wp \in [0,4H-1]$ such that $H(\wp)=\ell-1$.   Since $d(\varphi)<H$, by Claim~1,
  $(w_\good^{H},\ell-1)\models\varphi_2$. Thus, since $i\leq \wp$ and $H(p)\in [H(i),H(\wp)-1]$ for all $p\in [i,\wp-1]$, by the induction hypothesis,
  it follows that  $(w_\bad^{H},i)\models\varphi$. The converse implication $(w_\bad^{H},i)\models\varphi$ $\Rightarrow$ $(w_\good^{H},H(i))\models\varphi$ is similar.
    \item  $\varphi = \varphi_1 \Since^{\Global}\, \varphi_2$: this case is similar tho the previous one.
  \item $\varphi = \varphi_1 \Until^{\abs}\, \varphi_2$ or $\varphi = \varphi_1 \Since^{\abs}\, \varphi_2$. By construction, for all $i\in [0,4H-1]$, the \MAP\ of  $w_\bad^{H}$ visiting position
  $i$ consists of the positions $i$ and $mt(i)$, where $mt(i)$ is the matching-return of $i$ if $i$ is a call, and the matching-call of $i$ otherwise.
  Moreover, the \MAP\ of  $w_\good^{H}$ visiting position
  $H(i)$ consists of the positions $H(i)$ and $H(mt(i))$. Hence, the result  for the abstract until and since modalities, directly follows from the induction hypothesis.
       \item  $\varphi = \varphi_1 \Since^{\caller} \,\varphi_2$: let $i\in [0,4H-1]$. By construction, $(w_\bad^{H},i)\models\varphi_1 \Since^{\caller} \,\varphi_2$ iff either (i) $i$ is a call
       and $(w_\bad^{H},i)\models\varphi_1 \Since^{\Global} \,\varphi_2$, or (ii) $i$ is a return, and either $(w_\bad^{H},i)\models \varphi_2$, or
       $(w_\bad^{H},i_c)\models\varphi_1 \Since^{\Global} \,\varphi_2$, where $i_c$ is the caller of $i$. Hence, the case for modality
       $\Since^{\caller}$ easily reduces to the case of modality $\Since^{\Global}$.
  \item $\varphi =  \Next^{\Global} \,\varphi_1$. Assume that $(w_\good^{H},H(i))\models\varphi$. Hence, $H(i)< 4H+1$ and $(w_\good^{H},H(i)+1)\models\varphi_1$.
  By construction, \emph{either} $H(i)+1 = H(i+1)$, \emph{or}  $H(i)+1\in \{H,3H+1\}$ and $H(i+1)= (H(i)+1)+1$. In the first case, by the induction hypothesis, we obtain
  that $(w_\good^{H},i+1)\models\varphi_1$. In the second case, by applying Claim~1, we deduce that $(w_\good^{H},H(i)+2)\models\varphi_1$, hence, by the induction hypothesis,
   $(w_\good^{H},i+1)\models\varphi_1$ holds as well. The converse implication $(w_\bad^{H},i)\models\varphi$ $\Rightarrow$ $(w_\good^{H},H(i))\models\varphi$ is similar.
    \item $\varphi =  \Prev^{\Global} \,\varphi_1$: this case is similar to the previous one.
   \item $\varphi =  \Next^{\abs} \,\varphi_1$ or $\varphi =  \Prev^{\abs} \,\varphi_1$: this case is similar to the case of the abstract until and since modalities.
      \item $\varphi =  \Prev^{\caller} \,\varphi_1$: let $i\in [0,4H-1]$. By construction, $(w_\bad^{H},i)\models\Prev^{\caller} \,\varphi_1$ iff either (i) $i$ is a call
       and $(w_\bad^{H},i)\models\Prev^{\Global} \,\varphi_1$, or (ii) $i$ is a return and
       $(w_\bad^{H},i_c)\models\Prev^{\Global} \,\varphi_1$, where $i_c$ is the matched-call of $i$. Hence, the case for modality
       $\Prev^{\caller}$ reduces to the case of modality $\Prev^{\Global}$.
   \item  $\varphi =  \NextClock^{\Global}_I \,\varphi_1$: for all positions $\ell$ of $w_\good^{H}$ (resp., $w_\bad^{H}$), let $\tau_\ell^{\good}$ (resp., $\tau_\ell^{\bad}$)
    be the timestamp of $w_\good^{H}$ (resp., $w_\bad^{H}$) at position
   $\ell$. Let $i\in [0,4H-1]$. We prove the implication $(w_\good^{H},H(i))\models\varphi \Rightarrow (w_\bad^{H},i)\models\varphi$ (the converse implication being similar).
   Let
     $(w_\good^{H},H(i))\models\varphi$.
  Hence, there is   $\ell\in [H(i)+1,4H+1]$ such that $(w_\good^{H},\ell)\models\varphi_1$, $\tau^{\good}_\ell-\tau^{\good}_{H(i)}\in I$ and
  $(w_\good^{H},\ell')\not\models\varphi_1$ for all $\ell'\in [H(i)+1,\ell-1]$. We distinguish two cases:
  \begin{compactitem}
    \item $\ell > H(i)+1$: by hypothesis, $(w_\good^{H},\ell-1)\not\models\varphi_1$ and $(w_\good^{H},\ell)\models\varphi_1$.
       We first show that $\ell= H(j)$ for some
  $j\in [0,4H-1]$. We assume the contrary and derive a contradiction. Hence, $\ell \in \{H,3H+1\}$. Since $d(\varphi)<H$, by Claim~1, we deduce
  that $(w_\good^{H},\ell-1)\models\varphi_1$, a contradiction. Hence, $\ell = H(j)$ for some $j\in [0,4H-1]$. By construction,
    $\tau^{\good}_{H(j)} -\tau^{\good}_{H(i)} = \tau^{\bad}_j-\tau^{\bad}_i$. Thus, by the induction hypothesis, we obtain that
     $(w_\bad^{H},i)\models\varphi$, and the result follows.
     \item $\ell= H(i) +1$. Hence, by construction, $0<\tau^{\good}_\ell-\tau^{\good}_{H(i)}<1$. If $H(i)+1 = H(i+1)$, then being $0<\tau^{\bad}_{i+1}-\tau^{\bad}_i<1$, the result
     directly follows from the induction hypothesis.   Otherwise, $\ell\in \{H,3H+1\}$ and $H(i+1)= \ell+1$. By applying  Claim~1 and the induction hypothesis, we obtain
     that $(w_\bad^{H},i+1) \models\varphi_1$. Moreover, by construction, $0<\tau^{\bad}_{i+1}-\tau^{\bad}_i<1$. Hence, the result follows.
   \end{compactitem}
     \item  $\varphi =  \PrevClock^{\Global}_I \,\varphi_1$: this case is similar to the previous one. \qed
\end{itemize}

\noindent This  concludes the proof of Proposition~\ref{prop:expressivOfNovelMod}.
\end{proof}

\section{Proof of Proposition~\ref{prop:TimedHintikkaSequence}}\label{APP:TimedHintikkaSequence}

Proposition~\ref{prop:TimedHintikkaSequence} directly follows from the following two lemmata.

\begin{lemma}\label{lemma:FirstHintikkaSequence}
Let $\pi=(A_0,t_0)(A_1,t_1)\ldots$ be a fair Hintikka sequence of an $\ECNTL$ formula $\varphi$ and $\sigma = A_0 A_1\ldots$. Then,  for all $i\geq 0$, the following holds:
\begin{compactenum}
    \item $p_{\infty}\notin A_i$ \emph{iff} $i$ has a caller whose matching return exists;
   \item for all $\psi\in \Cl(\varphi)\setminus \{p_{\infty},\neg p_{\infty}\}$,
$\psi\in A_i$ \emph{iff} $(\Proj_\varphi(\pi),i)\models \psi$.
\end{compactenum}
\end{lemma}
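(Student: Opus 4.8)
The plan is to prove the two items in the stated order, because the characterisation of $p_{\infty}$ in item~1 is exactly what controls the liveness of abstract‑until formulas in item~2. Before anything else I would record the structural fact that the projection does not disturb the nesting: each atom contains exactly one of $\call,\ret,\intA$, so $\Proj_\varphi(\pi)$, $\pi$ and $\sigma$ have the same call/return/internal positions and the same timestamps. Consequently the successors $\SUCC(\dir,\sigma,i)$, the position sets $\Pos(\dir,\sigma,i)$, the \MAP{}s and the caller paths agree for $\sigma$ and for the untimed word underlying $\Proj_\varphi(\pi)$, and I use them interchangeably.

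For item~1 I would first show that $p_{\infty}$ is \emph{constant along every} \MAP: the step $i\to\SUCC(\abs,\sigma,i)$ is governed either by the first bullet of Property~3 (when $i$ is not a call, so $\SUCC(\abs,\sigma,i)=i+1$) or by the call‑clause of Property~3 (when $i$ is a call, so $\SUCC(\abs,\sigma,i)$ is its matching return), and in both cases the rule asserts $p_{\infty}\in A_i$ iff $p_{\infty}\in A_{\SUCC(\abs,\sigma,i)}$. Since all positions of a \MAP\ share the same caller, it then suffices to fix $p_{\infty}$ at the first position $i_0$ of each \MAP. There are three kinds of starting positions: (a) $i_0=c+1$ with $c=i_0-1$ a call, where $c$ is exactly the caller of $i_0$ and the call‑clause (together with $\Next^{\abs}\true\in A_c\Leftrightarrow\true\in A_{\SUCC(\abs,\sigma,c)}$ and ``$\Next^{\abs}\true\in A_c$ iff $p_{\infty}\notin A_{c+1}$'') yields $p_{\infty}\notin A_{i_0}$ precisely when $c$ has a matching return; (b) $i_0$ an unmatched return, handled by the middle bullet of Property~3, which forces $p_{\infty}\in A_{i_0}$ and matches the absence of a caller; and (c) $i_0=0$. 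For (c), when the \MAP\ of $0$ is finite it ends at a call without matching return or just before an unmatched return, and the call/return clauses force $p_{\infty}$ there; when it is infinite I would argue by contradiction using fairness~(i): if $p_{\infty}\notin A_0$, then $p_{\infty}$ fails on the whole top‑level \MAP, while every off‑\MAP\ position lies in a finite well‑matched body and hence on an inner \MAP\ whose caller is a matched call, so by case~(a) it too lacks $p_{\infty}$, contradicting the requirement that $p_{\infty}$ hold infinitely often.

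For item~2 I would induct on the structure of $\psi$ (noting that $\psi\neq p_{\infty},\neg p_{\infty}$ keeps all inductive subformulas outside $\{p_{\infty},\neg p_{\infty}\}$). The atomic and Boolean cases are immediate from propositional consistency and the definition of $\Proj_\varphi$. The global next/previous cases follow from $\NextPrev(A_i,A_{i+1})$, with position $0$ settled by initial consistency. The until/since cases reduce—via the atom clauses unfolding $\Until^{\dir}$ and $\Since^{\dir}$ through $\Next^{\dir}$ and $\Prev^{\dir}$—to the next/previous cases plus a termination argument: the past is finite for the since modalities; for $\Until^{\Global}$ one uses fairness~(ii); and for $\Until^{\abs}$ one uses item~1 together with fairness~(iii), which guarantees the liveness witness eventually lands on the unique infinite \MAP. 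The abstract/caller next and previous cases are a case analysis on whether $i$ (or $i-1$) is a call, return or internal position, feeding on $\AbsNextPrev$ and on the caller clause $\Caller(A_{c+1})=\{\Prev^{\caller}\psi\mid\psi\in A_c\}$; and the real‑time cases follow directly from Property~4 once the induction hypothesis has rewritten $\psi_1$ as the set of positions at which the clock $x^{\dir}_{\psi_1}$ resp.\ $y^{\dir}_{\psi_1}$ is measured.

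The hard part will be the abstract‑previous case (and, symmetrically, $\Since^{\abs}$) at a position $i$ that \emph{begins a procedure body}, i.e.\ with $i-1$ a call whose abstract successor jumps over the body. Such an $i$ has no abstract predecessor, so $(\Proj_\varphi(\pi),i)\not\models\Prev^{\abs}\psi_1$ and the induction forces me to prove $\Prev^{\abs}\psi_1\notin A_i$. In contrast with the dual end‑of‑\MAP\ situation, where $\Next^{\abs}\true\notin A_i$ is imposed outright by Property~3, the abstract‑previous requirement at a body entry is constrained only indirectly. I would therefore isolate the auxiliary equivalence ``$\Prev^{\abs}\true\in A_i$ iff $i$ has an abstract predecessor'' and establish it by the same case analysis on Property~3 (position $0$ from initial consistency; returns from the middle bullet and from $\AbsNextPrev$ at the matching call; body entries from the call‑clause), and only then invoke the atom rule ``$\Prev^{\abs}\true\notin A\Rightarrow\Prev^{\abs}\psi\notin A$'' to discharge the general $\Prev^{\abs}\psi_1$ case. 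Verifying the body‑entry subcase of this equivalence is exactly where the argument must be most careful.
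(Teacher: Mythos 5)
Your overall strategy coincides with the paper's: for item~1 you use the constancy of $p_{\infty}$ along each \MAP\ (forced by Property~3 of Definition~\ref{Def:HintikkaSequence}) plus fairness~(i) for the infinite-\MAP\ case, and for item~2 a structural induction in which the until cases are discharged by the atom fixpoint clauses together with fairness~(ii)--(iii) and item~1. That part is sound and matches the paper, which explicitly treats only the $\Until^{\abs}$ and $\NextClock^{\abs}_I$ cases and declares the remaining ones ``similar or simpler''.

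The gap is in the one place you yourself flag as delicate. Your auxiliary equivalence ``$\Prev^{\abs}\true\in A_i$ iff $i$ has an abstract predecessor'' cannot be established ``from the call-clause'' at a body entry $i=c+1$ (with $c$ a call and $\ret\notin A_{c+1}$): the call-clause of Property~3 constrains only $\Caller(A_{c+1})$, the link between $\Next^{\abs}\true\in A_c$ and $p_{\infty}\notin A_{c+1}$, and $\AbsNextPrev(A_c,A_j)$ for the \emph{matching return} $j$ of $c$ --- it imposes no condition at all on the abstract-previous formulas contained in $A_{c+1}$. Consequently nothing in Definition~\ref{Def:HintikkaSequence} (nor in the atom conditions, whose rule for $\Prev^{\abs}\true$ is only one-directional) excludes a fair Hintikka sequence with $\Prev^{\abs}\true\in A_{c+1}$, even though position $c+1$ has no abstract predecessor; so the step you rely on to dispose of the general $\Prev^{\abs}\psi_1$ case at body entries does not follow from the stated hypotheses. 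This is really an imprecision inherited from the paper, whose own proof silently skips this case; to close it one must either strengthen Property~3 with the missing clause ``$\Prev^{\abs}\true\notin A_{i+1}$ whenever $\call\in A_i$ and $\ret\notin A_{i+1}$'' (mirroring the dual requirement $\Next^{\abs}\true\notin A_i$ at the end of a \MAP) or exhibit some other source for this constraint --- your case analysis as written does not supply one.
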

\begin{proof}
Let $\pi=(A_0,t_0)(A_1,t_1)\ldots$ be a fair Hintikka sequence of $\varphi$, $\sigma = A_0 A_1\ldots$, and $P_f$ be the set of positions $i\geq 0$
such that $i$ has a caller in $\sigma$ whose matching return exists.\vspace{0.2cm}

\noindent \emph{Proof of Property 1:} let $i\geq 0$ and $\nu$ be the $\MAP$ of $\sigma$ visiting position $i$. We need to show that $p_{\infty}\notin A_i$ iff
$i\in P_f$. By Property~3    in Definition~\ref{Def:HintikkaSequence},  \emph{either} for all positions $j$ visited by $\nu$, $p_{\infty}\in A_j$, \emph{or}
  for all positions $j$ visited by $\nu$, $p_{\infty}\notin A_j$. We distinguish the following  cases:
\begin{itemize}
  \item $\nu$ is finite and leads to an unmatched call: hence, for all positions $j$ visited by $\nu$, $j\notin P_f$.
  Since $\pi$ is an Hintikka sequence, by Property~3  in Definition~\ref{Def:HintikkaSequence},
  $\nu$ visits only positions $j$ where $p_{\infty}\in A_j$, and the result follows.
  \item $\nu$ is finite and leads to a non-call position $k$ such that $k+1$ is a return position. If $k+1$ has no matched call, then
  for all positions $j$ visited by $\nu$, $j\notin P_f$. Moreover, by Property~3  in Definition~\ref{Def:HintikkaSequence}, $p_{\infty}\in A_k\cap A_{k+1}$.
  Hence, $\nu$ visits only positions $j$ where $p_{\infty}\in A_j$, and the result follows. Now, assume that $k+1$ has a matched call $i_c$. This  means that
  $\nu$ starts at $i_c+1$ and
   for all positions $j$ visited by $\nu$, $j\in P_f$. By Property~3  in Definition~\ref{Def:HintikkaSequence}, $p_{\infty}\notin  A_{i_c+1}$.
 Hence,  $\nu$ visits only positions $j$ where $p_{\infty}\notin A_j$, and the result follows in this case as well.
  \item $\nu$ is infinite: hence, for all positions $j$ visited by $\nu$, $j\notin P_f$. By definition of abstract path, $\nu$ is the unique infinite \MAP\ of $\sigma$, and there is $k\geq 0$ such that for all $m\geq k$, either $m$ is visited by $\nu$ (hence, $m\notin P_f$), or $m\in P_f$. By the previous case, if $m \in P_f$, then $p_{\infty}\notin A_m$.
  Since $\pi$ is fair, for infinitely many $h\geq 0$,  $p_{\infty}\in A_h$. Thus,  we deduce that for all positions $j$ visited by $\nu$, $p_{\infty}\in A_j$, and the result follows.
\end{itemize}

\noindent \emph{Proof of Property 2:} let $i\geq 0$ and $\psi\in \Cl(\varphi)\setminus \{p_{\infty},\neg p_{\infty}\}$. We prove by induction on the structure of $\psi$ that
 $\psi\in A_i$ \emph{iff} $(\Proj_\varphi(\pi),i)\models \psi$. Here, we focus on the cases where the root modality of $\psi$ is either $\Until^{\abs}$ or $\NextClock^{\abs}_I$. The other cases are similar or simpler.
\begin{itemize}
  \item $\psi = \psi_1\Until^{\abs}\psi_2$:
    first, assume that $(\Proj_\varphi(\pi),i)\models \psi$. Hence, there exists an infix of the $\MAP$   of $\sigma$ visiting $i$ of the form  $j_0<j_1\ldots <j_n$ such that $j_0=i$,
    $(\Proj_\varphi(\pi),j_n)\models \psi_2$ and $(\Proj_\varphi(\pi),j_k)\models \psi_1$ for all $0\leq k<n$.   By the induction hypothesis, $\psi_2\in A_{j_n}$ and
    $\psi_1\in A_{j_k}$ for all $0\leq k<n$. Thus, since $\pi$ is an Hintikka sequence, by definition of atom and Property~3  in Definition~\ref{Def:HintikkaSequence}, it follows that
    $\psi_1\Until^{\abs}\psi_2\in A_{j_h}$ for all $0\leq h\leq n$. Hence, being $i=j_0$, we obtain that $\psi\in A_i$ and the result follows.

    Now assume that $\psi\in A_i$. We need to show that $(\Proj_\varphi(\pi),i)\models \psi$.
    Let $\nu$ be the $\MAP$ of $\sigma$ visiting position $i$. Assume that $\nu$ is infinite (the other case being simpler). Let
    $\nu^{i} =  j_0<j_1 \ldots$ be the suffix of $\nu$ starting from position $i$, where $j_0=i$.
   Since $\pi$ is an Hintikka sequence, by definition of atom and  Property~3 in Definition~\ref{Def:HintikkaSequence}, one of the following holds:
   \begin{compactitem}
    \item there is $n\geq 0$ such  that $\psi_2\in A_{j_n}$ and $\psi_1\in A_{j_k}$ for all $0\leq k<n$. Since $i=j_0$, from the induction hypothesis, we obtain that $(\Proj_\varphi(\pi),i)\models \psi$, hence,
        in this case, the result holds.
    \item for all $n\geq 0 $, $\psi \in A_{j_n}$ and $\psi_2 \notin A_{j_n}$: we show that this case cannot hold. Hence, the result follows.
        Since the $\MAP$ $\nu$ is infinite, there is $k\geq 0$ such that for all positions $m\geq k$, $m\notin P_f$ iff position $m$ is visited by $\nu^i$.
    By Property~1, it follows that there is $k\geq 0$ such that for all positions $m\geq k$, $p_{\infty}\in A_m$ iff position $m$ is visited by $\nu^i$.
 Since $\pi$ is fair, it holds that for infinitely many $m\geq 0$,  $p_{\infty}\in A_m$ and  $\{\psi_2,\neg(\psi_1\Until \psi_2)\}\cap A_m\neq \emptyset$.
 Hence, for infinitely many $n\geq 0$, either  $\psi \notin A_{j_n}$ or $\psi_2 \in A_{j_n}$, which is a contradiction.
   \end{compactitem}
   \item $\psi=\NextClock^{\abs}_I\theta$: we have that $(\Proj_\varphi(\pi),i)\models \psi$ \emph{if and only if} there exists $j>i$ such that
   $j\in\Pos(\abs,\sigma,i)$, $(\Proj_\varphi(\pi),j)\models \theta$, $t_j-t_i\in I$, and for all $k\in \Pos(\abs,\sigma,i)$ such that $i<k<j$, $(\Proj_\varphi(\pi),k)\not\models \theta$
    \emph{if and only if} (from the induction hypothesis) there exists $j>i$ such that
   $j\in\Pos(\abs,\sigma,i)$, $\theta\in A_j$, $t_j-t_i\in I$, and for all $k\in \Pos(\abs,\sigma,i)$ such that $i<k<j$, $\theta\notin A_k$   \emph{if and only if} $\val_i^{\pi}(y^{\abs}_{\theta})\in I$
   \emph{if and only if}
   (from Property~4 in Definition~\ref{Def:HintikkaSequence}) $\NextClock^{\abs}_I\theta\in A_i$.
\end{itemize}
\end{proof}

\begin{lemma}\label{lemma:SecondHintikkaSequence}
For all $\ECNTL$ formulas $\varphi$, the mapping $\Proj_\varphi$ is a bijection between the set of fair  Hintikka sequences of $\varphi$ and the set of infinite timed words over $\Sigma_\Prop$.
\end{lemma}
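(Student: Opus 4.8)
The plan is to prove the two directions of bijectivity separately, using Lemma~\ref{lemma:FirstHintikkaSequence} for injectivity and an explicit semantic construction for surjectivity.

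For \emph{injectivity}, suppose $\pi=(A_0,t_0)(A_1,t_1)\ldots$ and $\pi'=(A'_0,t'_0)(A'_1,t'_1)\ldots$ are fair Hintikka sequences of $\varphi$ with $\Proj_\varphi(\pi)=\Proj_\varphi(\pi')=:w$. Then $t_i=t'_i$ for all $i$ and the untimed projections coincide, so both atom sequences induce the same underlying word $\sigma$ over $\Sigma_\Prop$. I would then argue that each atom is forced by $w$. By Lemma~\ref{lemma:FirstHintikkaSequence}(2), for every $\psi\in\Cl(\varphi)\setminus\{p_{\infty},\neg p_{\infty}\}$ we have $\psi\in A_i \Leftrightarrow (w,i)\models\psi \Leftrightarrow \psi\in A'_i$; and by Lemma~\ref{lemma:FirstHintikkaSequence}(1), $p_{\infty}\in A_i$ iff position $i$ has no caller whose matching return exists, which is a property of the untimed word $\sigma$ alone, so $p_{\infty}\in A_i \Leftrightarrow p_{\infty}\in A'_i$. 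Since atoms are propositionally consistent, membership of $\neg p_{\infty}$ is determined by that of $p_{\infty}$; hence $A_i=A'_i$ for all $i$ and $\pi=\pi'$.

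For \emph{surjectivity}, given an arbitrary infinite timed word $w=(\sigma,\tau)$ over $\Sigma_\Prop$, I would define a candidate $\pi=(A_0,\tau_0)(A_1,\tau_1)\ldots$ by taking at each position $i$ the canonical atom
\[
A_i \;:=\; \{\psi\in\Cl(\varphi)\setminus\{p_{\infty},\neg p_{\infty}\}\mid (w,i)\models\psi\}\;\cup\;X_i,
\]
where $X_i=\{p_{\infty}\}$ if $i$ has no caller with matching return in $\sigma$, and $X_i=\{\neg p_{\infty}\}$ otherwise. One first checks that each $A_i$ is a genuine $\varphi$-atom: propositional consistency is immediate since $(w,i)\models\cdot$ is a consistent truth assignment, the until/since membership conditions follow from the semantic fixpoint equivalences for the variants of $\Until$ and $\Since$, and the side conditions on $\Next^{\abs}\true$ and $\Prev^{\abs}\true$ hold because these fail exactly when $i$ has, respectively, no abstract successor and no abstract predecessor. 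Since $(w,i)\models p$ iff $p\in\sigma_i$ for $p\in\Prop$, we get $A_i\cap\Prop=\sigma_i$, so $\Proj_\varphi(\pi)=w$. It then remains to verify that $\pi$ is a \emph{fair} Hintikka sequence.

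The main work, and the expected obstacle, is this last verification. Conditions~1, 2 and~4 of Definition~\ref{Def:HintikkaSequence} follow by unwinding the semantics of the global, abstract and caller modalities and of the real-time operators over $w$ (initial consistency holds since no $\Prev^{\dir}$ formula can hold at position $0$). The delicate part is Condition~3 together with fairness: I would split according to whether position $i$ carries $\call$ and, when it does not, whether $i+1$ carries $\ret$, mirroring the three cases of Condition~3, and check in each that the $\AbsNextPrev$ requirement and the placement $X_i$ of $p_{\infty}$ agree both with the semantic truth values and with the structural meaning of $p_{\infty}$ (namely $p_{\infty}\notin A_i$ iff $i$ has a caller whose matching return exists). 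Finally, for fairness I would exploit that $\sigma$ has at most one infinite $\MAP$, along which $p_{\infty}$ holds at cofinitely many positions while $p_{\infty}$ fails at every position lying below a matched call; this yields clause~(i) and, combined with the semantics of abstract until, clauses~(ii) and~(iii). Aligning the semantic placement of $p_{\infty}$ with the recurrence demanded by the abstract-until fairness clause, exactly as in the argument for Lemma~\ref{lemma:FirstHintikkaSequence}, is the step requiring the most care.
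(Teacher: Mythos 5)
Your proposal is correct and follows essentially the same route as the paper: injectivity from Lemma~\ref{lemma:FirstHintikkaSequence}, and surjectivity via the canonical semantic atoms with $p_{\infty}$ placed according to whether the position has a caller with a matching return, followed by verification of the Hintikka and fairness conditions. The only point to tighten is fairness clause~(i): your justification covers only words possessing an infinite \MAP, whereas the paper also observes that otherwise $\sigma$ has infinitely many unmatched call positions or infinitely many unmatched return positions, at each of which $p_{\infty}$ holds by construction.
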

\begin{proof}
First, we show that $\Proj_\varphi$ is injective. Let $\pi$ and $\pi'$ two fair Hintikka sequences such that
$\Proj_\varphi(\pi)=\Proj_\varphi(\pi')=(\sigma,\tau)$. Hence, $\pi = (A_0,\tau_0)(A_1,\tau_1)\ldots$ and $\pi'=(A'_0,\tau_0)(A'_1,\tau_1)\ldots$. By Lemma~\ref{lemma:FirstHintikkaSequence},
for all $i\geq 0$, $A_i=A'_i$. Hence, $\pi=\pi'$, and the result follows.

It remains to show that  $\Proj_\varphi$ is surjective. Let $w=(\sigma,\tau)$ be an infinite timed word over $\Sigma_\Prop$.
For each $i\geq 0$, let $A_i$ be the subset of $\Cl(\varphi)$ defined as follows:
\begin{compactitem}
    \item for all $\psi\in \Cl(\varphi)\setminus \{p_{\infty},\neg p_{\infty}\}$,
$\psi\in A_i$ if $(w,i)\models \psi$, and $\neg\psi\in A_i$ otherwise.
\item $p_{\infty} \notin A_i$  iff $i$ has a caller whose matching return exists.
   \end{compactitem}
Let $\pi =(A_0,\tau_0)(A_1,\tau_1)\ldots$. By construction, for all $i\geq 0$, $A_i\cap \Prop =\sigma_i$. Thus, it suffices to show that $\pi$ is a fair Hintikka sequence of $\varphi$.  By the semantics of $\ECNTL$, it easily follows that  for all $i\geq 0$, $A_i$ is an atom of $\varphi$, and $\pi$ satisfies
Properties~1--3 in  the definition of Hintikka sequence of $\varphi$ (Definition~\ref{Def:HintikkaSequence}). Now, let us consider Property~4 in
Definition~\ref{Def:HintikkaSequence} concerning the real-time formulas in $\Cl(\varphi)$. Let us focus on real-time formulas of the form
$\NextClock^{\abs}_{I}\psi\in \Cl(\varphi)$  (the other cases being similar). We have that
$\NextClock^{\abs}_{I}\psi\in A_i$ \emph{if and only if} (by construction) $(w,i)\models \NextClock^{\abs}_{I}\psi$ \emph{if and only if} (by the semantics of $\ECNTL$) there exists $j>i$ such that
   $j\in\Pos(\abs,\sigma,i)$, $(w,j)\models \psi$, $t_j-t_i\in I$, and for all $k\in \Pos(\abs,\sigma,i)$ such that $i<k<j$, $(w,k)\not\models \psi$
   \emph{if and only if} (by construction) there exists $j>i$ such that
   $j\in\Pos(\abs,\sigma,i)$, $\psi\in A_j$, $t_j-t_i\in I$, and for all $k\in \Pos(\abs,\sigma,i)$ such that $i<k<j$, $\psi\notin A_k$
\emph{if and only if} (by definition of $\val_i^{\pi}$) $\val_i^{\pi}(y^{\abs}_{\psi})\in I$. Hence, Property~4
of Definition~\ref{Def:HintikkaSequence} holds, and $\pi$ is an Hintikka sequence of $\varphi$.

It remains to show that $\pi$ is fair. By construction and the semantics of $\ECNTL$, the fulfillment of the fairness constraint about the global until modalities easily follows
from standard arguments. Now, let consider the non-local constraint on the proposition $p_{\infty}$. We need to show that for infinitely many $i\geq 0$,
 $p_{\infty} \in A_i$. Since for an infinite word  over a pushdown alphabet, either there is an infinite \MAP, or there are an infinite number of unmatched call positions, or there are an infinite number of unmatched return positions,
 by construction, the result trivially follows.
 It remains to consider the fairness requirements on the abstract until modalities. Let
 $\psi_1\Until^{\abs} \psi_2\in\Cl(\varphi)$. We need to show that there are infinitely many $i\geq 0$ such that $p_{\infty}\in A_i$ and  $\{\psi_2,\neg(\psi_1\Until \psi_2)\}\cap A_i\neq \emptyset$.
 By the above observation, one of the following holds:
\begin{compactitem}
    \item either the set $H$ of unmatched call positions in $\sigma$ is infinite, or the set $K$ of unmatched return positions in $\sigma$ is infinite: let us consider the second case (the first one being similar). By construction, for all $i\in K$, $p_{\infty}\in A_i$. Moreover, if $(w,i)\models\psi_1\Until^{\abs} \psi_2$, then
     $(w,j)\models\psi_2$ for some position $j\geq i$ along the $\MAP$ associated with position $i$. Since
      $p_{\infty}\in A_i$, $p_{\infty}\in A_j$ as well.
      Hence, by construction, the result follows.
\item $\sigma$ has an infinite $\MAP$ $\nu$. By construction,  for all positions $i$ visited by $\nu$, $p_{\infty}\in A_i$. Thus, by the semantics of the abstract until modalities, it follows that
there are infinitely many positions $j$ along $\nu$ such that $\{\psi_2,\neg(\psi_1\Until \psi_2)\}\cap A_j\neq \emptyset$, and the result follows.
\end{compactitem}
\end{proof}

\section{Construction of the  generalized B\"{u}chi  \ECNA\ $\Au_\varphi$ in the proof of Theorem~\ref{theo:FromECNTLtoECNA}}\label{app:FromECNTLtoECNA}

Fix an \ECNTL\ formula $\varphi$. For each atom $A$ of $\varphi$, we denote by $\Phi_A$ the set of clock constraints $\theta$ such that the set of atomic constraints of $\theta$ has the form
\[
\displaystyle{\bigcup_{\PrevClock_I^{\dir}\psi\in A}\{x_{\psi}^{\dir}\in I\}\cup\bigcup_{\neg\PrevClock_I^{\dir}\psi\in A}\{x_{\psi}^{\dir}\in  \widehat{I}\}\cup
\bigcup_{\NextClock_I^{\dir}\psi\in A}\{y_{\psi}^{\dir}\in I\}\cup\bigcup_{\neg\NextClock_I^{\dir}\psi\in A}\{y_{\psi}^{\dir}\in \widehat{I}\}}
\]
where $\widehat{I}$ is either $\{\NULL\}$ or a \emph{maximal} interval over $\RealP$ disjunct from $I$.
 The  generalized B\"{u}chi  \ECNA\ $\Au_\varphi$ over $\Sigma_{\Cl(\varphi)}$ accepting the set of initialized fair Hintikka sequences of $\varphi$ is defined as:
$\Au_\varphi=\tpl{\Sigma_{\Cl(\varphi)}, Q,Q_{0},C_\varphi,Q\cup\{\bot\},\Delta,\mathcal{F}}$, where
\begin{itemize}
  \item $Q$ is the set of atoms of $\varphi$, and  $A_0\in Q_0$ iff $\varphi\in A_0$ and for all $\Prev^{\dir}\psi\in\Cl(\varphi)$, $\neg \Prev^{\dir}\psi\in A_0$.
 \item $C_\varphi$ is the set of event clocks associated with $\Cl(\varphi)$.
  \item $\mathcal{F}=\{F_{\infty}\} \cup \{F_{\psi_1 \Until  \psi_2}\mid \psi_1 \Until  \psi_2\in\Cl(\varphi)\}\cup \{F_{\psi_1 \Until^{\abs} \psi_2}\mid \psi_1 \Until^{\abs} \psi_2\in\Cl(\varphi)\}$, where
  \begin{itemize}
    \item $F_{\infty}$ consists of the atoms $A$ such that $p_{\infty}\in A$;
    \item for all $\psi_1\Until \psi_2\in\Cl(\varphi)$, $F_{\psi_1 \Until\psi_2}$ consists of the atoms $A$ s.t. $\{\psi_2,\neg(\psi_1\Until \psi_2)\}\cap A\neq \emptyset$;
    \item for all $\psi_1\Until^{\abs} \psi_2\in\Cl(\varphi)$, $F_{\psi_1 \Until^{\abs}\psi_2}$ consists of the atoms $A$ such that $p_{\infty}\in A$ and $\{\psi_2,\neg(\psi_1\Until^{\abs} \psi_2)\}\cap A\neq \emptyset$.
  \end{itemize}
\end{itemize}

Finally, the transition function $\Delta=\Delta_c\cup \Delta_r\cup \Delta_i$ is given by:
\begin{compactitem}
  \item \textbf{Call transitions}: $\Delta_c$ consists of the transitions $(A_c,A_c,\theta,A',A_c)$ such that $\theta\in\Phi_{A_c}$, $\call \in A_c$, $\NextPrev(A_c,A')$, and ($p_{\infty} \in A_c$ if $\Next^{\abs}\true\notin A_c$). Moreover, if $\ret\notin A'$, then  $\Caller(A')=\{\Prev^{\caller}\psi\in\Cl(\varphi)\mid\psi\in A_c\}$ and
   ($\Next^{\abs}\true\in A_c$ iff $p_{\infty}\notin A'$).
  \item \textbf{Pop transitions}: $\Delta_r$ consists of the transitions $(A_r,A_r,\theta,A_c^{\bot},A')$ such that $\theta\in\Phi_{A_r}$, $\ret \in A_r$, and $\NextPrev(A_r,A')$.
  Moreover:
  \begin{compactitem}
  \item if $ret \notin A'$, then   $\AbsNextPrev(A_r,A')$ and $(p_{\infty}\in A_r$ iff $p_{\infty}\in A')$;
  \item  if $ret \in A'$, then  $\Next^{\abs}\true\notin  A_r$. Moreover, if $\Prev^{\abs}\true\notin  A'$, then $p_{\infty}\in A_r\cap A'$, and $\Caller(A')=\emptyset$;
  \item if $A_c^{\bot}=\bot$, then $\Prev^{\abs}\true\notin A_{r}$; otherwise,
  $\AbsNextPrev(A_c^{\bot},A_{r})$ and $(p_{\infty}\in A_c^{\bot}$ iff $p_{\infty}\in A_{r})$ (note that in this case, since $\true\in A_c^{\bot}$, $\Prev^{\abs}\true\in A_{r}$).
\end{compactitem}
  \item \textbf{Internal transitions}: $\Delta_i$ consists of the transitions $(A_{i},A_i,\theta,A')$ s.t. $\theta\in\Phi_{A_i}$, $\intA \in A_i$, and $\NextPrev(A_i,A')$.
  Moreover:
  \begin{compactitem}
  \item if $ret \notin A'$, then   $\AbsNextPrev(A_i,A')$ and $(p_{\infty}\in A_i$ iff $p_{\infty}\in A')$;
  \item  if $ret \in A'$, then  $\Next^{\abs}\true\notin  A_i$. Moreover, if $\Prev^{\abs}\true\notin  A'$, then $p_{\infty}\in A_i\cap A'$, and $\Caller(A')=\emptyset$.
\end{compactitem}
\end{compactitem}\vspace{0.2cm}

The conditions on the set of initial states reflect the initialization requirement and Property~1 in Definition~\ref{Def:HintikkaSequence}, while the transition function reflects
the requirements associated with Properties~2--4 of  Definition~\ref{Def:HintikkaSequence}. Finally, the generalized B\"{u}chi condition corresponds to the fairness requirement.
The unique non-obvious feature is the requirement in Property~3 of  Definition~\ref{Def:HintikkaSequence} that along an Hintikka sequence $(A_0,t_0)(A_1,t_1)\ldots$, for all call positions $i\geq 0$,
$\Next^{\abs}\true\in A_i$ iff the matching return of $i$ along $\pi$ is defined.  We claim that this requirement is fulfilled by the timed words accepted by $\Au_\varphi$. We assume the contrary and derive a contradiction.
Then, there is an accepting run of  $\Au_\varphi$ over an infinite timed word $\pi=(A_0,t_0)(A_1,t_1)\ldots$ such that $A_i$ is an atom for all $i\geq 0$ and for some call position $i_c$, one of the following holds:
  \begin{enumerate}
  \item either the matching return of $i_c$ is defined and $\Next^{\abs}\true\notin A_{i_c}$,
   \item or $i_c$ is an unmatched call and $\Next^{\abs}\true\in A_{i_c}$.
\end{enumerate}
Let us first examine the first case. Let $i_r$ be the matching return of $i_c$ along $\pi$. The transition function of  $\Au_\varphi$ ensures that
$\AbsNextPrev(A_{i_c},A_{i_r})$. Hence, since $\true\in A_{i_r}$, it holds that $\Next^{\abs}\true\in A_{i_c}$, which is a contradiction. Thus, the first case cannot hold.
Now, let us consider the second case. Since $\Next^{\abs}\true\in A_{i_c}$ and $i_c$ is an unmatched call, the transition function ensures that $\neg p_{\infty}\in A_j$ for all $j > i_c$. On the other hand, the first component $F_{\infty}$ of the generalized B\"{u}chi acceptance condition guarantees that for infinitely many $i$, $p_{\infty}\in A_i$. Thus, we have a contradiction and the result follows.

Hence, $A_\varphi$ accepts the set of initialized fair Hintikka sequences of $\varphi$. Note that $A_\varphi$
has $2^{O(|\varphi|)}$ states and stack symbols, a set of constants $\Const_\varphi$, and $O(|\varphi|)$ event clocks.

\section{Proof of Lemma~\ref{lemma:globalEquivNMTL-ECNTL}}\label{APP:globalEquivNMTL-ECNTL}

\setcounter{aux}{\value{lemma}}
\setcounter{lemma}{\value{lemma-globalEquivNMTL-ECNTL}}

Recall that  $\INTS$ is the set of \emph{nonsingular} intervals $J$ in $\RealP$ with endpoints in $\Nat\cup\{\infty\}$ such that either $J$ is unbounded, or $J$
  is left-closed with left endpoint $0$.   For a generic interval $I$ with left endpoint $c_L\in \Nat$ and right endpoint $c_R\in \Nat\cup \{\infty\}$, we denote by
  $L(I)$ the unbounded interval having $c_L$ as left endpoint and such that $c_L\in L(I)$ iff $c_L\in I$, and by  $R(I)$ the left-closed interval
  having as endpoints $0$ and $c_R$ and such that $c_R\in R(I)$ iff $c_R\in I$. Note that $L(I),R(I)\in\INTS$.

\begin{lemma}  There exist effective linear-time translations   from \ECNTL\ into \NMITLS, and vice versa.
\end{lemma}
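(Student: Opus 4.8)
The plan is to give, in each direction, a purely compositional (homomorphic) translation that replaces every temporal operator by a fixed pattern of operators of the target logic, and to argue that each equivalence used is a ``single-timeline'' fact which relativizes uniformly to the three position sequences $\Pos(\Global,\sigma,i)$, $\Pos(\abs,\sigma,i)$ and $\Pos(\caller,\sigma,i)$. The global case is exactly the equivalence $\MITLS\equiv\ECTL$ of~\cite{RaskinS99}; the only new content is that the abstract and caller directions are obtained by decorating the same patterns with the superscript $\abs$ or $\caller$, and that these patterns stay correct because each of $\Pos(\abs,\sigma,i)$ and $\Pos(\caller,\sigma,i)$ is a linearly ordered, timestamp-monotone subsequence on which ``$\dir$-successor'', ``first $\dir$-future occurrence'' and ``$\dir$-time-distance'' behave exactly as their global analogues on the full word. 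Since $|\varphi|$ is measured by the number of distinct subformulas and each operator is replaced by a bounded pattern reusing its immediate subformulas while copying no new constants, both translations run in linear time and preserve $\Const_\varphi$.

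\textbf{From \ECNTL\ to \NMITLS.} I would first translate the untimed modalities: $\Next^{\dir}\varphi\equiv \bot\,\StrictUntil^{\dir}_{[0,\infty)}\varphi$ and $\Prev^{\dir'}\varphi\equiv \bot\,\StrictSince^{\dir'}_{[0,\infty)}\varphi$, using that a strict until whose left argument is $\bot$ forces the witness to be the immediate $\dir$-successor, together with $\varphi_1\Until^{\dir}\varphi_2\equiv \varphi_2\vee(\varphi_1\wedge(\varphi_1\,\StrictUntil^{\dir}_{[0,\infty)}\varphi_2))$ and its mirror for $\Since^{\dir'}$. For the event-clock operators I would use the interval decomposition recalled in this appendix: since $\NextClock^{\dir}_I\varphi$ speaks of the \emph{unique} first $\dir$-future position where $\varphi$ holds, that position's distance lies in $I=L(I)\cap R(I)$ iff it lies in both $L(I)$ and $R(I)$, whence
\[
\NextClock^{\dir}_I\varphi\ \equiv\ \bigl((\neg\varphi)\,\StrictUntil^{\dir}_{L(I)}\varphi\bigr)\wedge\bigl((\neg\varphi)\,\StrictUntil^{\dir}_{R(I)}\varphi\bigr),
\]
and symmetrically $\PrevClock^{\dir'}_I\varphi\equiv((\neg\varphi)\,\StrictSince^{\dir'}_{L(I)}\varphi)\wedge((\neg\varphi)\,\StrictSince^{\dir'}_{R(I)}\varphi)$. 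The prefix $\neg\varphi$ pins both conjuncts to the same first occurrence, and $L(I),R(I)\in\INTS$, so the outputs are legal \NMITLS\ formulas.

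\textbf{From \NMITLS\ to \ECNTL.} Here I would translate $\StrictUntil^{\dir}_J$ (and, symmetrically, $\StrictSince^{\dir'}_J$) by cases on the shape of $J\in\INTS$. For $J=[0,\infty)$ the operator is the untimed strict until, expressible as $\Next^{\dir}(\varphi_1\Until^{\dir}\varphi_2)$. For an upper-bounded $J$ (left-closed at $0$) I would use that, once $\varphi_1\StrictUntil^{\dir}_{[0,\infty)}\varphi_2$ holds, the minimal witness is necessarily the first $\dir$-future $\varphi_2$-position (otherwise $\varphi_1$ would fail before it and no witness could exist), so that
\[
\varphi_1\,\StrictUntil^{\dir}_J\varphi_2\ \equiv\ (\varphi_1\,\StrictUntil^{\dir}_{[0,\infty)}\varphi_2)\wedge \NextClock^{\dir}_J\varphi_2,
\]
the second conjunct being an \ECNTL\ event-clock formula with an \emph{arbitrary} interval. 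For a lower-bounded $J$ the first witness need not realise the bound, so I would follow the lower-bound encoding of~\cite{RaskinS99} relativized to $\dir$: skip the initial part of the $\dir$-time axis by a bounded $\dir$-always and only afterwards search for $\varphi_2$ along the $\dir$-timeline while maintaining $\varphi_1$, everything built from the untimed modalities and the arbitrary-interval operators $\NextClock^{\dir}_I,\PrevClock^{\dir'}_I$ of \ECNTL.

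\textbf{Main obstacle.} The delicate point is precisely this lower-bounded until/since, where the single-timeline equivalence is the nontrivial ingredient of~\cite{RaskinS99}; the remaining task is to verify that it, like the first-occurrence arguments above, depends only on the linear order and the timestamps of the relevant positions. For this I would record that $\Pos(\abs,\sigma,i)$ is constant along the \MAP\ visiting $i$ (so that nested abstract clauses stay on one abstract path) and that $\Pos(\caller,\sigma,i)$ is a nested family of suffixes along the caller path (so that the backward since/previous patterns behave as on one decreasing timeline). Granting these observations, every clause above is correct verbatim in each admissible direction, and composing the clauses yields the two effective linear-time translations.
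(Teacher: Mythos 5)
Your proposal is correct and follows essentially the same route as the paper's proof: both directions are handled by homomorphic clause-by-clause rewritings, using $\bot\,\StrictUntil^{\dir}_{\geq 0}$ for next, the first-occurrence characterization $\neg\varphi\,\StrictUntil^{\dir}_J\varphi$ together with the $L(I)/R(I)$ decomposition for the event-clock operators, the conjunction of the untimed strict until with $\NextClock^{\dir}_J\varphi_2$ for upper-bounded intervals, and the bounded-always-then-until pattern (the paper's $\StrictAlways^{\dir}_{\succ^{-1}c}(\varphi_1\wedge\Next^{\dir}(\varphi_1\Until^{\dir}\varphi_2))\wedge\Next^{\dir}(\varphi_1\Until^{\dir}\varphi_2)$) for lower-bounded ones. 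You correctly isolate the lower-bound case as the only nontrivial equivalence and the relativization to the abstract and caller position sequences as the only new verification beyond the \MITLS/\ECTL\ case, which is exactly what the paper checks.
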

 \setcounter{lemma}{\value{aux}}
 \begin{proof}
 Given two formulas $\varphi_1$ and $\varphi_2$ in \NMTL\ + \ECNTL\ (i.e., the extension of \NMTL\ with the temporal modalities of \ECNTL),  $\varphi_1$ and
$\varphi_2$ are \emph{globally equivalent}, denoted $\varphi_1\equiv \varphi_2$, if for each  timed word $w$ over $\Sigma_\Prop$ and $0\leq i< |w|$, $(w,i)\models\varphi_1$ iff $(w,i)\models\varphi_2$.

We first show that \ECNTL\ is subsumed by \NMITLS.
For this, we consider the following global equivalences, which easily follow from the semantics of
\ECNTL\ and \NMITLS, and  allow  to express the temporal modalities
of \ECNTL\ in terms of the temporal modalities of \NMITLS.\vspace{0.2cm}

\noindent\textbf{Claim 1:} for all formulas $\varphi_1$ and $\varphi_2$  in \NMTL\ + \ECNTL, the following holds, where
$\dir\in \{\Global,\abs\}$, $\dir'\in \{\Global,\abs,\caller\}$, and $\sim\in\{<,\leq,>,\geq\}$:
\begin{compactitem}
\item $\Next^{\dir}\varphi_1\equiv \bot\, \StrictUntil^{\dir}_{\geq 0}\,\varphi_1$ and $\Prev^{\dir'}\varphi_1\equiv \bot \,\StrictSince^{\dir'}_{\geq 0}\,\varphi_1$
\item $\varphi_1 \Until^{\dir}\varphi_2 \equiv \varphi_2\vee  (\varphi_1\wedge (\varphi_1 \StrictUntil^{\dir}_{\geq 0}\varphi_2))$ and
$\varphi_1 \Since^{\dir'}\varphi_2 \equiv \varphi_2\vee  (\varphi_1\wedge (\varphi_1 \StrictSince^{\dir'}_{\geq 0}\varphi_2))$
   \item $\NextClock^{\dir}_{\sim c}\varphi_1 \equiv \neg\varphi_1 \StrictUntil^{\dir}_{\sim c}\,\varphi_1$ and $\PrevClock^{\dir'}_{\sim c}\varphi_1 \equiv \neg\varphi_1 \StrictSince^{\dir'}_{\sim c}\,\varphi_1$\vspace{0.1cm}
  \item $\NextClock^{\dir}_{I}\varphi_1 \equiv \NextClock^{\dir}_{L(I)}\varphi_1 \wedge \NextClock^{\dir}_{R(I)}\varphi_1$ and $\PrevClock^{\dir'}_{I}\varphi_1 \equiv \PrevClock^{\dir'}_{L(I)}\varphi_1 \wedge \PrevClock^{\dir'}_{R(I)}\varphi_1$
\end{compactitem}
\vspace{0.2cm}

Vice versa, for the expressibility of  \NMITLS\ into \ECNTL, we consider the following global equivalences which allow  to express the temporal modalities
of \NMITLS\ in terms of the temporal modalities of \ECNTL.\vspace{0.2cm}

\noindent\textbf{Claim 2:} for all formulas $\varphi_1$ and $\varphi_2$  in \NMTL\ + \ECNTL, the following holds, where $c\in\Nat$,
$\dir\in \{\Global,\abs\}$, $\dir'\in \{\Global,\abs,\caller\}$, $\prec\in\{<,\leq\}$, $\succ\in\{>,\geq\}$, $\geq^{-1}$ is $ < $, and $>^{-1}$ is $\leq $:
\begin{compactenum}
  \item $\varphi_1 \StrictUntil^{\dir}_{\prec c}\varphi_2\equiv \Next^{\dir}(\varphi_1 \Until^{\dir}\varphi_2)\wedge \NextClock^{\dir}_{\prec c} \varphi_2$
   \item $\varphi_1 \StrictSince^{\dir'}_{\prec c}\varphi_2\equiv \Prev^{\dir'}(\varphi_1 \Since^{\dir'}\varphi_2) \wedge \PrevClock^{\dir'}_{\prec c} \varphi_2$
  \item $\varphi_1 \StrictUntil^{\dir}_{\succ c}\varphi_2\equiv  \StrictAlways^{\dir}_{ \succ^{-1} c} (\varphi_1 \wedge\Next^{\dir}(\varphi_1 \Until^{\dir}\varphi_2)) \wedge
   \Next^{\dir}(\varphi_1 \Until^{\dir}\varphi_2) $
   \item $\varphi_1 \StrictSince^{\dir'}_{\succ c}\varphi_2\equiv  \StrictPastAlways^{\dir'}_{ \succ^{-1} c} (\varphi_1 \wedge\Prev^{\dir'}(\varphi_1 \Since^{\dir'}\varphi_2)) \wedge
   \Prev^{\dir'}(\varphi_1 \Since^{\dir'}\varphi_2) $
  \end{compactenum}\vspace{0.2cm}

\noindent {\textbf{Proof of Claim 2:}} the global equivalences in items 1 and 2 easily follow from the semantics of \NMITLS\ and \ECNTL. Now, let us consider
items 3 and 4. We focus on the abstract until modalities and assume that $\succ$ is $>$ (the other cases being similar). Let $w=(\sigma,\tau)$ be a  timed word over $\Sigma_\Prop$ and $0\leq i<|w|$.
We need to show that $(w,i)\models \varphi_1 \StrictUntil^{\abs}_{> c}\varphi_2$ $\Leftrightarrow$ $(w,i)\models \theta$, where $\theta=\StrictAlways^{\abs}_{ \leq c} (\varphi_1 \wedge\Next^{\abs}(\varphi_1 \Until^{\abs}\varphi_2)) \wedge
   \Next^{\abs}(\varphi_1 \Until^{\abs}\varphi_2) $. We consider the left implication $ \Leftarrow$ (the right implication $\Rightarrow$ being simpler). Assume that
   $(w,i)\models \theta$.
    Let
   $P_{\leq c}$ be the set of positions $j\in \Pos(\sigma,\abs,i)$  such that $j>i$ and $\tau_j -\tau_i\leq c$. There are two cases:
 \begin{compactitem}
  \item $P_{\leq c}$ is empty: since $(w,i)\models \Next^{\abs}(\varphi_1 \Until^{\abs}\varphi_2)$, there is
    $j\in \Pos(\sigma,\abs,i)$ such that $j>i$, $(w,j)\models \varphi_2$ and $(w,h)\models \varphi_1$ for all $h\in \Pos(\sigma,\abs,i)\cap [i+1,j-1]$.
    Since $P_{\leq c}=\emptyset$, we have that $\tau_j-\tau_i>c$. Hence, $(w,i)\models \varphi_1 \StrictUntil^{\abs}_{> c}\varphi_2$.
  \item $P_{\leq c}$ is \emph{not} empty: let $j$ be the greatest position of $P_{\leq c}$ (note that such a position exists).
  Since $(w,i)\models \StrictAlways^{\abs}_{ \leq c} (\varphi_1 \wedge\Next^{\abs}(\varphi_1 \Until^{\abs}\varphi_2))$, we have that
  $(w,h)\models \varphi_1$ for all $h\in \Pos(\sigma,\abs,i)\cap [i+1,j]$ and there exists $\ell > j$ such that
  $\ell\in \Pos(\sigma,\abs,i)$, $(w,\ell)\models \varphi_2$ and $(w,k)\models \varphi_1$ for all $h\in \Pos(\sigma,\abs,i)\cap [j+1,\ell-1]$.
  Since $\ell\notin P_{\leq c}$, we have that $\tau_\ell -\tau_i>c$. It follows that $(w,i)\models \varphi_1 \StrictUntil^{\abs}_{> c}\varphi_2$, proving the assertion.\qed
 \end{compactitem}\vspace{0.2cm}

 Claims~1 and~2 provide linear-time translations (homomorphic with respect to Boolean connectives and atomic propositions) from \ECNTL\ into \NMITLS, and vice versa, which preserve global equivalence. Hence,
 the result follows.
 \end{proof} 

\end{document}